\documentclass[11pt,letterpaper]{article}
\usepackage[margin=1in]{geometry}

\usepackage[utf8]{inputenc}
\usepackage{amsmath, amsfonts, amssymb, amsthm, dsfont}
\usepackage{babel}
\usepackage{parskip}
\usepackage{enumerate}
\usepackage{enumitem}
\usepackage[normalem]{ulem} 
\usepackage{graphicx}
\usepackage{mathtools}
\usepackage{xspace}
\usepackage{comment}
\usepackage{bm}
\usepackage{soul}
\usepackage{tcolorbox}
\usepackage{aliascnt}

\usepackage[usenames,dvipsnames]{xcolor}
\usepackage[backref=page,linktocpage=true,breaklinks,colorlinks,citecolor=PineGreen,linkcolor=Violet]{hyperref}
\usepackage{cleveref}

\usepackage{algorithm}
\usepackage[noend]{algpseudocode}

\newcommand{\R}{\mathbb{R}}

\newcommand{\E}{\mathbb{E}}

\newcommand{\poly}{\mathrm{poly}}

\newcommand{\Prb}{\mathbb{P}}
\newcommand{\1}{\mathbf{1}}
\newcommand{\eps}{\varepsilon}
\newcommand{\OPT}{\mathrm{OPT}}
\newcommand{\ALG}{\mathrm{ALG}}
\newcommand{\Unif}{\mathrm{Unif}}
\newcommand{\CutReg}{\mathrm{CutReg}}
\newcommand{\calF}{\mathcal F}
\newcommand{\argmax}{\operatorname*{argmax}}

\newtheorem{theorem}{Theorem}[section]
\newaliascnt{lemma}{theorem}
\newtheorem{lemma}[lemma]{Lemma}
\aliascntresetthe{lemma}

\newaliascnt{proposition}{theorem}

\aliascntresetthe{proposition}

\newaliascnt{corollary}{theorem}

\aliascntresetthe{corollary}

\newaliascnt{definition}{theorem}
\newtheorem{definition}[definition]{Definition}
\aliascntresetthe{definition}

\newaliascnt{remark}{theorem}

\aliascntresetthe{remark}

\newaliascnt{observation}{theorem}
\newtheorem{observation}[observation]{Observation}
\aliascntresetthe{observation}

\newcommand{\vertm}[1]{{\left\vert\kern-0.25ex\left\vert\kern-0.25ex\left\vert #1     \right\vert\kern-0.25ex\right\vert\kern-0.25ex\right\vert}}

\newcounter{note}[section]

\title{Improved Combinatorial Auctions with and without Samples 
}

\author{ Aviad Rubinstein\thanks{(aviad@cs.stanford.edu) Department of Computer Science, Stanford University. Supported in part by David and Lucile Packard Fellowship.} \and Sahil Singla\thanks{
        (ssingla@gatech.edu)
        School of Computer Science,
        Georgia Tech.
        Supported in part by NSF award  CCF-2440113.
        }}

\newcommand{\Welfare}{\ensuremath{\mathsf{Welfare}}}

\newcommand{\coef}{\operatorname{coef}_\tau}

\begin{document}

\maketitle

\vspace{-.25cm}
\begin{abstract}
\vspace{-.25cm}

We give new mechanisms for combinatorial auctions in the offline, secretary, and prophet-with-samples models. Specifically, we obtain the following guarantees for social welfare maximization with $m$ items and $n$ XOS buyers:
\smallskip
\begin{itemize}
    \item In the prophet-with-samples version, buyers arrive in a fixed known adversarial order
    and draw their values from independent distributions. We give a universally truthful, polynomial-time \((2+\eps)\)-approximation mechanism  using $O(\log\!\log m)$ samples from each distribution. The approximation factor is tight even for a single item and without the incentive constraints, due to the classic prophet inequality. Compared to the previous state-of-the-art with the same guarantees [D\"utting--Kesselheim--Lucier--Reiffenh\"auser--Singla (FOCS 2024)], our running time is exponentially faster, and our sample complexity is {\em doubly}-exponentially lower, resolving two explicit open problems by D\"utting et al.
    \smallskip
    \item Using the same techniques, we also obtain improved mechanisms in the  secretary and offline versions, where the values are worst-case, and buyers are processed in either random order (secretary) or without any timing restrictions (offline). 
    After a long line of works, the state of the art in both models is an \(O((\log\!\log  m)^2)\)-approximation due to [Assadi--Kesselheim--Singla (SODA 2021)], who explicitly ask whether it can be improved to \(O(\log\!\log  m)\). We indeed obtain an \(O(\log\!\log  m)\)-approximate universally truthful mechanism for the secretary and offline models.
\end{itemize}

\smallskip
The main technical insight is that the median-prices approach of D\"utting et al. can be viewed as a Nash equilibrium of a certain game, but we can get exponentially improved sample complexity by instead computing an approximate correlated equilibrium using swap regret minimization. To obtain the doubly-logarithmic sample bound, we design an application-specific notion of regret that can be minimized even more efficiently.

\end{abstract}

{\footnotesize
\setcounter{tocdepth}{1}
\tableofcontents
}

\clearpage

\section{Introduction}

A fundamental problem in economics and theoretical computer science is to allocate heterogeneous
items to strategic buyers.  There are \(m\) items and \(n\) buyers, and each buyer has a private
valuation function over bundles of items.  The goal is to allocate the items so as to maximize
social welfare, the total value of the allocated bundles.  The difficulty is that the buyers'
valuations are private, high-dimensional objects, and the allocation rule should be
\emph{truthful}: buyers should have no incentive to misreport their values.

The Vickrey-Clarke-Groves mechanism is truthful and welfare-maximizing
\cite{Vickrey61,Clarke71,Groves73}, but implementing it requires solving the welfare maximization
problem exactly and obtaining enough information from the buyers about their valuations.  For
combinatorial valuations this is generally computationally and communicationally intractable;
indeed, understanding the power and limitations of truthful approximation mechanisms for
combinatorial auctions has been a central theme in algorithmic mechanism design since its
inception \cite{NisanR99,Nisan00,NisanS06}.  On the algorithmic side, for large complement-free
classes such as submodular and XOS valuations, constant-factor welfare approximations are known in
standard oracle models \cite{DobzinskiNS05,Vondrak08,Feige09,FeigeV10}.  However, these algorithms
are not, in general, truthful mechanisms:  a randomized mechanism is \emph{universally truthful} if, for every fixed realization of its internal randomness, truthful reporting is a dominant strategy for every buyer, regardless of the other buyers' reports.
The motivating question is whether one can obtain the best of both worlds: simple truthful mechanisms with strong welfare guarantees and polynomial running time.

The simplest such mechanisms are \emph{sequential posted prices}. 
The seller posts one price per item, buyers arrive one by one, and each buyer chooses her favorite remaining bundle at the posted prices. 
Since every buyer faces posted prices independent of her own report, the mechanism is truthful.  Posted prices are also inherently online: buyers need not be simultaneously present, and the allocation is implemented by demand maximization against the currently available set.

\paragraph{Stochastic buyers and samples.}


A beautiful result of Feldman, Gravin, and Lucier~\cite{FeldmanGL15} shows that, when buyers'
valuations are drawn from independent distributions, there exist anonymous item prices that guarantee half of
the expected optimal welfare for XOS buyers.

The main limitation of \cite{FeldmanGL15}, and its generalization in \cite{DuettingFKL17}, is that it requires 
complete knowledge of the buyers' distributions. 
They also give algorithmic versions: in $\poly(m)$ samples 
they can obtain an {\em additive} approximation guarantee. Intuitively, additive approximation is a natural barrier to their approach which is based on mean-based prices that are sensitive to ``tail'' high prices that are hard to estimate from samples.

D\"utting, Kesselheim, Lucier,
Reiffenh\"auser, and Singla~\cite{DKLRS-FOCS24} obtain the first tight multiplicative \((2+\eps)\)-approximation from samples.  Their mechanism is based on a high-dimensional analogue of median prices: roughly, a price vector should make every item sell with probability exactly \(1/2\).  
The existence of such prices is proved through a fixed-point argument, and their truthful posted-price mechanism uses polynomially many samples.  
Their work leaves two explicit questions relevant here: can we learn a truthful constant-factor posted-price mechanism from polylogarithmically many samples, and whether the XOS construction can be implemented in polynomial time?

\paragraph{Beyond distributional information.}
A second line of work studies truthful combinatorial auctions without any distributional prior over the buyers.  Here the goal is to give a truthful approximation for every fixed valuation profile. This prior-free setting has been studied extensively, both offline and online, through
maximal-in-range and randomized mechanisms, convex-rounding based mechanisms, and posted-price or random-order mechanisms \cite{DobzinskiNS06,Dobzinski07,LaviS05,DughmiR10,DughmiRY11,Dobzinski16,KrystaV12}.

A particularly robust way to make such guarantees online is the random-order model: the valuation profile is fixed adversarially, but buyers arrive in a uniformly random order.  Any such mechanism also gives an offline truthful mechanism, since an offline mechanism can randomly order the buyers and simulate the online rule; the advantage is that the same guarantee remains valid when buyers are not simultaneously present. 

In the offline model, the approximation ratio has improved over a long line of work spanning over two decades, starting from $O(\sqrt{m})$~\cite{DobzinskiNS05}%
\footnote{This is an offline, deterministic mechanism (as opposed to random-order arrival). For deterministic mechanisms, $O(\sqrt{m})$-approximation was the state of the art for nearly two decades until it was recently improved to $O(\sqrt{m/\log m})$~\cite{QW-STOC24}.}, followed by randomized guarantees of $O(\log^2 m)$~\cite{DobzinskiNS06}, $O(\log m \log\!\log m)$~\cite{Dobzinski07}, $O(\log m)$~\cite{KrystaV12}, $O(\sqrt{\log m})$~\cite{Dobzinski16}, $O((\log\!\log m)^3)$~\cite{AS-FOCS19}, and finally $O((\log\!\log m)^2)$~\cite{AKS-SODA21}. 
The latter work asks whether the approximation ratio can be improved to $O(\log\!\log m)$.


\subsection{Our Results}

We state the results here; the precise model and technical overview appear in \Cref{sec:overview}.  Throughout, $\OPT$ denotes the optimal welfare.  In the stochastic results, the expectation is over the valuation profile and the mechanism's randomness. In the random-order result, the valuation profile is fixed and the expectation is over the random arrival order and the mechanism's randomness.  A sample means one independent valuation profile drawn from the product
distribution.

Our main theorem is a sample-efficient version of  Feldman-Gravin-Lucier posted-price guarantee.

\begin{theorem}[Main theorem]\label{thm:intro-main}
For independent XOS buyers in any fixed known arrival order and every \(\eps\in(0,1)\), there is a
polynomial-time universally truthful randomized item posted-price mechanism using
\[
  O\!\left(\eps^{-2}\log\frac{\log m}{\eps}\right)
\]
samples and satisfying
  $\E[\ALG]\ge (1/2-\eps)\E[\OPT]$.
For constant \(\eps\), the mechanism is \((2+O(\eps))\)-competitive with
\(O(\log\!\log  m)\) sample complexity.
\end{theorem}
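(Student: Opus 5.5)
The plan is to replace the median-price fixed point of \cite{DKLRS-FOCS24} with an approximate correlated equilibrium, computed by regret minimization over samples. That equilibrium will be a distribution over price vectors, and we will sample one vector from it at random; this randomization is what makes the mechanism universally truthful rather than deterministic.

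\textbf{The target.} Let $\tau$ denote a price vector, and let $q_j(\tau)$ be the probability that item $j$ sells when the buyers run sequential posted prices at $\tau$. For XOS buyers, the Feldman--Gravin--Lucier style accounting writes the welfare of posted prices as revenue plus buyer utility. Suppose $\tau$ is drawn from a distribution $D$ over a discretized price grid, and suppose every item satisfies the following balance condition: when the price of $j$ is high relative to its ``fair share'' of the optimum, $j$ sells with probability at least roughly $1/2$, and when the price is low, $j$ goes unsold with probability at least roughly $1/2$. Then a per-item charging argument should give $\E_{\tau\sim D}[\ALG(\tau)]\ge(1/2-\eps)\E[\OPT]$.

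\textbf{The game and the equilibrium.} I would encode this condition as a game with one player per item. Player $j$ chooses a price level $\tau_j$ from a geometric grid of $O(\log(m)/\eps)$ levels. The loss is $\ell_j=\mathbf{1}[\text{sold}]\cdot(\text{penalty for low price})+\mathbf{1}[\text{unsold}]\cdot(\text{penalty for high price})$, so a median price is a best response. The grid can be restricted to a window of width $\poly(m/\eps)$ around a scale estimated from one sample; prices outside the window contribute only $O(\eps)\OPT$. A Nash equilibrium of this game recovers the fixed point of \cite{DKLRS-FOCS24}. A correlated equilibrium suffices instead, because the charging argument is linear in the joint distribution and only needs each item's deviation (swap) condition to hold in expectation. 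We compute it with one swap-regret minimizer per item, using the Blum--Mansour reduction or the recent $\log$-dependence swap-regret algorithms. Each round draws a fresh sample profile, simulates the posted-price process on it in polynomial time (XOS demand queries), and feeds back losses. The result is an $\eps$-approximate correlated equilibrium.

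\textbf{Sample complexity and the main obstacle.} Black-box swap regret with $K$ actions needs about $K\log K/\eps^2$ rounds, and $K=\log(m)/\eps$ here. That gives only $\mathrm{polylog}\, m$ samples, while the theorem demands $O(\eps^{-2}\log(\log m/\eps))$ rounds. The main obstacle is therefore to design an application-specific regret notion whose guarantee costs only $\log K$. The loss has monotone threshold structure: the penalty depends only on whether the price is above or below the current threshold. So the charging argument should only need regret against a family of ``threshold'' deviations, of size $O(K)$, rather than all $K^K$ swap functions. I would first try running Hedge over the $K$ levels and prove that external regret against each threshold-shift deviation suffices, via a union bound costing only $\log K$. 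If that fails, I would use a binary-search tree over levels with one two-action learner per node, at depth $\log K$, and bound the total regret by a telescoping sum along the root-to-leaf path.

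Once the regret notion is settled, the remaining step is routine. Convert the averaged regret into the approximate balance condition with Azuma/Hoeffding concentration over the $T$ rounds, plug it into the welfare accounting, and rescale $\eps$. The mechanism outputs a uniformly random round's price vector; that choice is independent of the buyers' reports, so the mechanism is universally truthful.
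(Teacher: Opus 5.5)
Your architecture matches the paper's at a high level: relax median prices to a correlated-equilibrium-type object, and learn it with a regret minimizer whose comparators are $O(K)$ threshold deviations rather than all swaps. But two load-bearing ingredients are missing, and without the first the claimed guarantee is false.

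\textbf{Smoothing.} With unperturbed prices on a discrete grid, no-deviation conditions do not certify the sale probability at the level actually played. Take the natural instance of your loss, the quantile loss $\ell(r)=\E|r-C|$ with $C$ the critical level. The gain of moving $r\to r+1$ is $2\Pr[C\ge r+1]-1$, which measures the \emph{neighbouring} level, and at a jump of the sale curve the two levels' sale probabilities differ by up to $1$. Concretely, take one item, a first buyer with deterministic value $v_1\in(p_r,p_{r+1})$, and a second buyer with value $H\gg v_1$ with probability $\frac12$ (else $0$). Then $C$ is $r$ or the level $h$ of $H$, each with probability $\frac12$, and every level in $[r,h]$ minimizes the loss. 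So the point mass on $r$ has zero swap regret and is an exact equilibrium of your game. Yet it sells to the first buyer with probability $1$, giving welfare $v_1$ while $\E[\OPT]\ge H/2$.

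The paper instead posts $p_re^{U}$ with $U\sim\Unif[0,\tau]$ and $\tau=\Theta(\eps)$, on a grid of ratio $1+\beta$ with $\beta=\Theta(\eps^2)$. This grid is finer than your $1+\Theta(\eps)$ grid, which only changes $\log N$. Adjacent smoothed sale probabilities then satisfy $y(r)-y(r+1)\le\alpha=\log(1+\beta)/\tau=O(\eps)$, and this $\alpha$ is exactly the slack in converting cut gains into $\widehat\Delta\le\alpha+\CutReg_T/(2T)$.

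Relatedly, your balance condition relative to an unknown fair share must be made precise as \emph{signed cut calibration}. The one-level certificate satisfies $f_\tau(z,q)\ge\frac12+\theta(z)(q-\frac12)-O(\tau)$ with $\theta$ nondecreasing from $-1$ to $1$, and monotone coefficient vectors are convex combinations of cut step functions. The end levels get coefficients $\mp1$ from $mL=O(\eps\E[\OPT])$ and $\Pr[\OPT\ge B]\le\frac12$. This boundary condition, not ``high prices contribute little'', is the role of the window's top.

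\textbf{The learner.} You leave the crux open, and your first route fails. Hedge over the $K$ levels controls only regret against constant maps. That does not bound the non-constant deviations $\phi_b$ (raise every level $\le b$ one step, lower every level $>b$ one step) that cut calibration needs.

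The paper runs multiplicative weights over the $N-1$ maps $\phi_b$ themselves. Each round it plays a stationary distribution $\pi_t$ of the Markov chain that sends $r$ to $\phi_b(r)$ with $b\sim\nu_t$, so the learner's own expected gain vanishes and the cut regret is $O(\sqrt{T\log N})$. This needs one-round gains in $[-1,1]$. They are, because each $\phi_b$ moves one step and the hop loss $\E_R|r-C_R|$ is $1$-Lipschitz along the path, even though its range is $N$. That range is also why the recent large-action swap-regret algorithms do not apply off the shelf.

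Full-information feedback on every level comes from own-price monotonicity of the sold indicator: there is one critical price per item and sample, found by binary search. The other items' prices are drawn independently from their current learner distributions, so the fresh conditional sale curves coincide with deployment via a common uniformly random round.

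Finally, a scale estimated from one sample is valid only with constant probability. To lose only $\eps$ you need $O(\log(1/\eps))$ preprocessing samples (DKW on $\max_{i,j}v_i(\{j\})$), plus a separate zero-scale case.
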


Thus, for constant accuracy, the sample complexity of truthful posted pricing drops from polynomial in the market size~\cite{DKLRS-FOCS24} to doubly logarithmic in the number of items. 
 

The same ideas 
also give a prior-free random-order guarantee.  Here the valuation profile is fixed adversarially, and the only randomness comes from the arrival order and the mechanism.

\begin{theorem}[Random order/offline]\label{cor:intro-random-order}
For every fixed XOS valuation profile, there is a universally truthful online mechanism in
uniformly random arrival order, mixing an item posted-price branch with a grand-bundle posted-price
branch, whose expected welfare is
  $\Omega\!\left(\frac{\OPT}{\log\!\log  m}\right)$.
\end{theorem}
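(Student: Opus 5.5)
We follow the template of \cite{AS-FOCS19,AKS-SODA21}: split the random arrival order into a learning phase and a selling phase. We then replace their $O(\log\!\log m)$-round price-learning procedure, each round of which loses a $\log\!\log m$ factor, with a single swap-regret–style computation of prices, as in the main theorem.

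\textbf{Step 1 (setup).} Flip a fair coin. With probability $1/2$, run the grand-bundle branch: observe the first half of buyers, record the largest value $v_i([m])$ among them, and post that value as the price for the whole bundle to the second half. The standard secretary argument shows this branch earns $\Omega(\max_i v_i([m]))$. This handles the case where $\OPT$ is dominated by a single buyer. Otherwise, we may assume every buyer's contribution to $\OPT$ is at most $\OPT/\log\!\log m$, up to constants. In that case, with constant probability, the first half $S$ of the random order contains a constant fraction of $\OPT$ on each item class, and so does the second half $T$. This follows by concentration over random halves, e.g.\ via the XOS supporting additive functions of the optimal allocation.

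\textbf{Step 2 (computing prices from $S$).} Take the first half of buyers as a fixed ``empirical'' profile. Treat the uniform distribution over a random subset of them as the sample distribution in the analysis behind \Cref{thm:intro-main}, and compute item prices $p$ from it. The relevant properties are the approximate-equilibrium guarantees obtained via swap regret. The key point is that, in the random-order setting, the coarse median-price condition (each item sells with probability about $1/2$) cannot be certified from samples, because we have only one ``draw''. Instead we use an approximate correlated equilibrium over a distribution of price vectors. This distribution is supported on $O(\log\!\log m)$ price scales, with one scale per level of a geometric grid of prices ranging from $\OPT/m$ up to $\OPT$. We then choose one price vector from the equilibrium's support uniformly at random. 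Because the support has size $O(\log\!\log m)$, this random choice loses only an $O(\log\!\log m)$ factor, rather than the $(\log\!\log m)^2$ lost in \cite{AKS-SODA21}.

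\textbf{Step 3 (selling to $T$).} Post the chosen prices $p$ sequentially to the buyers in $T$, each of whom takes a utility-maximizing bundle. This is truthful for every realization of the randomness, since $p$ depends only on $S$. For the welfare bound we use the usual decomposition of welfare into revenue plus utility. The revenue term is large when the equilibrium says many items sell. The utility term is large when items remain unsold, because then the optimal bundles in $T$ are cheap. The equilibrium (no-swap-regret) condition is exactly what ensures that on the chosen scale one of these two terms is $\Omega(\OPT)$. Transferring this from $S$ to $T$ uses the symmetry of the random split together with the concentration from Step 1.

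\textbf{Main obstacle.} The hardest step is Step 3's transfer from $S$ to $T$ for adaptively chosen prices. Because the equilibrium depends on $S$, we need a uniform-convergence bound over all candidate price vectors. I would first try to discretize prices to $O(\log\!\log m)$ scales per item and union bound over the regret-minimizer's trajectory, not over all price vectors. This works because the trajectory depends on $S$ only through $O(\log\!\log m)$ aggregate statistics, mirroring the doubly-logarithmic sample bound of \Cref{thm:intro-main}.
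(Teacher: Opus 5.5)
There is a genuine gap, and it sits exactly at the step you flag as the main obstacle; your proposed fix does not close it. You compute prices once from the whole learning half $S$ and post them to the whole selling half $T$. The statistics in the revenue/utility certificate must therefore transfer from $S$ to $T$ for prices that depend on $S$.

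\begin{itemize}
\item A union bound over learner trajectories is not available. The trajectory is a function of one critical price per item per round, i.e.\ of $\Theta(mH)$ quantities, not of $O(\log\!\log m)$ aggregates.
\item More fundamentally, even for a single fixed price vector there is nothing to concentrate. Whether item $j$ is sold before its benchmark owner arrives is a single indicator of the sequential run, which can hinge on one buyer. It is not an average of many small independent terms, so its value on $S$ need not be close to its value on $T$. Symmetry of the random split only equates expectations, which is useless once the prices are fitted to $S$.
\item The bound $\widehat\Delta\le\alpha+\CutReg_T/(2T)$ only concerns the feedback the learners were fed. If every round reuses $S$ (or resamples from its empirical distribution), it certifies calibration on $S$, not on the buyers you serve. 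Likewise, ``treating a uniform distribution over a random subset of $S$ as the sample distribution'' does not place you in the independent-profile setting of \Cref{thm:intro-main}.
\end{itemize}

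The paper avoids uniform convergence by giving each learner round fresh data that is exchangeable with the served buyers. After STAT, the suffix is padded and split into $K+1$ chronological blocks. In round $t$, the vector $P_t$ is drawn and stored \emph{before} block $t$ is exposed. Feedback comes from a prefix of block $t$ of uniformly random length, which mimics the uniform position of the owner $i$ inside the allocation block. The allocation block then receives the stored $P_{t^\star}$ for a uniform $t^\star\le H$. \Cref{lem:block-prefix-exposure} gives $\E[\1_{F_i}Z^{\rm alloc}_t]=\frac1{K+1}\E[Z^{\rm tr}_t]\pm O\bigl(\frac{H}{K(K+1)}\bigr)$. Combined with the cut $b(j)$ at base price about $w_j/(2e^\tau)$, each midrange item gets expected welfare $\frac{\kappa w_j}{K+1}\bigl(\frac12-\E[\widehat\Delta_j]-O(H/K)\bigr)$.

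This also shows that your accounting of the loss is misplaced. Choosing uniformly among the stored vectors costs nothing, because the cut statistic is already an average over rounds. The $\log\!\log m$ factor is lost because a buyer is served only if she lands in the single allocation block, which happens with probability $1/(K+1)$. Here $K=O(H)=O(\log N)=O(\log\!\log m)$, since each learner round consumes its own block.

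Two smaller points. A geometric grid from $\OPT/m$ to $\OPT$ has $\Theta(\log m)$ levels, not $O(\log\!\log m)$. It also cannot be anchored at the unknown $\OPT$. The paper instead uses $\Gamma=\max_{i\in\mathrm{STAT}}u_i(M)$ and zeroes out buyers with $u_i(M)>C_B\Gamma$. It charges the resulting $O(x^*)$ loss, where $x^*=\max_iu_i(M)$, to the grand-bundle branch via \Cref{lem:subset-scale-tail}.
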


In the  XOS setting, this improves the \(O((\log\!\log  m)^2)\) guarantee obtained from the Assadi-Singla and Assadi-Kesselheim-Singla line~\cite{AS-FOCS19,AKS-SODA21}. The proof relies on the posted-price structure developed for the main theorem.

Finally, the same ideas also yield a result at the extreme low-sample end of the stochastic model, where we only have a single sample from each buyer's distribution.  With only one sample, we cannot run the full distributional training procedure of \Cref{thm:intro-main}; instead, the sample creates sparse training roles and estimates the relevant scale, while the real buyers are served only in the allocation role.  

\begin{theorem}[Single sample]\label{cor:intro-one-sample}
For independent XOS buyers in any fixed known arrival order, one sample suffices for a universally
truthful randomized posted-price mechanism, mixing an item posted-price branch with a grand-bundle
posted-price branch, with
  $\E[\ALG]\ge \Omega\!\left(\frac{\E[\OPT]}{\log\!\log  m}\right)$.
\end{theorem}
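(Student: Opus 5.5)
The plan is to reduce the single-sample statement to the random-order guarantee of \Cref{cor:intro-random-order}. I will also reuse the posted-price training machinery from the main theorem, which turns a sequence of ``training'' buyers into item prices. The approach is the standard sample-as-training-set trick. Draw one sample profile $\tilde v=(\tilde v_1,\dots,\tilde v_n)$. Because $\tilde v_i$ and $v_i$ are i.i.d. from $D_i$ for each $i$, the pooled pair $\{v_i,\tilde v_i\}$ is exchangeable. Conditioned on the unordered pairs, labeling one element of each pair ``real'' and the other ``sample'' is a uniformly random coin flip. So the combined $2n$-buyer instance behaves like a random partition into a training half and a test half, which is exactly the structure a random-order (secretary-style) analysis exploits. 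The mechanism learns prices only from $\tilde v$ and serves the real buyers $v_i$ in the fixed known order. Each real buyer therefore faces prices independent of her report, so universal truthfulness is immediate.

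Step 1 fixes the branches. With probability $1/2$ we run a grand-bundle branch: post price $p=\max_i \tilde v_i([m])/2$ (or the second-highest sample value) on the whole bundle, and sell to the first real buyer whose value exceeds $p$. A single-item prophet/sample argument (Rubinstein--Wang--Weinberg style) shows this branch gets $\Omega(\E[\max_i v_i([m])])$. With the remaining probability we run the item-price branch. Its scale guess is taken from the sample: pick a random geometric scale $\tau$ among $O(\log m)$ levels around $\max_i \tilde v_i([m])$. The usual bucketing argument says that, at the cost of items whose OPT-contribution lies below $\max_i v_i([m])/m$ (negligible) and the grand-bundle case (handled by the other branch), the welfare is concentrated on $O(\log m)$ price levels. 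As in the $O(\log\log m)$ random-order result, only the structure of the ``coefficient'' levels $\coef$ matters, and these can be grouped into $O(\log\log m)$ buckets. Guessing one bucket uniformly loses the $\log\log m$ factor.

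Step 2 is the item-price branch within a bucket. I would invoke the posted-price structure from the proof of \Cref{cor:intro-random-order} as a black box. Given training buyers, it runs the application-specific regret minimization, one round per training buyer, and produces (a distribution over) item prices. The guarantee is that test buyers from the same exchangeable population obtain $\Omega(\text{OPT restricted to the bucket})$ in expectation. I would feed it the sample buyers $\tilde v_1,\dots,\tilde v_n$ as a sparse training set; ``sparse'' means each sample buyer is subsampled into a training role with constant probability, so that the revenue/utility decomposition stays valid. The real buyers are then served with the resulting prices in the adversarial fixed order. The key claim is that the random-order analysis only uses two facts. First, the training and test sets are a random split of the pooled instance. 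Second, the test phase is analyzed via a per-buyer utility-plus-revenue bound that holds for \emph{any} test order, as posted-price bounds do. Given both facts, the adversarial order of the real buyers costs nothing. Taking expectations over the pooled instance, OPT of the real profile is at least a constant fraction of $\E[\OPT]$, because $\E[\OPT(v)]=\E[\OPT(\tilde v)]$ and the pooled OPT is at most their sum.

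The main obstacle is Step 2's transfer. The random-order guarantee is stated for a uniformly random permutation with training on a prefix. Here the split is only a per-pair coin flip, and the fixed order within the test set is adversarial. I would handle this by checking that the regret bound on the training side is order-oblivious, since regret minimization works for any adversarial sequence, and that the welfare bound on the test side holds pointwise in the order. Then only the exchangeability of each pair is needed, and the expected OPT of the test half is at least $\E[\OPT]/2$ by symmetry. A secondary concern is estimating the scale from a single sample. Using the sample maximum guards against the heavy tail: if the sample maximum underestimates, that is precisely the event where the grand-bundle branch or the top bucket captures the welfare.
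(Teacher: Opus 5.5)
There is a genuine gap in Step 2, and it comes with a misidentification of where the $\log\log m$ factor arises. The cut-regret learner needs $H=\Theta(\log N)=\Theta(\log\log m)$ rounds. The passage from training to deployment needs each round's profile to be a fresh draw from exactly the population that will later face the prices. Concretely, this is the Azuma step of \Cref{lem:cut-calibration}, with error $\sqrt{\log(2N)/H}$, together with the exchange identity $\E[\Phi(P_t,V^0)]=\E[\Phi(P_t,V^t)]$ of \Cref{lem:role-exposure}(a).

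One sample split into a ``training half'' and a ``test half'' gives a single round, so these bounds are vacuous. ``One round per training buyer'' does not give the right feedback either: the feedback must be item $j$'s sale curve in the whole sequential mechanism, not one buyer's demand. To manufacture $H$ valid rounds from one sample, the paper does three things:
\begin{itemize}
\item it cuts the non-STAT buyers into $K+1=\Theta(H)$ disjoint sparse roles;
\item it trains on roles $1,\dots,H$ using the sampled valuations;
\item it serves \emph{only} the real buyers in role $0$.
\end{itemize}
Role $0$ is then distributed like each training role because the coupled profile $U_i$ ($v_i$ if $\ell_i=0$, $s_i$ otherwise) is independent of the labels. Per-pair exchangeability of $(v_i,\tilde v_i)$ alone does not give this $(K+1)$-way symmetry.

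You instead post the learned prices to all real buyers, and then calibration does not transfer. Prices under which an item sells with probability about $1/2$ in a $1/(K+1)$-fraction of the market can sell it far more often in the full market. Low-priced items are then gone before their benchmark owners arrive, and the certificate $(1-q)(s_j-e^\tau x y_j)_+ + xq$ degenerates to low revenue. In the paper, the $1/\log\log m$ loss is exactly the probability $1/(K+1)$ that a benchmark owner lands in the allocation role (\Cref{lem:role-exposure}(b),(c)).

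Your substitute source of the loss, grouping ``coefficient levels'' into $O(\log\log m)$ buckets and guessing one, is not supported by anything in the argument. $\coef$ is a single monotone function of the normalized price appearing in the welfare certificate, not a family of levels. You also give no reason why a learner confined to one bucket, with one training profile, is constant-competitive there. Restricting to a bucket with $N'$ grid levels still leaves the cut-regret guarantee needing $\Theta(\log N')$ independent rounds. That is $\Theta(\log\log m)$ unless $N'=O(1)$, in which case there are $\Theta(\log m)$ buckets. So bucket guessing on top of the still-necessary role splitting would cost about $(\log\log m)^2$, not the claimed bound.

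There is also a secondary problem with the scale. You take $\Gamma$ as the maximum over all sample buyers and then reuse those same samples as training data, so conditional on $\Gamma$ the training profiles are biased relative to the served ones. The paper avoids this by computing $\Gamma$ from a separate random STAT half, which keeps the remaining buyers' labels uniform given $\Gamma$ and $U$. It then charges buyers with $U_i(M)>C_B\Gamma$ to the grand-bundle branch via \Cref{lem:ind-scale-tail}.
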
 

 Theorems~\ref{cor:intro-random-order} and~\ref{cor:intro-one-sample} are proved together in \Cref{sec:SSRO}, where the new ingredient is again the posted-price structure developed for the main theorem.
 

\subsection{Further Related Work}

\noindent \textbf{Truthful combinatorial auctions.}
Truthful combinatorial auctions have been studied extensively since the beginning of algorithmic mechanism design~\cite{NisanR99,Nisan00}.  For restricted valuation classes, positive approximation results include \cite{LehmannLN02,LehmannLN06,DobzinskiNS05,DobzinskiNS06,LaviS05,Dobzinski07,Dobzinski16}, and convex-rounding and black-box reduction approaches were developed in \cite{DughmiR10,DughmiRY11}.  There are also strong computational and communication barriers for truthful mechanisms~\cite{NisanS06,PapadimitriouSS08,Dobzinski11,DughmiV11,DobzinskiV12,DobzinskiV13,DobzinskiNO14,EzraFNTW19,AKSW-STOC20}.  Our work follows the line of simple truthful mechanisms via posted prices in online
combinatorial auctions.
 
\smallskip
\noindent \textbf{Posted prices, prophet inequalities, and samples.}
Posted pricing is a central tool in prophet inequalities and mechanism design
\cite{HajiaghayiKS07,ChawlaHMS10}.  The closest known-distribution benchmark for us is the
posted-price mechanism of Feldman, Gravin, and Lucier~\cite{FeldmanGL15}; the broader balanced-price
framework was developed in~\cite{DuettingFKL17}.  The single-item median threshold of
Samuel-Cahn~\cite{SamuelCahn84} is the classical sample-robust prototype.  Prophet inequalities
with limited or sampled information have since been studied extensively, including
\cite{AzarKW14,CorreaDFS19,RubinsteinWW20,CorreaDFSZ21,CorreaCES20,KaplanNR20,CristiZilliotto24}.

\smallskip
\noindent \textbf{Random order and prior-free online mechanisms.}
The random-order model originates in online matching and secretary-type problems; e.g., see book chapter \cite{GS-Book20}.  For online combinatorial auctions, \cite{KrystaV12,KRTV-ESA13,Dobzinski16} developed online mechanism-design tools, and Assadi-Singla and
Assadi-Kesselheim-Singla~\cite{AS-FOCS19,AKS-SODA21} obtained improved truthful mechanisms for
submodular and subadditive buyers.  

\smallskip
\noindent \textbf{Weaker notions of calibration.}
Our cut-calibration condition is conceptually related to step calibration~\cite{QZ-COLT25}, cutoff calibration~\cite{RSBRW-COLT25}, and proper calibration~\cite{OKK-FOCS25}. These notions replace levelwise absolute calibration by controlling signed errors against structured threshold or interval tests, retaining decision-theoretic guarantees while reducing statistical or computational cost. Our setting differs: the ordered levels are item prices, the target sale probability is $1/2$ at every level, and the welfare proof requires only one orientation of each proper cut. We show that this auction-specific condition suffices for the welfare guarantee and can be learned from smoothed posted-price feedback.

\section{Model and Technical Overview} \label{sec:overview}

\subsection{Model}\label{sec:model}
Let \(M\) denote the set of \(m:=|M|\) items. There are \(n\) buyers with valuation functions \(v_1,\ldots,v_n:2^M\to\R_+\).
We will assume that all valuations $v$ are normalized (i.e., $v(\emptyset)=0$) and monotone.  A valuation $v$ is XOS~\cite{LehmannLN06} if there is a collection of additive functions $a^1,a^2,\ldots$ such that
\[
  v(S)=\max_\ell\sum_{j\in S}a^\ell_j
  \qquad\text{for every }S\subseteq M.
\]
For normalized monotone XOS valuations, the support coefficients $a^\ell_j$ can be taken nonnegative.

A sequential item \emph{posted-price mechanism} fixes prices $p_j\ge0$ for $j \in M$ before seeing the buyer.  When buyer $i \in [n]$ with valuation $v_i$ arrives and the remaining set is $R \subseteq M$, she receives 
\[
  \argmax_{S\subseteq R}\Big\{v_i(S)-\sum_{j\in S}p_j \Big\}.
\]
We use a fixed global tie-breaking rule over bundles, independent of reports and prices. 

For the stochastic results, buyer valuations $v_i$ are drawn independently from unknown distributions $D_i$ over normalized monotone XOS valuations. Let $D:=D_1\times\cdots\times D_n$, and index the buyers in their fixed known arrival order. For any valuation profile $v=(v_1,\ldots,v_n)$, let $\OPT(v)$ denote its optimum welfare.

Fix a rule choosing an optimal allocation $O(v)=(O_1(v),\ldots,O_n(v))$ and, for each buyer $i$, a supporting XOS clause $a_i(v)=(a_{i,j}(v))_{j\in M}$ for $O_i(v)$. Thus
\[
  v_i(S)\ge\sum_{j\in S}a_{i,j}(v)\quad\text{for every }S\subseteq M,
  \qquad
  v_i(O_i(v))=\sum_{j\in O_i(v)}a_{i,j}(v).
\]
Define $w_j(v)= a_{i,j}(v)$ if $j\in O_i(v)$ and $w_j(v)= 0$ if $j$ is unallocated in $O(v)$. 
The bundles $O_i(v)$ are disjoint, so this definition is unambiguous and $\sum_jw_j(v)=\OPT(v)$. Define $\mu_j:=\E_{v\sim D}[w_j(v)]$; then $\sum_j\mu_j=\E[\OPT]$. These benchmark objects are used only in the analysis: the mechanism does not compute an optimal allocation or its supporting clauses.

In the stochastic results, a \emph{fresh valuation profile} means the independent draw $v\sim D$ on which the mechanism is evaluated, as opposed to the training samples. It is independent of the training samples and of the randomness used to construct the posted prices.

\paragraph{Oracle and sample access.}
A sample means one independent valuation profile
\((\hat v_1,\ldots,\hat v_n)\sim D_1\times\cdots\times D_n\).  Thus \(T\) samples means \(T\) independent samples from each buyer distribution.  For sampled valuations we assume oracle access sufficient to make value queries and (tie-broken) demand queries to simulate posted-price executions on sampled profiles. 
For real buyers, the allocation phase is implemented as a posted-price mechanism: after the mechanism's internal randomness and all training samples are fixed, every buyer faces prices that are independent of her own report.  Hence every deterministic realization is DSIC, and the randomized mechanisms are universally truthful.

\paragraph{The mechanism at a glance.}
Our main mechanism has a training phase on sampled valuation profiles and an allocation phase on the real buyers. The complete specification is in Algorithm~\ref{alg:main-training}; its three stages are as follows.
\begin{enumerate}[leftmargin=2em]
  \item \emph{Preprocessing:} Use preprocessing samples to choose a price window and a finite geometric grid of base prices. (If the preprocessing scale is zero, we handle it separately in \Cref{lem:window}.)
  \item \emph{Training:} Run $T$ training rounds, with one learner per item. Before round $t$'s sample is drawn, learner $j$ outputs and stores a distribution $\pi_{j,t}$ over the grid. Draw one independent valuation-profile sample. For each item $j$, draw the other items' prices from their current learner distributions and apply independent perturbations. Holding those other prices fixed, simulate the posted-price mechanism on the sample while varying only $j$'s price. The resulting smoothed sale probabilities at all grid levels are the feedback used to update learner $j$. The same valuation-profile sample supplies feedback for every item.
  \item \emph{Real execution:} Choose one common round $U$ uniformly from $[T]$. Independently for each item $j$, draw its base price from $\pi_{j,U}$ and apply a fresh perturbation. Fix the resulting price vector and offer it sequentially to the real buyers in their fixed known order.
\end{enumerate}
Training simulations do not allocate real items, and prices are not updated in response to the real buyers' reports. The rounds below refer to training rounds, not to individual buyer arrivals. We next explain which property of the learned price distribution guarantees high welfare and how the learners obtain it.

\subsection{Technical Overview}
 In the Feldman, Gravin, and Lucier (FGL) mechanism \cite{FeldmanGL15}, the price of an item is half times its expected contribution to the offline optimum. More generally, the balanced-price framework \cite{DuettingFKL17} also sets prices from expected marginal values of an appropriate benchmark allocation.  These prices are extremely powerful when the distributions are known. However, they are much less suitable as a few-sample learning target: for unbounded distributions, an item's expected contribution may be dominated by
rare events that are invisible in any small number of samples.

\paragraph{Median approach of \cite{DKLRS-FOCS24}.}
 In the single-item prophet inequality, the threshold median of the maximum value~\cite{SamuelCahn84} is robust to tails, and hence learnable, because it depends on a quantile rather than a mean. The authors of \cite{DKLRS-FOCS24} imported this idea into combinatorial auctions by defining a high-dimensional analogue of median prices: roughly,
a price vector is a combinatorial median if every item sells with probability exactly $1/2$.  This
condition is strong enough to recover the factor-two FGL welfare accounting, and \cite{DKLRS-FOCS24}
show that such prices exist and can be learned from polynomially many samples.
One difficulty is that high-dimensional medians are not canonical.  Unlike in one dimension, there
is no unique way to define ``the'' median price vector.  Roughly, the proof of existence of median prices in \cite{DKLRS-FOCS24} can be viewed as existence of \emph{Nash equilibrium} of a certain game: consider a game where items decrease or increase their prices depending on whether their probability of sale is below or above $1/2$, respectively. The Nash equilibrium of this game exists due to a fixed point theorem. However, this is an existence proof and does not give a samples/time efficient algorithm. 

Our main contribution is to identify a weaker median condition: instead of learning Nash equilibrium, we learn the analogue of an \emph{approximate correlated equilibrium} for a restricted class of local deviations.  This weaker object still suffices for the welfare proof and can be learned much faster.
Now our approach consists of three main ideas that we describe next.

\paragraph{Idea 1: calibrated median prices.}
We  relax the goal from finding one deterministic median price vector to finding a randomized distribution over price vectors.  
Assume that we are given a price window \([L,B]\) such that 
\[
    B/L=\poly(m,1/\eps) \qquad \Prb_{v\sim D}[\OPT(v)\ge B]\le 1/2,
    \qquad \text{and } \quad
    mL\le O(\eps \E[\OPT(v)])
    .
  \]
 This window is easily learnable from a few samples; see \Cref{lem:window}. 
Now fix an item \(j\) and a geometric grid (like powers of $(1+\Theta(\epsilon^2))$) of possible base
prices%
\footnote{We call them ``base'' prices because, as explained in Idea 2, we later slightly perturb those base prices to obtain a smoothness property useful for our reduction to an online learning problem.}
\(p_1,\ldots,p_N\) in $[L,B]$.

To simplify notation, we sometimes call price $p_r$ the price at \emph{level} $r$. Let $X=(X_1,\ldots,X_m)$ be a random base-price vector on this grid, and define
\[
  \lambda_j(r)=\Pr[X_j=p_r],
  \qquad
  q_j(r)=\Pr[j\text{ sells}\mid X_j=p_r].
\]
Here the experiment is to draw $X$, apply the independent price perturbations described in Idea 2, independently draw a valuation profile $v\sim D$, and run the sequential posted-price mechanism of \Cref{sec:model} in the fixed buyer order. ``Item $j$ sells'' means that $j$ is allocated by the end of this execution. The probability averages over the base prices, the perturbations, and $v$; the tie-breaking rule is fixed. If $\lambda_j(r)=0$, set $q_j(r)=0$, since it only appears multiplied by $\lambda_j(r)$.

A natural randomized median condition, inspired by the no-profitable-deviation viewpoint behind approximate correlated equilibrium, is\footnote{For intuition, we suppress boundary corrections: at the lowest
price, only selling too often is penalized, and at the highest price,
only selling too rarely is penalized. These one-sided corrections are
handled in  \Cref{thm:cut-median}.}
\[
  \sum_{r=1}^N \lambda_j(r)\left|q_j(r)-\frac12\right|\le O(\eps)
  \qquad\text{for every item }j .
\]
This is an \(\ell_1\)-calibration condition: conditioned on the price level actually used by the
mechanism, the item sells with probability close to \(1/2\) on average.

This condition is already enough for a \((2+\eps)\)-approximation.  The reason is that the standard
utility/revenue certificate \cite{FeldmanGL15,DKLRS-FOCS24} for an item has the same structure as the one-dimensional median
argument:  If an item sells, we get revenue; if it does not sell, its owner in the optimum must be making up the utility by taking other items.  When \(q_j(r)\) is close to \(1/2\) on the prices used by the mechanism,
these two terms balance in exactly the way needed for the factor-two welfare proof.  Thus exact combinatorial medians are not necessary; approximate randomized medians suffice.

\paragraph{Idea 2: learning calibrated medians by smoothed swap regret.}
The next question is how to learn such calibrated median prices.  Standard preprocessing reduces
the relevant price range to a geometric grid with
  $N=\widetilde O_\eps(\log m)$
levels, where recall that $m$ is the number of items.  For a fixed sample profile and fixed prices of all other items, the sold indicator of item
\(j\) is monotone in its own price.  Hence there is a critical price: below it \(j\) sells, and above
it \(j\) does not sell.  This gives full-information feedback for item \(j\)'s entire one-dimensional
price-learning problem from a single sample profile.

There is one technical obstacle.  If the critical price lies between two
adjacent grid points, say \(p_r<\Theta<p_{r+1}\), then the item sells at level \(r\) but not at
level \(r+1\), so the adjacent sale indicators differ by \(1\).  To make local regret meaningful,
we therefore do not post the base price directly.  Instead, after choosing a base price level \(r\),
the mechanism multiplies \(p_r\) by an independent log-uniform perturbation
in a short interval \([1,e^\tau]\), where \(\tau=\Theta(\eps)\).  
This smoothing turns the critical price into a Lipschitz sale curve along the grid: adjacent levels have sale probabilities that differ by only \(O(1/\tau \cdot \log(1+\beta)) = O(\eps)\), where $1+\beta$ is the grid ratio for $\beta= \Theta(\eps^2)$.

With this smoothing, one can reduce the boundary-aware calibration condition to a $\Phi$-regret online learning problem. In each round the learner chooses a distribution over price levels. The training environment induces a random critical level $C\in\{0,\ldots,N\}$ under the own-price perturbation, and the scalar loss at level $r$ is $\E[|r-C|]$; see \Cref{sec:sale-curves}. For this warm-up, $\Phi$ consists of all maps satisfying $\phi(r)\in\{r-1,r,r+1\}\cap[N]$ for every $r$. Every one-round deviation gain lies in $[-1,1]$, and $\log|\Phi|=O(N)$. Thus external-regret-to-$\Phi$-regret reductions~\cite{BM-JMLR07,GGM-ICML08} give $O(\sqrt{TN})$ regret and, using smoothing, $\widetilde O_\eps(N)=\widetilde O_\eps(\log m)$ samples for this calibration condition. This does not yet meet our ultimate goal of $O(\log\!\log m)$ samples, but it already shows that the polynomial sample dependence of~\cite{DKLRS-FOCS24} is not inherent.


We remark that one natural approach to improve the learning rate is to apply recent large-action-space swap-regret algorithms~\cite{PengR24,DaganDFG24}. Unfortunately, they do not seem to apply because even though the $\Phi$-regret comparing to adjacent price levels is $\le 1$, the loss  ($|r-b|$) at price level \(r\) in 
any round can be as large as $N$, and the algorithms of~\cite{PengR24,DaganDFG24}  have a linear dependence on the range of loss. 

\paragraph{Idea 3: cut-calibrated median prices.} To improve the learning rate from $O(N)$ to $O(\log N)$, the key observation is that the welfare proof does not require arbitrary levelwise corrections. As we explain below, sale-probability errors enter the welfare certificate with nondecreasing coefficients along the price grid. After handling the endpoints, every such coefficient sequence is a convex combination of single-cut step functions. Thus it suffices to compete with the $N-1$ cut deviations, rather than the exponentially many local deviation patterns considered
in Idea 2. 

To formalize this weaker calibration requirement, 
for item \(j\), we consider signed mass
  $\lambda_j(r)\left(q_j(r)-\frac12\right)$
on level \(r\), instead of unsigned mass $\lambda_j(r)|q_j(r)-\frac12|$.  Positive mass means that the level sells too often; negative mass means that it
sells too rarely. 
For each proper cut \(b\in\{1,\ldots,N-1\}\) of the price line  (corresponding to a cutoff price in the auction setting), consider the signed imbalance
\[
  \sum_{r\le b}\lambda_j(r)\left(q_j(r)-\frac12\right)
  -
  \sum_{r>b}\lambda_j(r)\left(q_j(r)-\frac12\right).
\]
We define our calibration error \(\Delta_j\) to be the positive part of the maximum of this quantity over all proper cuts.  

The proof that this weaker condition suffices is the main welfare lemma.  
For one item, after normalizing by its expected contribution, the FGL certificate is lower bounded by a
one-dimensional expression of the form
  $(1-q)(1-e^\tau z)_+ + zq $,
where \(q\) is the sale probability and \(z\) is the normalized price scale.
This function is bounded below by
  $\frac12 + \theta(z)\left(q-\frac12\right)-O(\tau)$,
where \(\theta(z)\) is a nondecreasing coefficient going from \(-1\) at low prices to \(+1\) at high
prices.  
Therefore sale-probability errors appear with monotone coefficients along the price grid in the welfare bound.  Since any such monotone coefficient sequence is a convex combination of cut step functions, the total signed loss is controlled by the maximum cut imbalance
\(\Delta_j\).
This gives the cut-calibrated median welfare bound
\[
  \E[\ALG]
  \ge
  \left(\frac12-O(\tau)\right) \E[\OPT]
  -
  \sum_j \mu_j\Delta_j
  -
  e^\tau Lm ,
\]
where \(\mu_j\) is item \(j\)'s expected contribution to the optimum and \(L\) is the bottom of the
price grid.  

Finally, the advantage of cut calibration is that it can be learned much faster than full swap calibration.  Since we compete only with the \(N-1\) cuts, a stationary-distribution reduction from external regret to swap regret gives a learner whose cut regret is \(O(\sqrt{T\log N})\). 
A standard martingale argument then transfers the empirical cut calibration learned on samples to the fresh valuation profile faced by the final posted-price mechanism. 
Taking   $T=O(\eps^{-2}\log N)$
therefore makes \(\E[\Delta_j]=O(\eps)\) for every item \(j\).
Importantly, we do not need a union bound over items: one sampled profile gives feedback for all items, and the welfare loss is the weighted sum \(\sum_j\mu_j\Delta_j\), so linearity of expectation is enough.

\paragraph{The one-sample and random-order consequences.} Items with very small contributions are negligible because the grid floor is low.  Items with very
large contributions are charged to a grand-bundle branch: in the one-sample model we use a single-sample grand-bundle prophet inequality, and in the random-order model we use the classical secretary threshold rule for the grand bundle. In the remaining cases, we first obtain a scale \(\Gamma\), then run the same cut-calibrated learner on \(H=O(\log N)=O(\log\!\log  m)\) training blocks, and finally post a uniformly random stored price vector to the allocation block.
  Since \(H=O(\log N)=O(\log\!\log  m)\), this yields the
one-sample and random-order \(O(\log\!\log  m)\) guarantees.

\section{Cut-Calibrated Median Prices} \label{sec:cutCalib}

We introduce the main cut-calibration property of our  distribution over price vectors. We will first introduce the cut-calibration error and then show that small cut-calibration error implies good welfare. In \Cref{sec:learning} we will show how to learn such a distribution over price vectors with a small number of samples.

Assume that we are given a price window \([L,B]\) such that 
\[
    \Prb_{v\sim D}[\OPT(v)\ge B]\le 1/2,
    \qquad
    mL\le O(\eps \cdot \E[\OPT(v)]),
    \quad \text{and } \quad 
    B/L=\poly(m,1/\eps).
  \]
 This window is easily learnable from a few samples (see \Cref{lem:window}).

\subsection{Base Price Grid, Smoothing, and Cut-Calibration Error}
\label{sec:priceGrid}

Fix a geometric grid of prices:
\[
  p_1<p_2<\cdots<p_N,
  \qquad p_{r+1}=(1+\beta)p_r,
\]
such that \(p_1\le L\) and \(p_N\ge B\). 
(We will eventually set $\beta = \Theta(\epsilon^2)$ and $N=O(\epsilon^{-2} \log m)$.)  

The \emph{base price} of item $j$ is a random grid value $X_j\in\{p_1,\ldots,p_N\}$.  The actual posted price $P_j$ is obtained by \emph{smoothing} base price $X_j$:
\[
  P_j=X_jR_j
  \quad \text{with } \quad R_j=\exp(U_j) \text{ for } U_j\sim\Unif[0,\tau],
\]
where perturbations are independent across items and independent of the fresh valuation profile (we will eventually set $\tau=\Theta(\epsilon)$).  The base-price vector $X=(X_1,\ldots,X_m)$ is also independent of the fresh valuation profile, although the coordinates of $X$ may be arbitrarily correlated.

For item $j$ and level $r$, define
\begin{align} \label{eq:defnLambdaQ}
  \lambda_j(r):=\Prb[X_j=p_r]
  \quad \text{ and } \quad
  q_j(r):=\Prb[j\text{ sells}\mid X_j=p_r].
\end{align}
Here $j$ sells if it is allocated by the end of the sequential posted-price execution on $v\sim D$ in the fixed buyer order. Conditional on $X_j=p_r$, the probability averages over the other base prices, all smoothing perturbations, and the independent profile $v$; the tie-breaking rule is fixed. If $\lambda_j(r)=0$, set $q_j(r)=0$.

For any item $j\in M$ and base price $p_r$, it is useful to imagine the signed error
  $\lambda_j(r)\left(q_j(r)-\frac12\right)$
placed on the price line. 
Positive error means that level $r$ sells too often; negative error means that it sells too rarely.
A \emph{cut} $b\in [N]$ detects imbalance: the total signed error to the left of the cut, minus the total signed error to the right.
We are interested in bounding the largest cut error.

\begin{definition}[Cut-Calibration error]\label{defn:Delta-def}
For any joint distribution of the base-price vector $X$, with $\lambda_j$ and $q_j$ defined by \eqref{eq:defnLambdaQ}, the \emph{cut-calibration error} for item $j$ is
\[
  \Delta_j
  :=
  \left[
  \max_{1\le b<N}
  \left(
    \sum_{r\le b}\lambda_j(r)\left(q_j(r)-\frac12\right)
    -
    \sum_{r>b}\lambda_j(r)\left(q_j(r)-\frac12\right)
  \right)
  \right]_+ .
\]
\end{definition}

\subsection{Welfare of Cut-Calibrated Median Prices}

We now show that cut calibration is enough for the FGL-style utility-plus-revenue accounting.  The main new observation is that sale-probability errors are signed: the bad pattern is low prices selling too often and high prices selling too rarely.  This is
exactly what the cut error \(\Delta_j\) measures.

\begin{theorem}[Cut-calibrated median welfare]\label{thm:cut-median}
Assume \(\tau\le1\), \(p_1\le L\), \(p_N\ge B\), and
  $\Prb[\OPT(v)\ge B]\le \frac12$.
For any joint distribution over base prices that is independent of the valuation profile $v\sim D$ on which the mechanism is evaluated, and any independent perturbations supported on $[1,e^\tau]$, the resulting posted-price mechanism satisfies
\[
  \E[\ALG]
  \ge
  \left(\frac12-O(\tau)\right)\E[\OPT]
  -
  \sum_j\mu_j\Delta_j
  -
  e^\tau Lm .
\]
\end{theorem}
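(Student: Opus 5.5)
We follow the FGL utility-plus-revenue accounting and apply it conditionally on the random price vector. Write $\ALG=\mathrm{Rev}+\mathrm{Util}$. For revenue, $\E[\mathrm{Rev}]=\sum_j\E[P_j\1\{j\text{ sells}\}]$. Since $P_j\ge X_j$ and the sale event is averaged over everything else, this is at least $\sum_j\sum_r\lambda_j(r)p_rq_j(r)$.

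For utility, we use the standard ghost-sample argument. Let $v'\sim D$ be an independent copy. Fix buyer $i$ and let $R_i$ be the set still available when $i$ arrives; it depends only on $v_{<i}$ and the prices. Buyer $i$ could buy $O_i(v_i,v'_{-i})\cap R_i$. By independence, $(v_i,v'_{-i})$ is distributed as $D$, and it is independent of $R_i$ computed from $v_{<i}$. Using the supporting clause, $i$'s utility is at least $\sum_{j\in O_i(\tilde v)}(a_{i,j}(\tilde v)-P_j)\1\{j\notin \text{sold}\}$. Here we lower bound the utility by keeping only the items with $a_{i,j}(\tilde v)\ge P_j$, which gives positive parts $(w_j-P_j)_+$. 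We also replace "available to $i$" by "unsold at the end", which only shrinks the set. Summing over $i$ gives
\[
\E[\mathrm{Util}]\ \ge\ \sum_j\E\big[(w_j(\tilde v)-P_j)_+\1\{j\text{ unsold}\}\big],
\]
with $\tilde v$ independent of the run. Conditioning on $X_j=p_r$, the price $P_j\le e^\tau p_r$, and $\tilde v$ is independent of the unsold event. Hence the $j$ term is at least $\sum_r\lambda_j(r)(1-q_j(r))\E[(w_j-e^\tau p_r)_+]$. Strictly, $(w_j-P_j)_+$ and the unsold event are not independent given $X_j$, because both depend on $R_j$. We handle this by the monotone upper bound $P_j\le e^\tau p_r$ before taking expectations.

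Next we normalize per item and reduce to a one-dimensional inequality. With $z=p_r/\mu_j$ and $q=q_j(r)$, the per-level certificate is at least
\[
\mu_j\big[zq+(1-q)\,g(z)\big],\qquad g(z):=\E[(w_j-e^\tau p_r)_+]/\mu_j\ge(1-e^\tau z)_+
\]
by Jensen. We then show the pointwise bound $zq+(1-q)(1-e^\tau z)_+\ge\frac12+\theta(z)(q-\frac12)-O(\tau)$. The coefficient is $\theta(z)=z-(1-e^\tau z)_+$, clipped to $[-1,1]$. This $\theta$ is nondecreasing, with $\theta=-1$ at $z=0$ and $\theta\ge1$ once $z\ge1$. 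The inequality is linear in $q$; the constant term is $\frac12(z+(1-e^\tau z)_+)\ge \frac12-O(\tau)$, and it is larger for large $z$, which we may drop. At $z\ge1$ the clipping only helps because $q\le1$.

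Boundary levels need separate care. At the top level $p_N\ge B$, where $\theta$ may exceed what is needed, we bound selling at price $\ge B$ using $\Pr[\OPT\ge B]\le\frac12$. We expect this to leave only the one-sided correction, and the cut definition tolerates it since $\theta=+1$ there. At the bottom, the deficit from prices below the relevant scale is absorbed by $e^\tau Lm$, since any revenue shortfall from $p_1\le L$ is at most $e^\tau L$ per item.

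Summing over $r$ with weights $\lambda_j(r)$ gives $\frac12\mu_j-O(\tau)\mu_j+\mu_j\sum_r\lambda_j(r)\theta_r(q_j(r)-\frac12)$, where $\theta_r=\theta(p_r/\mu_j)$ is nondecreasing in $r$ with values in $[-1,1]$. The key combinatorial step is an Abel summation. Write $\theta_r=-1+\sum_{b<r}(\theta_{b+1}-\theta_b)+(\theta_1+1)$, i.e. express $\theta$ as a convex combination (total weight $2$, split as halves) of step functions $s_b(r)=-1$ for $r\le b$ and $+1$ for $r>b$, together with the constants $\pm1$. Against a step function, $\sum_r\lambda_j(r)s_b(r)(q_j(r)-\frac12)$ is exactly the negative of the cut imbalance, so it is $\ge-\Delta_j$. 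The constant steps correspond to the endpoint cuts $b=0$ and $b=N$, which are exactly where the one-sided boundary corrections are needed.

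Summing over $j$ and using $\sum_j\mu_j=\E[\OPT]$ gives the claim. We expect the main obstacle to be this endpoint issue: the cuts $b=0,N$ are not in the definition of $\Delta_j$. We would try to force $\theta_1=-1$ by using $p_1\le L$, which makes $z$ small. For items where $L$ is not small relative to $\mu_j$, we charge the loss to $e^\tau Lm$. We would force $\theta_N=+1$ via $p_N\ge B$.
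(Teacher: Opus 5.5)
Your architecture matches the paper's: an item-wise FGL revenue-plus-utility certificate, then a one-dimensional bound $f_\tau(z,q)\ge\frac12+\theta(z)(q-\frac12)-O(\tau)$ with $\theta$ nondecreasing and valued in $[-1,1]$, then endpoint fixes giving $\theta_1=-1$, $\theta_N=1$, and finally the decomposition into proper cuts. Your exact identity $zq+(1-q)h=\frac{z+h}{2}+(z-h)(q-\frac12)$, with $h=(1-e^\tau z)_+$, is a clean way to get the one-dimensional bound, and your $\theta$ serves as well as $\coef$.

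However, one step in your utility bound fails as written. After replacing ``available to $i$'' by ``unsold at the end'' pointwise, you assert that the ghost profile $(v_i,v'_{-i})$ is independent of the unsold event. You then factor to get $(1-q_j(r))\,\E[(w_j-e^\tau p_r)_+]$. But the end-of-run event depends on $v_i$ itself and on later buyers, and the correlation is adverse. Take a single buyer whose value for the item is $0$ or $10$ with equal probability, at a fixed price $1$. Then $\E[(a_{i,j}-P_j)_+\1\{j\in O_i\}\1\{j\text{ unsold}\}]=0$, whereas the factored expression equals $9/4$.

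The fix, which is exactly the paper's argument, is to reverse the order. Condition on the full price vector $P$. Availability of $j$ at $i$'s arrival depends only on $v_{<i}$ and $P$, so it is independent of $(a_{i,j},\1\{j\in O_i\})$ computed from $(v_i,v'_{-i})$. Factor first, and only then use $\Pr[j\text{ available to }i\mid P]\ge 1-q(P)$. Finally, average over $P$ given $X_j=p_r$ after bounding $P_j\le e^\tau p_r$.

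Your endpoint steps are only sketched, and the precise versions are one-liners.
\begin{itemize}
\item \textbf{Bottom level.} The loss sits in the utility term, not revenue, and no case split on $\mu_j$ versus $L$ is needed. Drop the revenue term: $(1-q)(\mu_j-e^\tau p_1)_+\ge \mu_j(1-q)-e^\tau L$. This is coefficient exactly $-1$ at additive cost $e^\tau L$ per item.
\item \textbf{Top level.} Here $p_N\ge B$ says nothing about $p_N/\mu_j$. What you use is that a sale at actual price at least $B$ forces $\OPT(v)\ge B$, since the buyer pays at most her bundle's value. Hence $q_j(N)\le\frac12$. Then $\theta_N\le 1$ together with $q_j(N)-\frac12\le 0$ gives $\theta_N(q_j(N)-\frac12)\ge q_j(N)-\frac12$, so $\theta_N$ can be raised to $1$ for free.
\end{itemize}
The modified sequence is still nondecreasing from $-1$ to $1$, and your cut decomposition then finishes the proof.
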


The following one-dimensional inequality in \Cref{lem:signed-one-level} is the calculation behind the theorem.  Think of
\(z\) as a normalized price, $e^\tau$ as price perturbation, and \(q\) as the probability that the item sells at that price.  The term
  $(1-q)(1-e^\tau z)_+$
is the utility certificate from the item not selling and  \(zq\) is the revenue certificate from the item selling.   \Cref{lem:signed-one-level}  says that this normalized welfare certificate is \(1/2\) plus a signed sale-probability error whose coefficient increases with the price $z$.

For \(z\ge0\) and \(q\in[0,1]\), define
\[
  f_\tau(z,q):=(1-q)(1-e^\tau z)_+ + zq,
  \qquad
  \coef(z):=\min\{(e^\tau+1)z-1,1\}.
\]

\begin{lemma}[Signed one-level certificate]\label{lem:signed-one-level}
For every \(z\ge0\) and \(q\in[0,1]\), the function
\(\coef(z)\) lies in \([-1,1]\), is nondecreasing in \(z\), and satisfies
  \[
  f_\tau(z,q)
  \ge
  \frac12+\coef(z)\left(q-\frac12\right)-O(\tau).\]
Moreover, if \(q\le1/2\), then
  $f_\tau(z,q)
  \ge
  q-O(\tau)
  =
  \frac12+\left(q-\frac12\right)-O(\tau)$.
\end{lemma}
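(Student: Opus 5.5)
The plan is a direct case analysis in \(z\), using that \(f_\tau(z,q)\) is affine in \(q\) for fixed \(z\). Write \(f_\tau(z,q)=(1-e^\tau z)_+ + q\bigl(z-(1-e^\tau z)_+\bigr)\), so it suffices to compare slopes and values at \(q=1/2\). The properties of \(\coef\) are immediate. It is the minimum of an increasing affine function and the constant \(1\), hence nondecreasing and at most \(1\). At \(z=0\) it equals \(-1\), so by monotonicity it is at least \(-1\) for all \(z\ge0\).

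For the main inequality, I will first reduce to the case \(\tau=0\) up to an additive \(O(\tau)\) error. Since \(e^\tau-1=O(\tau)\) for \(\tau\le1\), we have \(|(1-e^\tau z)_+-(1-z)_+|\le (e^\tau-1)z\). This is \(O(\tau)\) whenever \(z\le1\). When \(z\ge 1\) both positive parts vanish, so there is no error. Similarly \(|\coef(z)-\operatorname{coef}_0(z)|\le (e^\tau-1)z=O(\tau)\) for \(z\le 1\), and both equal \(1\) for \(z\ge1\). Because \(|q-1/2|\le1/2\), it is enough to prove the \(\tau=0\) statement with error \(0\). That statement reads
\[
(1-q)(1-z)_+ + zq \;\ge\; \tfrac12+\min\{2z-1,1\}\bigl(q-\tfrac12\bigr).
\]

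Then I split into two cases.

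\emph{Case \(z\ge1\).} The left side is \(zq\ge q=\tfrac12+(q-\tfrac12)\), which is exactly the right side.

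\emph{Case \(z<1\).} The left side is \((1-q)(1-z)+zq=(1-z)+q(2z-1)\). The right side is \(\tfrac12+(2z-1)q-(2z-1)/2=(1-z)+(2z-1)q\). So equality holds.

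The \(\tau=0\) inequality therefore holds in both cases. Checking that the perturbation errors stay \(O(\tau)\) uniformly in \(z\) is the only mildly delicate point. It works because the \(e^\tau\) factor multiplies \(z\) only inside \((\cdot)_+\) and only inside the min, which is capped at \(1\). Both caps activate once \(z\gtrsim1\). In the transition range \(1/e^\tau\le z\le 1\) the differences are still bounded by \((e^\tau-1)z\le e^\tau-1\).

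For the ``moreover'' part with \(q\le1/2\), I will again argue by cases.

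\emph{Case \(z\ge1\).} Here \(f_\tau\ge zq\ge q\).

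\emph{Case \(z<1\).} Up to \(O(\tau)\) we have \(f_\tau\approx(1-z)+q(2z-1)\), which is affine in \(z\) on \([0,1]\). At \(z=0\) it equals \(1-q\ge q\), and at \(z=1\) it equals \(q\). Hence it is at least \(q\) throughout \([0,1]\), and the \(\tau\)-perturbation costs only \(O(\tau)\) as before.

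Equivalently, \(\coef(z)\le1\) and \(q-\tfrac12\le0\) give \(\coef(z)(q-\tfrac12)\ge q-\tfrac12\). Combining this with the first inequality yields the second immediately, so the cases above are not actually needed.

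I expect no real obstacle. The main care point is making the \(O(\tau)\) error uniform over all \(z\ge0\), which the capping argument above handles.
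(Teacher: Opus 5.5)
Your argument is correct, but it is organized differently from the paper's.

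\textbf{Main inequality.} The paper works directly at the given $\tau$ with three cases:
\begin{itemize}
\item[(i)] $e^\tau z\le1$;
\item[(ii)] $e^\tau z>1$ and $(e^\tau+1)z-1\le1$;
\item[(iii)] $(e^\tau+1)z-1>1$.
\end{itemize}
In each case it computes the gap $f_\tau(z,q)-\tfrac12-\coef(z)(q-\tfrac12)$ explicitly. The transition window $e^{-\tau}<z\le 2/(e^\tau+1)$ needs its own computation there, using the sign of $1-e^\tau z$ together with $q-\tfrac12\le\tfrac12$.

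You instead put all $\tau$-dependence into a single Lipschitz estimate. Both $(\cdot)_+$ and $\min\{\cdot,1\}$ are $1$-Lipschitz, and both caps are active once $z\ge1$. This reduces the claim to $\tau=0$, where the bound is an exact identity for $z<1$ and trivial for $z\ge1$. Your route avoids the transition window and shows that $\operatorname{coef}_0$ is exactly the tight coefficient at $\tau=0$. The cost is a cruder constant in the $O(\tau)$ term, which does not matter here.

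\textbf{Second assertion.} The paper gives a standalone argument via $f_\tau(z,q)\ge\min\{1-q,e^{-\tau}q\}$, using linearity in $z$ on $[0,e^{-\tau}]$. Your closing observation is shorter: since $\coef(z)\le1$ and $q-\tfrac12\le0$, you get $\coef(z)(q-\tfrac12)\ge q-\tfrac12$, so the second bound follows from the first. As you note, this makes your separate case analysis for that part unnecessary.
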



The proof of this lemma involves some case analysis and is deferred to \Cref{sec:missingSignedOneLevel}. We also record the elementary identity used to pass from monotone coefficients to cut errors.

\begin{observation}[Monotone coefficients are mixtures of cuts]
\label{obseveration:monotone-cut-mixture}
Let $N\ge2$ and $-1=\theta_1\le\cdots\le\theta_N=1$. For each $b\in\{1,\ldots,N-1\}$, define
\[
  \theta^b_r:=
  \begin{cases}
    -1, & r\le b,\\
    +1, & r>b,
  \end{cases}
  \qquad
  \omega_b:=\frac{\theta_{b+1}-\theta_b}{2}.
\]
Then $\omega_b\ge0$, $\sum_{b=1}^{N-1}\omega_b=1$, and
$\theta_r=\sum_{b=1}^{N-1}\omega_b\theta^b_r$ for every $r$.
\end{observation}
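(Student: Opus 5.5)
The plan is a direct telescoping computation; no earlier results are needed. There are three claims: nonnegativity of the weights, that they sum to one, and the pointwise identity $\theta_r=\sum_b\omega_b\theta^b_r$. I will check them in that order.

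\emph{Nonnegativity.} Since the sequence is nondecreasing, $\theta_{b+1}\ge\theta_b$, so $\omega_b=(\theta_{b+1}-\theta_b)/2\ge0$ for every $b\in\{1,\ldots,N-1\}$.

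\emph{Normalization.} The sum telescopes:
\[
\sum_{b=1}^{N-1}\omega_b=\frac{\theta_N-\theta_1}{2}=\frac{1-(-1)}{2}=1,
\]
using the endpoint conditions $\theta_1=-1$ and $\theta_N=1$.

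\emph{Pointwise identity.} Fix a level $r\in[N]$. By definition, $\theta^b_r=+1$ exactly when $b<r$, i.e.\ $b\in\{1,\ldots,r-1\}$. Likewise $\theta^b_r=-1$ exactly when $b\ge r$, i.e.\ $b\in\{r,\ldots,N-1\}$. Splitting the sum accordingly and telescoping each part gives
\[
\sum_{b=1}^{N-1}\omega_b\theta^b_r
=\sum_{b=1}^{r-1}\frac{\theta_{b+1}-\theta_b}{2}-\sum_{b=r}^{N-1}\frac{\theta_{b+1}-\theta_b}{2}
=\frac{\theta_r-\theta_1}{2}-\frac{\theta_N-\theta_r}{2}.
\]
Substituting $\theta_1=-1$ and $\theta_N=1$ turns the right-hand side into $\frac{\theta_r+1}{2}-\frac{1-\theta_r}{2}=\theta_r$, as claimed.

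The edge cases $r=1$ and $r=N$ are where the argument could slip, so I will check them directly. For $r=1$ the first sum is empty, leaving $-(\theta_N-\theta_1)/2=-1=\theta_1$. For $r=N$ the second sum is empty, leaving $(\theta_N-\theta_1)/2=1=\theta_N$. Both agree with the general formula, and nothing else is an obstacle.
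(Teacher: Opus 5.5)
Your proof is correct and follows the paper's argument: nonnegativity from monotonicity, telescoping for the normalization, and splitting the sum at $b<r$ versus $b\ge r$ for the pointwise identity. The only cosmetic difference is that the paper uses the normalization to write the split sum as $2\sum_{b<r}\omega_b-1$ and telescopes once, whereas you telescope both pieces separately.
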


\begin{proof}
Nonnegativity follows from monotonicity, and telescoping gives
$\sum_b\omega_b=(\theta_N-\theta_1)/2=1$. For every $r$,
\[
  \sum_b\omega_b\theta^b_r
  ~=~\sum_{b<r}\omega_b-\sum_{b\ge r}\omega_b
  ~=~2\sum_{b<r}\omega_b-1
  ~=~\theta_r-\theta_1-1
  ~=~\theta_r.    \qedhere
\]
\end{proof}

We now complete the proof of \Cref{thm:cut-median}.

\begin{proof}[Proof of \Cref{thm:cut-median}]
We prove an item-wise lower bound and then sum over items.

Fix an item $j$. For each real buyer $i$, draw an independent copy
$\widetilde v_{-i}\sim\prod_{h\ne i}D_h$ of the other buyers'
valuations, independently of the real profile and all price
randomness. Apply the fixed optimal-allocation and supporting-clause
rules from \Cref{sec:model} to $(v_i,\widetilde v_{-i})$. Let $O_i$
be buyer $i$'s bundle in this allocation and let
$a_i=(a_{i,j})_{j\in M}$ be its supporting XOS clause; $a_{i,j}$
is its coefficient for item $j$. These independent copies are used
only in the analysis, not by the mechanism. Define
\[
  s_j:=\sum_i\E[a_{i,j}\1\{j\in O_i\}],
  \qquad
  y_j:=\sum_i\Prb[j\in O_i].
\]
For each $i$, the profile $(v_i,\widetilde v_{-i})$ has distribution
$D$, so $s_j=\mu_j$. Also $\sum_js_j=\E[\OPT]$ and
\[
  y_j ~=~\E_{v\sim D}\!\left[
    \sum_i\1\{j\in O_i(v)\}
  \right] ~\le~ 1.
\]

\paragraph{FGL certificate for welfare.}
Condition on the base price \(X_j=x\) and write \(q=q_j(x)\), where $q_j(x)$ is as defined in \eqref{eq:defnLambdaQ}.  The revenue from item
\(j\) is at least \(xq\), because the actual price is always at least the base price.

We next lower-bound the utility contribution certified by item \(j\).  Condition on the
entire realized price vector \(P\).  Let \(q(P)\) be the probability, over the fresh
valuation profile, that item \(j\) sells under \(P\), and let \(H_i\) be the event that
\(j\) is available when buyer \(i\) arrives.  Conditional on \(P\), the event \(H_i\)
depends only on earlier real buyers, while \((O_i,a_i)\) depends on \(v_i\) and an
independent copy of the other buyers' valuations.  Hence \(H_i\) is independent of
\((a_{i,j},\1\{j\in O_i\})\) conditional on \(P\).  Moreover,
\[
  \Prb[H_i\mid P]\ge 1-q(P),
\]
since if \(j\) remains unsold at the end, then it was available to every buyer.

If \(j\) is available, buyer \(i\) can buy the available subset of \(O_i\) consisting of
items with positive supported surplus.  Since \(P_j\le e^\tau x\), the item-\(j\) part of
this utility is at least
\[
  (a_{i,j}-e^\tau x)_+\1\{j\in O_i\}.
\]
Averaging over \(P\) conditional on \(X_j=x\), and using
\(q=\E[q(P)\mid X_j=x]\), the utility certified by item \(j\) is at least
\[
  (1-q)
  \sum_i\E[(a_{i,j}-e^\tau x)_+\1\{j\in O_i\}]
  \ge
  (1-q)(s_j-e^\tau xy_j)_+ .
\]
Thus, conditional on \(X_j=x\), item \(j\)'s utility-plus-revenue certificate is at least
\begin{equation}\label{eq:item-certificate}
  F_j(x,q):=(1-q)(s_j-e^\tau xy_j)_+ + xq .
\end{equation}

\paragraph{Applying \Cref{lem:signed-one-level} for monotone coefficients.}
If \(s_j=0\), the item-wise lower bound below is trivial, since the right-hand side is at
most the nonpositive bottom-loss term.  Hence assume \(s_j>0\).  For a grid level \(r\),
write
 $ z_r:=\frac{p_ry_j}{s_j}.$
Because \(y_j\le1\), we have \(p_r/s_j\ge z_r\).  Therefore, for every interior level
\(1<r<N\),
\[
  F_j(p_r,q_j(r))
  \ge
  s_j f_\tau(z_r,q_j(r)).
\]
By \Cref{lem:signed-one-level},
\[
  F_j(p_r,q_j(r))
  \ge
  s_j\left(
    \frac12-O(\tau)
    +
    \coef(z_r)\left(q_j(r)-\frac12\right)
  \right),
\]
where coefficient \(\coef(z_r)\) is nondecreasing in \(r\).

At the bottom level, we use only that \(p_1\le L\).  For every \(q\),
\[
  F_j(p_1,q)
  ~\ge~
  (1-q)(s_j-e^\tau p_1y_j) ~\ge~
  s_j(1-q)-e^\tau Ly_j  
  ~=~
  s_j\left(\frac12-\left(q-\frac12\right)\right)-e^\tau Ly_j .
\]
Thus the bottom level has coefficient \(-1\), up to the additive loss \(e^\tau Ly_j\).

At the top level, a sale implies \(\OPT\ge B\).  Indeed, if item \(j\) sells when
\(X_j=p_N\), then its actual price is at least \(p_N\ge B\); the buyer purchases a bundle
with nonnegative utility, so the realized optimum is at least the price paid for that
bundle, and hence at least \(B\).  Since the prices are independent of the fresh valuation
profile,
  $q_j(N)\le \Prb[\OPT\ge B]\le\frac12 $.
The second part of \Cref{lem:signed-one-level} then gives
\[
  F_j(p_N,q_j(N))
  \ge
  s_j\left(
    \frac12+\left(q_j(N)-\frac12\right)-O(\tau)
  \right).
\]
Thus the top level has coefficient \(+1\).

Combining the interior and boundary bounds, there are coefficients
  $-1=\theta_1\le\theta_2\le\cdots\le\theta_N=1$
such that
\begin{equation}\label{eq:item-level-bound}
  F_j(p_r,q_j(r))
  \ge
  s_j\left(
    \frac12-O(\tau)+\theta_r\left(q_j(r)-\frac12\right)
  \right)
  -
  e^\tau Ly_j\1\{r=1\}.
\end{equation}

\paragraph{From monotone coefficients to cut-calibration.} 
Average \eqref{eq:item-level-bound} over the base level \(r\) with weights
\(\lambda_j(r)\), and set
  $e_r:=\lambda_j(r)\left(q_j(r)-\frac12\right)$.
By \Cref{obseveration:monotone-cut-mixture}, write $\theta=\sum_b\omega_b\theta^b$ with $\omega_b\ge0$ and $\sum_b\omega_b=1$. Then
\[
  \sum_r\theta_re_r
  ~=~\sum_b\omega_b\left(-\sum_{r\le b}e_r+\sum_{r>b}e_r\right)
  ~\ge~ -\sum_b\omega_b\Delta_j
  ~=~ -\Delta_j,
\]
where the inequality is by the definition of $\Delta_j$.


Therefore the expected certificate from item \(j\) is at least
\[
  \left(\frac12-O(\tau)-\Delta_j\right)s_j-e^\tau Ly_j
  =
  \left(\frac12-O(\tau)-\Delta_j\right)\mu_j-e^\tau Ly_j .
\]
Summing over items and using
  $\sum_j\mu_j=\E[\OPT]$, along with 
  $\sum_j y_j\le m$,
proves the theorem.
\end{proof}

\subsection{Proof of the signed one-level certificate} \label{sec:missingSignedOneLevel}

\begin{proof}[Proof of \Cref{lem:signed-one-level}]
The range and monotonicity of \(\coef\) are immediate.

First suppose \(e^\tau z\le1\).  Then
\[
\begin{aligned}
  f_\tau(z,q)
  &=(1-q)(1-e^\tau z)+zq  \\
  &=
  \frac12-\frac{(e^\tau-1)z}{2}
  +
  \big((e^\tau+1)z-1\big)\left(q-\frac12\right).
\end{aligned}
\]
Since \((e^\tau-1)z/2\le e^\tau-1=O(\tau)\), this gives the desired bound in this case.

Next suppose \(e^\tau z>1\) and \((e^\tau+1)z-1\le1\).  Then
\(f_\tau(z,q)=zq\) and \(z\le 2/(e^\tau+1)\).  Writing
\(e=q-1/2\), we have
\[
  zq-
  \left(\frac12+\big((e^\tau+1)z-1\big)e\right)
  =
  \frac{z-1}{2}+e(1-e^\tau z).
\]
Since \(1-e^\tau z<0\) and \(e\le1/2\), this is at least
\[
  \frac{z-1}{2}+\frac{1-e^\tau z}{2}
  =
  -\frac{(e^\tau-1)z}{2}
  \ge -O(\tau).
\]

Finally suppose \((e^\tau+1)z-1>1\).  Then \(\coef(z)=1\) and
\(z>2/(e^\tau+1)=1-O(\tau)\).  Also \(e^\tau z>1\), so
\(f_\tau(z,q)=zq\ge q-O(\tau)\).  This proves the signed bound in the final case.

For the last assertion, observe that
\[
  f_\tau(z,q)\ge \min\{1-q,e^{-\tau}q\}.
\]
Indeed, on \(0\le z\le e^{-\tau}\), the function is linear in \(z\) with endpoint values
\(1-q\) and \(e^{-\tau}q\), and for \(z\ge e^{-\tau}\) we have
\(f_\tau(z,q)=zq\ge e^{-\tau}q\).  If \(q\le1/2\), then
\(1-q\ge q\) and \(e^{-\tau}q\ge q-O(\tau)\).
\end{proof}

\section{Learning Cut-Calibrated Median Prices}
\label{sec:learning}

This section proves the learning guarantee used in the main theorem. For a fixed price grid, the statistical task is to use $T$ independent valuation-profile samples from the unknown product distribution $D$ to construct a distribution over base-price vectors with small expected cut-calibration error for every item. A \emph{training round} uses one such sample; it is not a real buyer's arrival. Algorithm~\ref{alg:main-training} gives the full implementation.

The mechanism runs one learner for every item, using the same sampled profile for all learners in a round. We fix an item $j$ and suppress $j$ from the notation whenever possible. In round $t$, before the sample and auxiliary price randomness are drawn, the learner chooses a distribution $\pi_t$ over grid levels. For this item, the training environment consists of that sampled profile and independently drawn prices for the other items, as specified in Algorithm~\ref{alg:main-training}. Holding this environment fixed and averaging over item $j$'s own perturbation gives the sale curve $y_t:[N]\to[0,1]$ defined in \Cref{sec:sale-curves}.

Let $\calF_t$ be the joint history of all learners before the round-$t$ sample and auxiliary prices are drawn, and define
\[
  Q_t(r):=\E[y_t(r)\mid\calF_t].
\]
The learner observes $y_t$ (or its oracle approximation), not $Q_t$; the latter is used only in the analysis. After $T$ rounds, deployment chooses one common round $U\sim\Unif([T])$ and draws each item's base level independently from its stored distribution for round $U$, with fresh perturbations and an independent valuation profile. Conditional on the completed training transcript, let $\Delta_j$ be the cut-calibration error of this final price distribution. Thus $\Delta_j$ is a random variable determined by the training transcript, and the expectation below is over that transcript.



\begin{theorem}[Learning cut calibration]\label{thm:learn-cut-calibration}
Assume that every observed sale curve satisfies
\[
  y_t(r)-y_t(r+1)\le \alpha
  \qquad\text{for all }t\text{ and }r<N.
\]
Using the path cut-regret learner of \Cref{thm:cut-regret},
\[
  \E[\Delta_j]
  \le
  \alpha+
  O\!\left(\sqrt{\frac{\log(2N)}{T}}\right).
\]
In particular, for the smoothed price grid in \Cref{sec:priceGrid}, we have
  $\alpha=\frac{\log(1+\beta)}{\tau}$.
Thus choosing \(\tau=\Theta(\eps)\), \(\beta=\Theta(\eps^2)\), and
\(T=\Theta(\eps^{-2}\log(2N))\) gives
  $\E[\Delta_j]\le O(\eps)$.
\end{theorem}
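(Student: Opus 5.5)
We reduce the cut-calibration error of the deployed distribution to the cut regret of the learner, and then control the gap between observed curves $y_t$ and their conditional means $Q_t$ with a martingale bound.

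\textbf{Step 1: express $\Delta_j$ through training quantities.} Under deployment, item $j$'s base level is drawn from $\pi_U$ and the other items' levels from $\pi_{j',U}$. Conditional on $U=t$ and the transcript, the sale probability at level $r$ is exactly $Q_t(r)$. The reason is that the round-$t$ training environment draws the other prices from the same round-$t$ distributions, with fresh perturbations and a fresh independent profile. The difference between deployment coordinates being independent and training prices being drawn independently does not matter, since only the marginal of the other prices matters. Hence $\lambda_j(r)=\frac1T\sum_t\pi_t(r)$ and $\lambda_j(r)q_j(r)=\frac1T\sum_t\pi_t(r)Q_t(r)$. For a cut $b$, let $\sigma_b(r)=+1$ if $r\le b$ and $-1$ otherwise. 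Then
\[
\Delta_j=\Big[\max_b \tfrac1T\sum_t\sum_r\sigma_b(r)\pi_t(r)\big(Q_t(r)-\tfrac12\big)\Big]_+ .
\]

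\textbf{Step 2: the cut-regret game.} Consider the per-round payoff $g_t(b):=\sum_r\sigma_b(r)\pi_t(r)(y_t(r)-\frac12)$. It is a swap-type deviation quantity: it measures the gain from moving mass at levels $\le b$ up by one and mass above $b$ down by one, judged against the observed sale curve. Using the path cut-regret learner of \Cref{thm:cut-regret}, which I assume is stated to give $\max_b\sum_t g_t(b)\le \alpha T+O(\sqrt{T\log(2N)})$ whenever adjacent differences of $y_t$ are at most $\alpha$, this yields the bound with $y_t$ in place of $Q_t$. That theorem is where the $\alpha$ term originates. The smoothness ensures that the comparator one step away has loss within $\alpha$, so each $g_t(b)$ compares an action to an actually available local deviation. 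The $O(\sqrt{T\log N})$ term comes from a stationary-distribution (Blum--Mansour style) reduction against only $N-1$ cut experts with payoffs in $[-1,1]$.

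\textbf{Step 3: martingale transfer from $y_t$ to $Q_t$.} For each $b$, set $Z_t^b=\sum_r\sigma_b(r)\pi_t(r)(Q_t(r)-y_t(r))$. Since $\pi_t$ is $\calF_t$-measurable and $\E[y_t\mid\calF_t]=Q_t$, the sequence $Z^b_t$ is a martingale difference sequence bounded by $1$. Azuma's inequality with a union bound over $N-1$ cuts, integrated into an expectation bound for the maximum, gives
\[
\E\Big[\max_b\tfrac1T\textstyle\sum_t Z^b_t\Big]\le O\big(\sqrt{\log(2N)/T}\big).
\]
Combining this with Step 2, and using that the positive part is monotone and subadditive, gives the claimed bound on $\E[\Delta_j]$.

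\textbf{Step 4: parameters.} For the smoothed grid, fix the environment. The sale indicator of $j$ is monotone in its posted price with some critical value $\Theta$, so $y_t(r)=\Prb[p_re^{U_j}<\Theta]$ with $U_j\sim\Unif[0,\tau]$. Moving from $p_r$ to $p_{r+1}$ shifts $\log$ price by $\log(1+\beta)$, which changes this probability by at most $\log(1+\beta)/\tau$. With $\beta=\Theta(\eps^2)$ and $\tau=\Theta(\eps)$ this is $O(\eps)$, and $T=\Theta(\eps^{-2}\log 2N)$ makes the square-root term $O(\eps)$.

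The main obstacle is Step 1's identification of deployment sale probabilities with $Q_t$, in particular checking that the conditioning and independence structure of the training environment matches deployment. The cut-regret guarantee itself I take from \Cref{thm:cut-regret}.
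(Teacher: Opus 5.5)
Your Steps 1, 3 and 4 match the paper: identifying $\Delta_j$ with the fresh statistic via the uniform round $U$ and Bayes, Azuma plus a union bound over the $N-1$ cuts, and the $\log(1+\beta)/\tau$ smoothing estimate. The gap is Step 2, which is the heart of the argument. There you assume \Cref{thm:cut-regret} says something it does not.

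That theorem bounds $\max_b\sum_t\sum_r\pi_t(r)\bigl(\ell_t(r)-\ell_t(\phi_b(r))\bigr)$ by $O(\sqrt{T\log(2N)})$ for \emph{scalar} losses $\ell_t$ whose adjacent differences are at most $1$. It contains no $\alpha$ and says nothing about sale curves. You never say which losses the learner is fed. Moreover, your $g_t(b)$ is not the $\phi_b$-gain of any scalar loss:
\begin{itemize}
\item matching the up-move from $r$ would require $\ell_t(r)-\ell_t(r+1)=y_t(r)-\tfrac12$;
\item matching the down-move from $r+1$ would require $\ell_t(r)-\ell_t(r+1)=y_t(r+1)-\tfrac12$.
\end{itemize}
These are inconsistent unless $y_t(r)=y_t(r+1)$. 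So the $\alpha$ term does not originate in the regret theorem; it is exactly the price of this mismatch. Your remark that adjacent comparators have losses within $\alpha$ is also not right: adjacent losses can differ by up to $1$, and only the sale probabilities differ by at most $\alpha$.

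The missing idea is the hop-distance loss $\ell_t(r)=\E_R[|r-C_R|]$, built from the critical level. It satisfies $\ell_t(r)-\ell_t(r+1)=2y_t(r+1)-1$ and $\ell_t(r)-\ell_t(r-1)=1-2y_t(r)$. Its cut-$b$ gain is therefore $\sum_{r\le b}\pi_t(r)(2y_t(r+1)-1)+\sum_{r>b}\pi_t(r)(1-2y_t(r))$. Using $y_t(r+1)\ge y_t(r)-\alpha$, this is at least $2g_t(b)-2\alpha$. That gives $\widehat\Delta\le\alpha+\CutReg_T/(2T)$, the first part of Lemma~\ref{lem:cut-calibration}. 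Only then does \Cref{thm:cut-regret} supply the $O(\sqrt{\log(2N)/T})$ term.

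Alternatively, you could prove your assumed bound directly by running multiplicative weights on your own gains $g_t(b)$. The induced chain $K_t(r,a)=\Pr_{b\sim\nu_t}[\phi_b(r)=a]$ is nearest-neighbour on the path, so any stationary $\pi_t$ satisfies $\pi_t(r)K_t(r,r+1)=\pi_t(r+1)K_t(r+1,r)$. The on-policy term $\sum_b\nu_t(b)g_t(b)$ then equals $\sum_{r<N}\pi_t(r)K_t(r,r+1)\bigl(y_t(r)-y_t(r+1)\bigr)\le\alpha$, rather than $0$.

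Either fix closes the gap. As written, Step 2 assumes the crux rather than proving it.
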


The main primitive is a cut-regret learner on the path of price levels.  After presenting
this primitive, we show that smoothed posted-price feedback produces exactly the losses
needed by the learner.

\subsection{Cut regret on the path}

For each cut \(b\in\{1,\ldots,N-1\}\), define the local map
\[
  \phi_b(r)=
  \begin{cases}
    r+1, & r\le b,\\
    r-1, & r>b.
  \end{cases}
\]
This map raises every level weakly below the cut and lowers every level above it.  These
are the only deviations we need because the welfare loss in \Cref{thm:cut-median} is
controlled by cut discrepancies.

\begin{theorem}[Path cut regret]\label{thm:cut-regret}
There is a full-information learner with the following guarantee.  In round
\(t\), the learner chooses \(\pi_t\) before \(\ell_t\) is revealed.  For every adaptive
sequence of scalar losses \(\ell_t:[N]\to\R\) with
 $ |\ell_t(r)-\ell_t(r')|\le 1$
   whenever $|r-r'|=1$,
\[
  \left(
  \max_{1\le b<N}
  \sum_{t=1}^T\sum_{r=1}^N
  \pi_t(r)\bigl(\ell_t(r)-\ell_t(\phi_b(r))\bigr)
  \right)_+
  \le
  O\!\left(\sqrt{T\log(2N)}\right).
\]
\end{theorem}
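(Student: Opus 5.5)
We use the Blum--Mansour / stationary-distribution reduction from $\Phi$-regret to external regret, applied to the $N-1$ cut maps together with the identity. Since each $\phi_b$ sends level $r$ to $r\pm1$, the one-round deviation gain $\ell_t(r)-\ell_t(\phi_b(r))$ lies in $[-1,1]$ by the Lipschitz hypothesis, even though $\ell_t$ itself may have range up to $N$. The whole argument therefore only needs to see losses through these bounded differences.

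\emph{Construction.} Run a single Hedge (multiplicative weights) learner over the $N$ experts $\{\mathrm{id}\}\cup\{\phi_b:1\le b<N\}$. In round $t$ it outputs weights $w_t$ on the experts. Form the Markov kernel $K_t:=w_t(\mathrm{id})\,I+\sum_b w_t(b)\,\Phi_b$ on $[N]$, where $\Phi_b$ is the deterministic transition $r\mapsto\phi_b(r)$. Let $\pi_t$ be a stationary distribution of $K_t$, so $\pi_tK_t=\pi_t$. One exists by Perron--Frobenius and is computable in $\poly(N)$ time by solving the linear system; since $K_t$ is supported on a path, it is essentially a birth--death chain and can be solved in $O(N)$ time.

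\emph{Expert losses.} After $\ell_t$ is revealed, charge expert $\phi$ the loss
\[
g_t(\phi):=\sum_r\pi_t(r)\bigl(\ell_t(\phi(r))-\ell_t(r)\bigr)\in[-1,1].
\]
This quantity uses only adjacent differences, so the scale is $1$.

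\emph{Stationarity identity.} The mixture of expert losses vanishes:
\[
\sum_\phi w_t(\phi)g_t(\phi)=\sum_{r}(\pi_tK_t)(r)\ell_t(r)-\sum_r\pi_t(r)\ell_t(r)=0.
\]

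\emph{Regret bound.} Hedge with $N$ experts and losses in $[-1,1]$ has external regret $O(\sqrt{T\log N})$ against every fixed expert, including under adaptive adversaries, since Hedge is deterministic given the past. Hence for every $b$,
\[
-\sum_t g_t(\phi_b)=\sum_t\sum_r\pi_t(r)\bigl(\ell_t(r)-\ell_t(\phi_b(r))\bigr)\le O(\sqrt{T\log(2N)}).
\]
Comparing against $\mathrm{id}$, which has $g_t(\mathrm{id})=0$, shows the positive part is also covered.

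\emph{Boundary issue.} This is the step I expect to need the most care. The formula for $\phi_b$ gives $\phi_b(N)=N-1$ and $\phi_b(1)=2$, so these stay in range, provided $N\ge2$; the case $N=1$ has no cuts and is trivial. We must then check that nothing in the Lipschitz bound is used outside adjacent pairs. That holds, because $|\phi_b(r)-r|=1$ always.

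The remaining points are routine. The max over $b$ is taken after summing over $t$, which is fine since the bound holds for each $b$ uniformly. The bound also survives $\pi_t$ being chosen before $\ell_t$, since $w_t$ depends only on $g_1,\dots,g_{t-1}$.
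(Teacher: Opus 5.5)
Your proposal is correct and is essentially the paper's proof: run Hedge over the cut maps, play a stationary distribution $\pi_t$ of the induced Markov kernel so that the on-policy term vanishes, and use $|\phi_b(r)-r|=1$ to keep every per-round gain in $[-1,1]$. Adding the identity expert is harmless but unnecessary. It only mixes $I$ into the kernel, which leaves its stationary distributions unchanged, and the right-hand side is already nonnegative, so the positive part needs no separate argument.
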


\begin{proof} The proof is inspired from swap regret to external regret reductions in \cite{BM-JMLR07,GGM-ICML08}.

Run deterministic multiplicative weights over the $N-1$ cut maps. 
In round $t$, let $\nu_t$ be its distribution over cuts.
This induces a Markov chain $K_t$
on \([N]\) by
\[
  K_t(r,a):=\Pr_{b\sim\nu_t}[\phi_b(r)=a].
\]
Let \(\pi_t\) be any stationary distribution of \(K_t\), which exists for every finite
Markov chain.

After observing \(\ell_t\), define the gain of cut \(b\) by
\[
  g_t(b):=\sum_r\pi_t(r)\bigl(\ell_t(r)-\ell_t(\phi_b(r))\bigr).
\]
Because \(\phi_b(r)\) is adjacent to \(r\) and adjacent loss differences are bounded by
one, \(g_t(b)\in[-1,1]\).  The external-regret guarantee gives
\[
  \max_b\sum_tg_t(b)
  \le
  \sum_t\sum_b\nu_t(b)g_t(b)
  +O\!\left(\sqrt{T\log(2N)}\right)
\]
for every realized sequence.

The on-policy term cancels by stationarity:
\[
\begin{aligned}
  \sum_b\nu_t(b)g_t(b)
  ~=~
  \sum_r\pi_t(r)\ell_t(r)
  -
  \sum_{r,a}\pi_t(r)K_t(r,a)\ell_t(a) 
  ~=~
  \sum_r\pi_t(r)\ell_t(r)-\sum_a\pi_t(a)\ell_t(a)
  ~=~ 0.
\end{aligned}
\]
Thus the positive part of the maximum cut gain is at most
$O(\sqrt{T\log(2N)})$ on every realized sequence, proving the theorem.
\end{proof}

\subsection{Sale curves and hop-distance losses} \label{sec:sale-curves}

We next explain why a single sampled valuation profile gives the full-information
one-dimensional feedback required by \Cref{thm:cut-regret}.

\begin{observation}[Price monotonicity]\label{lem:own-price}
Fix a valuation profile, a buyer order, all item prices except \(p_j\), and the fixed
global tie-breaking rule.  The final sold indicator of item \(j\) in the sequential
posted-price mechanism is nonincreasing as a function of \(p_j\).
\end{observation}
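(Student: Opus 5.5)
We prove Observation~\ref{lem:own-price} by fixing everything except $p_j$. Then we compare two executions, one with price $p_j$ and one with a higher price $p_j'>p_j$. We want to show that if $j$ is unsold at the end of the run with price $p_j$, it is also unsold at the end of the run with price $p_j'$. Equivalently, a sale at $p_j'$ implies a sale at $p_j$.

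The natural approach is induction on arrivals. While $j$ is unsold in both runs, we maintain that the two remaining sets coincide. Both runs start with $R=M$. Consider a buyer $i$ facing the same remaining set $R\ni j$ in both runs. Her utility for any bundle $S\subseteq R$ with $j\notin S$ is identical under the two price vectors. For bundles containing $j$, her utility drops by exactly $p_j'-p_j$ when moving to $p_j'$. Let $S$ be her chosen bundle under $p_j$.

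Case 1: $j\notin S$. Then $S$ still maximizes utility under $p_j'$. Every bundle without $j$ has unchanged utility, and every bundle with $j$ has weakly lower utility. The delicate point is the tie-breaking. The rule is a fixed global order over bundles, independent of prices. The tie set under $p_j'$ consists of the maximizers under $p_j$ that avoid $j$, possibly together with $j$-free bundles that were previously beaten only by $j$-bundles. Since $S$ was already the tie-break winner among a superset of its $j$-free competitors, a careful check is needed here. Under $p_j'$, new $j$-free maximizers can appear only if the old maximum value was attained only by $j$-bundles, which contradicts $j\notin S$. So the tie set under $p_j'$ is exactly the $j$-free part of the old tie set, and $S$ remains first. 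Hence the same bundle is chosen, and the remaining sets stay equal.

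Case 2: $j\in S$. Then $j$ is sold under $p_j$ at this step. The implication "sold at $p_j'$ implies sold at $p_j$" therefore holds, provided $j$ was not already sold earlier under $p_j'$. It could not have been, because before this step the two runs agreed.

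Combining the cases, the runs proceed identically until the first buyer who takes $j$ under $p_j$, if any. If no buyer takes $j$ under $p_j$, then by Case 1 at every step no buyer takes $j$ under $p_j'$ either, and $j$ is unsold in both runs. This gives the contrapositive, so the sold indicator is nonincreasing in $p_j$.

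The main obstacle is the tie-breaking argument in Case 1. It must be phrased so that "same chosen bundle" holds exactly, rather than merely "same utility".
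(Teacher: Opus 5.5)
Your proof is correct and is essentially the paper's argument: you couple the two executions, induct over arrivals while $j$ is unsold, and use that $j$-free bundles keep their utility while bundles containing $j$ strictly lose utility at the higher price. The one difference is direction: you induct along the low-price execution (the contrapositive), so your single Case~1 claim, including the explicit tie-set check that the paper only asserts, covers both the paper's induction step and its final step at the first high-price buyer who takes $j$. The proviso in your Case~2 is unnecessary, since a sale at $p_j$ already makes the implication true.
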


The proof of this observation is standard, so it's deferred to \Cref{app:own-price}.  Now fix a training round \(t\), the
sampled valuation profile, and all actual prices of items other than \(j\).  The only
remaining variable is item \(j\)'s own actual price.  By \Cref{lem:own-price}, there is a \emph{critical price} \(\Theta\in[0,\infty]\) such that \(j\) sells below \(\Theta\) and does not sell above \(\Theta\); the value at the threshold itself is irrelevant because the
perturbation is continuous.

With base level \(p_r\) and perturbation \(R=\exp(U)\), \(U\sim\Unif[0,\tau]\), define
\[
  y_t(r):=\Pr_R[p_rR\le\Theta].
\]
This is the perturbation-smoothed sale probability in training environment \(t\).  Since
\(p_{r+1}/p_r=1+\beta\),
\begin{equation}\label{eq:smooth}
  y_t(r)-y_t(r+1)
  ~\le~
  \alpha ~:=~ \frac{\log(1+\beta)}{\tau}.
\end{equation}
Indeed, the two sale events differ only when \(\log R\) lies in an interval of length
\(\log(1+\beta)\) inside a perturbation window of length \(\tau\).

It remains to convert this sale curve into scalar losses on the path.  For each
perturbation value \(R\), let  \emph{critical level} 
\[
  C_R:=\max\{r:p_rR\le\Theta\},
\]
with \(C_R=0\) if no level sells and \(C_R=N\) if all levels sell.  Then
  $y_t(r)=\Pr_R[C_R\ge r]$.

Define the \emph{hop-distance} loss
\[
  \ell_t(r):=\E_R[|r-C_R|].
\]
These losses have adjacent differences bounded by one.  Moreover, adjacent improvements
recover the sale probabilities:
\begin{equation}\label{eq:hop-identities}
  \ell_t(r)-\ell_t(r+1)=2y_t(r+1)-1
  \quad (r<N),
  \qquad
  \ell_t(r)-\ell_t(r-1)=1-2y_t(r)
  \quad (r>1).
\end{equation}

\subsection{From cut regret to fresh calibration}

We now show that cut regret on the hop-distance losses implies cut calibration of the
conditional, or fresh, sale curves.  Define the empirical cut error
\begin{equation}\label{eq:emp-delta}
  \widehat\Delta
  :=
  \left[
  \max_{1\le b<N}
  \left(
  \frac1T\sum_{t=1}^T\sum_{r\le b}\pi_t(r)\left(y_t(r)-\frac12\right)
  -
  \frac1T\sum_{t=1}^T\sum_{r>b}\pi_t(r)\left(y_t(r)-\frac12\right)
  \right)
  \right]_+ .
\end{equation}
Define the fresh cut statistic \(\Delta^Q\) by the same expression with \(Q_t\) replacing
\(y_t\).  Finally set
\[
  \CutReg_T:=
  \left(
  \max_{1\le b<N}\sum_{t=1}^T\sum_{r=1}^N
  \pi_t(r)\bigl(\ell_t(r)-\ell_t(\phi_b(r))\bigr)
  \right)_+ .
\]

\begin{lemma}[Cut regret to fresh calibration]
\label{lem:cut-calibration}\label{lem:cut-concentration}
For any adapted sequence of smoothed sale curves satisfying \eqref{eq:smooth},   
  $\widehat\Delta
  \le
  \alpha+\frac{\CutReg_T}{2T}$. 
Moreover,
  $\E[\Delta^Q]
  \le
  \E[\widehat\Delta]
  +
  O\!\left(\sqrt{\frac{\log(2N)}{T}}\right)$.
Consequently,
\[
  \E[\widehat\Delta]
  \le
  \alpha+\frac{\E[\CutReg_T]}{2T} \qquad \text{and}
  \qquad
  \E[\Delta^Q]
  \le
  \alpha+
  \frac{\E[\CutReg_T]}{2T}
  +
  O\!\left(\sqrt{\frac{\log(2N)}{T}}\right).
\]
\end{lemma}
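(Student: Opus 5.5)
The plan has two parts. First, compute the cut regret of each $\phi_b$ exactly in terms of the sale curves, using the hop identities \eqref{eq:hop-identities}. Second, pass from the empirical curves $y_t$ to the conditional curves $Q_t$ by martingale concentration, with a union bound over the $N-1$ cuts.

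\emph{Step 1 (deterministic bound).} Fix a cut $b$ and a round $t$.

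For $r\le b$ with $r<N$, we have $\phi_b(r)=r+1$, so $\ell_t(r)-\ell_t(\phi_b(r))=2y_t(r+1)-1$. For $r>b$ (so $r>1$), $\phi_b(r)=r-1$, so $\ell_t(r)-\ell_t(\phi_b(r))=1-2y_t(r)$. Since $b<N$, every $r\le b$ has $r+1\le N$, and no boundary case arises.

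Now compare the gain with the target signed cut term. For $r>b$ the gain equals exactly $-2(y_t(r)-\tfrac12)$. For $r\le b$ we use smoothness \eqref{eq:smooth}, $y_t(r+1)\ge y_t(r)-\alpha$, which gives
\[
2y_t(r+1)-1\ge 2\left(y_t(r)-\tfrac12\right)-2\alpha .
\]
Hence
\[
\sum_r\pi_t(r)\bigl(\ell_t(r)-\ell_t(\phi_b(r))\bigr)\ge 2\Bigl(\sum_{r\le b}\pi_t(r)\left(y_t(r)-\tfrac12\right)-\sum_{r>b}\pi_t(r)\left(y_t(r)-\tfrac12\right)\Bigr)-2\alpha .
\]
Sum over $t$, divide by $2T$, maximize over $b$, and take positive parts. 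The left side is at most $\CutReg_T/(2T)$, which gives $\widehat\Delta\le\alpha+\CutReg_T/(2T)$.

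\emph{Step 2 (fresh calibration).} For each cut $b$, define
\[
Z_t^b:=\sum_{r\le b}\pi_t(r)\bigl(y_t(r)-Q_t(r)\bigr)-\sum_{r>b}\pi_t(r)\bigl(y_t(r)-Q_t(r)\bigr).
\]
We have $\pi_t$ is $\calF_t$-measurable and $Q_t(r)=\E[y_t(r)\mid\calF_t]$, so $Z^b_t$ is a martingale difference sequence with $|Z^b_t|\le 1$.

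The fresh cut statistic for $b$ differs from the empirical one by $\frac1T\sum_t Z^b_t$. Since $[\cdot]_+$ and $\max$ are $1$-Lipschitz,
\[
\Delta^Q\le\widehat\Delta+\max_b\Bigl|\tfrac1T\textstyle\sum_t Z^b_t\Bigr| .
\]
Azuma--Hoeffding applied to each of the $N-1$ cuts, followed by a union bound (or the standard maximal-inequality bound on the expectation of a maximum of subgaussian variables), gives
\[
\E\max_b\Bigl|\tfrac1T\textstyle\sum_t Z^b_t\Bigr|=O\!\left(\sqrt{\log(2N)/T}\right).
\]
Taking expectations in Step 1 and combining the two bounds yields the consequences.

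The main subtlety is Step 2's adaptedness. We must check that $\pi_t$ is chosen before round $t$'s sample and auxiliary prices are drawn, so that $\pi_t$ is $\calF_t$-measurable. This holds by the algorithm's design, since all learners' distributions are stored before the sample and auxiliary prices are drawn. Given that measurability, the concentration step is routine.
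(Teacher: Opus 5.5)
Your proposal is correct and matches the paper's proof essentially step for step. The hop identities together with the smoothing inequality $y_t(r+1)\ge y_t(r)-\alpha$ give the deterministic bound $\widehat\Delta\le\alpha+\CutReg_T/(2T)$. Azuma's inequality for each cut, a union bound over the $N-1$ cuts, and the $1$-Lipschitzness of $x\mapsto[\max_b x_b]_+$ in $\ell_\infty$ then give the fresh-calibration bound; you handle the positive part by monotonicity rather than by the paper's case split on the sign of each cut gain, which is an equivalent minor variation.
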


\begin{proof}
Fix a cut \(b\).  Using \eqref{eq:hop-identities}, the cumulative gain of this cut is
\[
  G_b=
  \sum_{t=1}^T\left(
    \sum_{r\le b}\pi_t(r)(2y_t(r+1)-1)
    +
    \sum_{r>b}\pi_t(r)(1-2y_t(r))
  \right).
\]
By the smoothing inequality \eqref{eq:smooth}, \(y_t(r+1)\ge y_t(r)-\alpha\).  Hence
\[
  G_b
  \ge
  2\sum_{t=1}^T\left(
    \sum_{r\le b}\pi_t(r)\left(y_t(r)-\frac12\right)
    -
    \sum_{r>b}\pi_t(r)\left(y_t(r)-\frac12\right)
  \right)-2\alpha T.
\]
Let \(C_b\) be the empirical cut statistic inside \eqref{eq:emp-delta}.  The last
inequality gives
  $C_b\le \alpha+\frac{G_b}{2T}$.
If \(G_b\le0\), then \(C_b\le\alpha\).  If \(G_b>0\), then
\(G_b\le\CutReg_T\).  Therefore, for every cut \(b\), we have
  $C_b\le \alpha+\frac{\CutReg_T}{2T}$.
Taking the positive part of the maximum over cuts proves 
\[ \widehat\Delta
  \le
  \alpha+\frac{\CutReg_T}{2T} .
\]

We next compare empirical and fresh cut statistics.  For a cut \(b\), define
\(\sigma_b(r)=1\) if \(r\le b\) and \(\sigma_b(r)=-1\) if \(r>b\).  The difference
between the fresh and empirical cut statistic for \(b\) is
\[
  Z_b:=
  \frac1T\sum_{t=1}^T\sum_r
  \sigma_b(r)\pi_t(r)\big(Q_t(r)-y_t(r) \big).
\]
For each fixed \(b\), the summands form a martingale-difference sequence with absolute
value at most \(1/T\), because \(\pi_t\) is \(\calF_t\)-measurable,
\(\sum_r\pi_t(r)=1\), and \(Q_t(r),y_t(r)\in[0,1]\).  Azuma's inequality gives
\[
  \Prb[|Z_b|\ge u]\le2\exp(-\Omega(Tu^2)).
\]
A union bound over the \(N-1\) cuts and integration of the tail imply
  $\E\left[\max_b|Z_b|\right]
  =
  O\!\left(\sqrt{\frac{\log(2N)}{T}}\right)$.
The map \(x\mapsto[\max_bx_b]_+\) is \(1\)-Lipschitz in \(\ell_\infty\), so
  $\Delta^Q\le\widehat\Delta+\max_b|Z_b|$.
Taking expectations and
combining the two inequalities completes the proof.
\end{proof}

\subsection{Proof of the learning theorem}

\begin{proof}[Proof of \Cref{thm:learn-cut-calibration}]
Combining \Cref{thm:cut-regret} with \Cref{lem:cut-calibration} gives
\[
  \E[\Delta^Q]
  \le
  \alpha+
  O\!\left(\sqrt{\frac{\log(2N)}{T}}\right).
\]
It remains to identify \(\Delta^Q\) with the cut-calibration error \(\Delta_j\) of the
final price distribution.  The final mechanism chooses a round uniformly, so
  $\lambda_j(r)=\frac1T\sum_{t=1}^T\pi_t(r)$.
Moreover, whenever \(\lambda_j(r)>0\), conditioning on the event \(X_j=p_r\) biases the
chosen round proportionally to \(\pi_t(r)\).  Therefore
\[
  q_j(r)=
  \frac{\sum_{t=1}^T\pi_t(r)Q_t(r)}
       {\sum_{t=1}^T\pi_t(r)}.
\]
Hence
\[
  \lambda_j(r)\left(q_j(r)-\frac12\right)
  ~=~
  \frac1T\sum_{t=1}^T
  \pi_t(r)\left(Q_t(r)-\frac12\right).
\]
Thus the signed masses defining \(\Delta_j\) are exactly the signed masses defining
\(\Delta^Q\), and so \(\Delta_j=\Delta^Q\).  This proves the first claim.

For the smoothed price grid,
  $\alpha=\frac{\log(1+\beta)}{\tau}$.
With \(\tau=\Theta(\eps)\) and \(\beta=\Theta(\eps^2)\), we have
\(\alpha=O(\eps)\).  Taking \(T=C\eps^{-2}\log(2N)\) for a sufficiently large constant
\(C\) makes the regret and concentration term \(O(\eps)\), proving
\(\E[\Delta_j]\le O(\eps)\).
\end{proof}

\subsection{Proof of price monotonicity}
\label{app:own-price}

\begin{proof}[Proof of \Cref{lem:own-price}]
Compare two executions with \(p_j^-<p_j^+\).  We prove that sale at the high price
implies sale at the low price.  Consider the first buyer in the high-price execution who
chooses a bundle containing \(j\).  Up to this buyer, \(j\) is unsold in the high-price
execution.

Induct over earlier buyers.  Suppose item \(j\) has not yet sold in the low-price
execution.  Then the available sets of non-\(j\) items in the two executions are the same.
For the next buyer, every feasible bundle not containing \(j\) has the same utility in
both executions, while every feasible bundle containing \(j\) has larger utility in the
low-price execution.  Therefore either the low-price execution sells \(j\), or the fixed
tie-breaking rule chooses the same non-\(j\) bundle as in the high-price execution, and
the induction continues.

When we reach the first high-price buyer who buys \(j\), either \(j\) has already sold at
the low price, or this buyer faces the same availability of all non-\(j\) items.  Let
\(S^+\) be the bundle chosen in the high-price execution, so \(j\in S^+\).  For every
feasible bundle \(T\) not containing \(j\),
\[
  v_i(S^+)-p^+(S^+)
  \ge
  v_i(T)-p^+(T)
  =
  v_i(T)-p^-(T).
\]
Since \(p_j^-<p_j^+\) and \(j\in S^+\),
\[
  v_i(S^+)-p^-(S^+)
  >
  v_i(S^+)-p^+(S^+)
  \ge
  v_i(T)-p^-(T).
\]
Thus no utility-maximizing bundle at the low price can avoid \(j\).  The low-price
execution also sells item \(j\).
\end{proof}


\section{Completing the Main Theorem}\label{sec:main-proof}

We now prove \Cref{thm:intro-main}.  The proof combines the cut-calibrated welfare
bound from \Cref{thm:cut-median} with the learning guarantee from
\Cref{thm:learn-cut-calibration}.  We also need a short preprocessing step that finds a
value window.

For any valuation profile  \(v\) of XOS valuations, note that
\begin{equation}\label{eq:A-sandwich}
  \max_{i,j} v_i(\{j\}) ~\le~ \OPT(v) ~\le~ m \max_{i,j} v_i(\{j\}),
\end{equation}
where the lower bound follows by allocating the best singleton and the upper bound follows because at most \(m\) items are allocated and  supporting contribution of every allocated item is at most the largest singleton value.

\begin{lemma}[Price-window preprocessing]\label{lem:window}
Draw \(s=C\log(1/\delta)\) independent valuation-profile samples.  For each sampled
profile \(\hat v\), compute
$\max_{i,j}\hat v_i(\{j\})$,
and let \(b\) be the empirical \(2/3\)-quantile of these values, i.e.,
  $b:=\inf\{x:\widehat F(x)\ge 2/3\}$.
For a sufficiently large absolute constant \(C\), with probability at least \(1-\delta\),
the following hold.
\begin{enumerate}[label=(\roman*),leftmargin=2em]
  \item If \(b=0\), then posting zero prices, with empty-set tie-breaking among
  zero-utility bundles, obtains expected welfare at least \((7/12)\E[\OPT]\).

  \item If \(b>0\), then for every \(\rho\in(0,1)\), the window   $ B=2mb$ and  $L=\frac{\rho B}{m^2}$  
  satisfies
  \[
    \Prb_{v\sim D}[\OPT(v)\ge B]\le 5/12,
    \qquad
    mL\le 8\rho \E_{v \sim D}[\OPT(v)],
    \qquad
    B/L=m^2/\rho .
  \]
\end{enumerate}
\end{lemma}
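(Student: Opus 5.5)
Let $A(v):=\max_{i,j}v_i(\{j\})$ and let $F$ be its CDF under $v\sim D$. The plan is to show that the empirical $2/3$-quantile $b$ lies, with probability $1-\delta$, between the population $7/12$- and $3/4$-quantiles of $A$. Concretely, the good event is
\[
\Prb[A(v)<b]\le 3/4 \quad\text{and}\quad \Prb[A(v)\le b]\ge 7/12 .
\]
This follows from a Chernoff/DKW-type bound on two indicator sums: the number of samples with $A<x_{3/4}$ and the number with $A\le x_{7/12}$, where $x_\cdot$ denote population quantiles. With $s=C\log(1/\delta)$ samples, each empirical fraction deviates from its mean by less than $1/12$ with probability at least $1-\delta/2$. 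On the good event I handle the two cases separately, using the sandwich \eqref{eq:A-sandwich} in each.

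\emph{Case (ii), $b>0$.} Here $B=2mb>mb$. If $\OPT(v)\ge B$, then by \eqref{eq:A-sandwich} we get $mA(v)\ge \OPT(v)> mb$, so $A(v)>b$. Hence $\Prb[\OPT\ge B]\le \Prb[A>b]\le 5/12$. For the lower bound on $\E[\OPT]$, use $\OPT(v)\ge A(v)$ and Markov in reverse:
\[
\E[\OPT]\ge \E[A]\ge b\,\Prb[A\ge b].
\]
I need $\Prb[A\ge b]$ bounded below by a constant. The good event gives $\Prb[A<b]\le 3/4$, so $\Prb[A\ge b]\ge 1/4$ and $\E[\OPT]\ge b/4$. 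Then $mL=\rho B/m=2\rho b\le 8\rho\E[\OPT]$. The identity $B/L=m^2/\rho$ is by definition.

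\emph{Case (i), $b=0$.} The good event gives $\Prb[A=0]=\Prb[A\le 0]\ge 7/12$; on that event $\OPT(v)=0$ by \eqref{eq:A-sandwich}. This makes me suspicious of the claimed bound: posting zero prices with each buyer greedily taking everything available need not earn a constant fraction of $\OPT$ from the remaining mass. One plausible route is the following. At zero prices the first buyer with positive value takes her demand set, which is all of $M$ up to ties broken toward the empty set. That yields $v_{i^*}(M)$, where $i^*$ is the first buyer with nonzero value. This buyer is the unique nonzero buyer with reasonable probability, because the buyers' zero-events are independent and jointly have probability at least $7/12$.

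Write $z_i=\Prb[v_i\equiv 0]$, so $\prod_i z_i\ge 7/12$. Condition on exactly one buyer $i$ being nonzero. Then $\OPT=v_i(M)$, and posting zero prices captures it fully. So the welfare is at least
\[
\sum_i \E\bigl[v_i(M)\1\{v_i\ne0\}\bigr]\prod_{h\ne i}z_h \ \ge\ \tfrac{7}{12}\sum_i\E[v_i(M)],
\]
using $\prod_{h\neq i}z_h\ge\prod_h z_h\ge 7/12$. More simply: buyer $i$, when reached with all items still unsold, takes $v_i(M)$. Buyer $i$ is reached in that state whenever all earlier buyers are zero, an event independent of $v_i$ with probability at least $7/12$. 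Summing over $i$ gives $\E[\ALG]\ge \frac{7}{12}\sum_i\E[v_i(M)]\ge \frac{7}{12}\E[\OPT]$, since subadditivity gives $\OPT\le\sum_i v_i(M)$ pointwise. This is the clean argument. The zero-utility tie-breaking matters only so that zero buyers take nothing.

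The main obstacle is getting the quantile concentration constants consistent: $2/3\pm1/12$ must yield exactly $5/12$, $1/4$, and $7/12$. In particular, the event needs care with strict versus weak inequalities at atoms of $F$, especially the atom at $0$. I would handle this by applying Chernoff directly to the two indicator counts at the relevant population quantiles rather than invoking DKW.
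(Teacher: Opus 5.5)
Your proof is correct. The concentration step and case (ii) are the paper's argument: on the event $F(b)\ge 7/12$, $F(b^-)\le 3/4$ you get $\Prb[\OPT\ge 2mb]\le\Prb[A>b]\le 5/12$ and $\E[\OPT]\ge b\,\Prb[A\ge b]\ge b/4$.

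The paper gets that event from the DKW inequality, which also controls left limits, together with $\widehat F(b)\ge 2/3\ge\widehat F(b^-)$. This avoids atoms entirely. If you use two Hoeffding bounds instead, swap the strictness of your indicators. To rule out $F(b)<7/12$, use the count of $\{A<x\}$ with $x=\inf\{x':F(x')\ge 7/12\}$, whose mean is at most $7/12$. To rule out $F(b^-)>3/4$, use the count of $\{A\le y\}$ with $y=\inf\{y':F(y')>3/4\}$, whose mean is at least $3/4$. As you wrote them, an atom at the quantile can make a count deterministic: for example, $\{A\le x\}$ is always true when $F$ jumps from $1/2$ to $1$ at $x$.

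Case (i) is a genuinely different argument. The paper reuses the FGL utility certificate. With empty-set tie-breaking, any sale at zero prices forces $\OPT>0$, so each item stays unsold with probability at least $7/12$. The certificate then gives $\sum_j\mu_j\Prb[j\text{ unsold}]\ge\frac{7}{12}\E[\OPT]$.

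You instead note that $\{A=0\}$ is exactly the event that every buyer is identically zero, so $\prod_i z_i\ge 7/12$. Buyer $i$ then faces all of $M$ at price zero whenever all earlier buyers are null. That event is independent of $v_i$ and has probability at least $7/12$. Hence $\E[\ALG]\ge\frac{7}{12}\sum_i\E[v_i(M)]\ge\frac{7}{12}\E[\OPT]$.

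What each route buys:
\begin{itemize}
\item Yours is fully elementary, using only monotonicity and independence, with no independent copies or supporting clauses. It also proves the stronger bound against $\sum_i\E[v_i(M)]$.
\item The paper's is shorter in context, because the certificate is already set up for \Cref{thm:cut-median}. It also keeps this degenerate case in the same per-item accounting used elsewhere.
\end{itemize}
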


The proof of \Cref{lem:window} is deferred to \Cref{app:window}.  We now use the lemma to
define the mechanism.

\subsection{Proof of \Cref{thm:intro-main}}

Run \Cref{lem:window} with failure probability \(\delta=\eps/10\).  If \(b=0\), use the
zero-price mechanism from the lemma.  Otherwise set
\[
  \rho=\Theta(\eps),
  \qquad
  B=2mb,
  \qquad
  L=\frac{\rho B}{m^2},
\]
with the hidden constant in \(\rho\) sufficiently small.  On the good preprocessing event,
\[
  \Prb[\OPT\ge B]\le5/12<1/2,
  \qquad
  mL=O(\eps\,\E[\OPT]),
  \qquad
  B/L=\poly(m,1/\eps).
\]

Set \(p_1=L\), and let
  $p_r=L(1+\beta)^{r-1}$
up to the first index \(N\) such that \(p_N\ge B\).  Thus \(p_N\le(1+\beta)B\).  Choose
  $\tau=\Theta(\eps)$ and $\beta=\Theta(\eps^2)$,
so that
\[
  \alpha=\frac{\log(1+\beta)}{\tau}=O(\eps)
  \qquad \text{and} \qquad
  N=O\!\left(\eps^{-2}\log(B/L)\right).
\]

\begin{algorithm}[t]
\caption{Training and deployment for \Cref{thm:intro-main}}
\label{alg:main-training}
\begin{enumerate}[leftmargin=2em]
  \item Initialize one copy of the cut-regret learner from
  \Cref{thm:cut-regret} for every item \(j\in M\).

  \item For each training round \(t=1,\ldots,T\):
  \begin{enumerate}[label=(\alph*),leftmargin=2em]
    \item Before seeing the round-\(t\) sample, every item learner \(j\)
    outputs a distribution \(\pi_{j,t}\) over the grid levels \([N]\).

    \item Draw one fresh valuation-profile sample
      $\widehat v^{\,t}\sim D$.
    This same sampled profile is used to generate feedback for all items.

    \item For every item \(j\), draw an auxiliary price environment for the
    other items:
    \[
      I^{(j,t)}_k\sim\pi_{k,t},
      \qquad
      V^{(j,t)}_k\sim\Unif[0,\tau],
      \qquad
      P^{(j,t)}_k=p_{I^{(j,t)}_k}e^{V^{(j,t)}_k},
      \qquad k\ne j.
    \]
    Conditional on \(\mathcal F_t\), take these auxiliary draws mutually
independent across all pairs \((j,k)\).

    \item Holding \(\widehat v^{\,t}\) and \(P^{(j,t)}_{-j}\) fixed, compute
    the smoothed sale curve
    \[
      y_{j,t}(r)
      :=
      \Pr_{V_j\sim\Unif[0,\tau]}
      \left[
        \begin{array}{c}
        \text{item \(j\) sells on profile \(\widehat v^{\,t}\)}\\
        \text{when \(P_j=p_re^{V_j}\) and \(P_{-j}=P^{(j,t)}_{-j}\)}
        \end{array}
      \right],
      \qquad r\in[N].
    \]
    Feed the corresponding the hop-distance loss defined in \Cref{sec:learning} to learner \(j\).
  \end{enumerate}

  \item After training, draw a common round
    $U\sim\Unif([T])$.
  Independently for every item \(j\), draw
  \[
    I_j\sim\pi_{j,U},
    \qquad
    V_j\sim\Unif[0,\tau],
    \qquad
    P_j=p_{I_j}e^{V_j}.
  \]
  Post the resulting price vector \(P\) to the real buyers in the fixed
  arrival order.
\end{enumerate}
\end{algorithm}

Algorithm~\ref{alg:main-training} uses one valuation-profile sample in each
training round.  All auxiliary price vectors are internal randomness and do
not count as additional samples.  The common round \(U\) may correlate the
items' base prices unconditionally; this is permitted by
\Cref{thm:cut-median}.

\begin{observation}[Consistency of training and deployment]
\label{obs:training-deployment}
Let \(\mathcal F_t\) be the joint history immediately before the valuation
sample and auxiliary prices of training round \(t\) are drawn, and define
  $Q_{j,t}(r):=\E[y_{j,t}(r)\mid\mathcal F_t]$.
Condition on the complete training transcript.  Whenever
\(\pi_{j,t}(r)>0\),
\[
  \Prb[
    j\text{ sells in the final mechanism}
    \mid U=t,\ I_j=r
  ]
  =
  Q_{j,t}(r).
\]
Consequently, conditional on the training transcript,
\[
  \lambda_j(r)=\frac1T\sum_{t=1}^T\pi_{j,t}(r)
\quad \text{and} \quad 
  \lambda_j(r)\left(q_j(r)-\frac12\right)
  =
  \frac1T\sum_{t=1}^T
  \pi_{j,t}(r)\left(Q_{j,t}(r)-\frac12\right).
\]
\end{observation}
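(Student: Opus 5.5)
The plan is to condition on the complete training transcript, which fixes every learner distribution $\pi_{k,t}$, and then compare the deployment experiment with the training experiment for the fixed item $j$ and round $t$. On the event $U=t$, $I_j=r$, the deployment draws four things: the other items' levels $I_k\sim\pi_{k,t}$ independently for $k\ne j$; perturbations $V_k\sim\Unif[0,\tau]$ for all $k$, including $k=j$; and a fresh profile $v\sim D$. All of these are independent of each other and of the event $\{U=t,I_j=r\}$, since $U$ is drawn first and the $I_j$ are then drawn independently across items given $U$. Hence the conditional sale probability is
\[
\E_{v\sim D}\,\E_{P_{-j}}\,\Pr_{V_j}\bigl[j\text{ sells on }v\text{ when }P_j=p_re^{V_j}\text{ and }P_{-j}\bigr],
\]
where $P_{-j}$ has the product law $\prod_{k\ne j}$ of $p_{I_k}e^{V_k}$ with $I_k\sim\pi_{k,t}$.

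Next I will identify this with $Q_{j,t}(r)=\E[y_{j,t}(r)\mid\mathcal F_t]$. Given $\mathcal F_t$, the distributions $\pi_{k,t}$ are determined, because the learner outputs $\pi_{k,t}$ before the round-$t$ sample is drawn. The sample $\widehat v^{\,t}\sim D$ is fresh and independent of $\mathcal F_t$. The auxiliary environment $P^{(j,t)}_{-j}$ has, conditionally on $\mathcal F_t$, exactly the same product law as $P_{-j}$ above, and it is independent of $\widehat v^{\,t}$. Finally, $y_{j,t}(r)$ already averages over the own perturbation $V_j$. Averaging $y_{j,t}(r)$ over $\widehat v^{\,t}$ and $P^{(j,t)}_{-j}$ therefore gives precisely the displayed expression, by Fubini.

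One subtlety needs care. The transcript also contains information revealed after round $t$: later $\pi_{k,s}$ for $s>t$, which depend on the realized round-$t$ sample. So conditioning on the full transcript changes the law of $\widehat v^{\,t}$. My approach is to avoid conditioning the training sample on anything. $Q_{j,t}(r)$ is a function only of $(\pi_{k,t})_{k}$, which are $\mathcal F_t$-measurable. The deployment experiment, by contrast, uses a \emph{fresh} $v$ and fresh prices, which are independent of the whole transcript. Consequently the deployment probability equals the function $Q_{j,t}(r)$ evaluated at the transcript's $\pi_{\cdot,t}$, and this holds regardless of which later rounds are included in the conditioning. This is the main point I have to state precisely; it is the main obstacle, but only as a matter of precise statement.

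The consequences then follow by Bayes. The first is
\[
\lambda_j(r)=\Pr[I_j=r]=\sum_t\Pr[U=t]\,\pi_{j,t}(r)=\frac1T\sum_t\pi_{j,t}(r).
\]
For the second, I write
\[
\lambda_j(r)\,q_j(r)=\Pr[I_j=r,\ j\text{ sells}]=\frac1T\sum_t\pi_{j,t}(r)\,Q_{j,t}(r),
\]
using the first claim on the terms with $\pi_{j,t}(r)>0$; the remaining terms vanish. I then subtract $\tfrac12\lambda_j(r)$ from both sides. When $\lambda_j(r)=0$, both sides are zero under the convention $q_j(r)=0$.
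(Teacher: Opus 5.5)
Your proposal is correct and follows the paper's own argument. Like the paper, you identify the conditional deployment environment on $\{U=t,\ I_j=r\}$ with the fresh environment defining $Q_{j,t}(r)$, then average over $U$ and apply Bayes' rule. Your explicit remark that $Q_{j,t}(r)$ is a deterministic function of $(\pi_{k,t})_k$, so that conditioning on later rounds of the transcript is harmless, makes precise a point the paper leaves implicit.
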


\begin{proof}
Conditional on \(U=t\) and \(I_j=r\), the other items' base levels in the
final mechanism are independent draws from
\(\pi_{k,t}\), their perturbations are fresh draws from
\(\Unif[0,\tau]\), and the real valuation profile is a fresh draw from \(D\).
This is exactly the fresh environment whose conditional expectation defines
\(Q_{j,t}(r)\).  The two displayed identities then follow by averaging over
the uniformly random choice of \(U\) and applying Bayes' rule after
conditioning on \(I_j=r\).
\end{proof}

By \Cref{thm:learn-cut-calibration,obs:training-deployment} and taking
\(T=\Theta(\eps^{-2}\log(2N))\), we have
\begin{equation}\label{eq:main-delta-small}
  \E[\Delta_j]\le O(\eps)
  \qquad\text{for every item }j .
\end{equation}

Fix a preprocessing transcript $\mathcal T_0$ in the good event with $b>0$, so $L$ and $B$ are now fixed. The bound \eqref{eq:main-delta-small} holds conditional on every such transcript. Conditional on the completed training transcript, the final price distribution is independent of the fresh valuation profile, and the grid satisfies $p_1=L$, $p_N\ge B$, and $\Prb[\OPT\ge B]\le1/2$. Applying \Cref{thm:cut-median} conditionally on the completed training transcript and then averaging over the training gives
\[
\begin{aligned}
  \E[\ALG\mid\mathcal T_0]
  &\ge
  \Big(\frac12-O(\tau)\Big)\E[\OPT]
  -\sum_j\mu_j\E[\Delta_j\mid\mathcal T_0]
  -e^\tau Lm \\
  &\ge \Big(\frac12-O(\eps)\Big)\E[\OPT],
\end{aligned}
\]
where we use $\sum_j\mu_j=\E[\OPT]$, \eqref{eq:main-delta-small}, and $mL=O(\eps\E[\OPT])$. On a good preprocessing transcript with $b=0$, \Cref{lem:window} gives the stronger guarantee $(7/12)\E[\OPT]$. Thus the same target bound holds after averaging over all good preprocessing transcripts.

If the preprocessing event fails, welfare is nonnegative.  Since the failure probability is at most \(\eps/10\), after rescaling the internal accuracy parameter 
the unconditional expected welfare is still
\[
  \E[\ALG]\ge (1/2-\eps)\E[\OPT].
\]

It remains to count samples.  Since \(B/L=m^2/\rho\) and \(\rho=\Theta(\eps)\),
  $N=O\!\left(\eps^{-2}\log\frac{m+2}{\eps}\right)$.
Therefore
\[
  T=O\!\left(\eps^{-2}\log(2N)\right)
  =
  O\!\left(\eps^{-2}\log\frac{\log(m+2)}{\eps}\right).
\]
The preprocessing uses \(O(\log(10/\eps))=O(\log(2/\eps))\) additional samples, which is absorbed in the
same bound for \(\eps\in(0,1)\).

The mechanism is universally truthful: after the training samples are drawn, the final
randomized price vector is fixed independently of the real buyers' reports, and the online
phase is a deterministic posted-price mechanism conditioned on this randomness.

Finally, the mechanism is polynomial-time in the standard demand-oracle model.  The only
non-immediate step is computing the smoothed sale curve for an item on a sampled profile.
By \Cref{lem:critical-implementation}, this can be done to additive accuracy
\(\gamma\) using polynomially many demand queries whenever \(e^\tau p_N/L=\poly(m,1/\eps)\).
In our learned window,
\[
  \frac{e^\tau p_N}{L}
  ~\le~
  e^\tau(1+\beta)\frac BL
  ~=~
  \poly(m,1/\eps).
\]
Taking \(\gamma\) to be a sufficiently small constant multiple of \(\eps\) perturbs every
cut statistic and cut gain by at most \(O(\eps)\), which is absorbed into the final
\(O(\eps \E[\OPT])\) loss.  The remaining operations---updating the external-regret learner over
\(N-1\) cuts, forming cut prefix sums, and computing a stationary distribution of an
\(N\)-state Markov chain---are polynomial in \(n,m,1/\eps\).  This completes the proof of
\Cref{thm:intro-main}.

\subsection{Missing Proofs}\label{app:window}
\begin{proof}[Proof of \Cref{lem:window}]
Let 
  $A(v):=\max_{i,j}v_i(\{j\})$. For XOS valuations,
  $A(v)\le \OPT(v)\le mA(v)$.
Let \(F\) and \(\widehat F\) be the true and empirical CDFs of \(A\).  By the
Dvoretzky-Kiefer-Wolfowitz inequality, \(s=C\log(1/\delta)\) samples ensure
\[
  \sup_x |F(x)-\widehat F(x)|\le 1/12
\]
with probability at least \(1-\delta\).  Condition on this event.

By the quantile convention, \(\widehat F(b)\ge2/3\) and \(\widehat F(b^-)\le2/3\).  Hence
  $F(b)\ge7/12$ and 
  $F(b^-)\le3/4$.
Thus \(\Prb[A>b]\le5/12\).  
If \(b>0\), then
  $\Prb[A\ge b]=1-F(b^-)\ge1/4$,
and therefore
  \[\E[\OPT]\ge\E[A]\ge b/4.\]

First suppose \(b=0\).  Then \(\Prb[A>0]\le5/12\).  Since \(\OPT(v)\le mA(v)\), we have
  $\Prb[\OPT>0]\le5/12.$
At zero prices with empty-set tie-breaking among zero-utility bundles, if any item sells,
then some buyer has strictly positive value for some bundle, and hence the realized optimum
is positive.  Therefore, for every item \(j\),
\[
  \Prb[Y_j(0)=1]\le \Prb[\OPT>0]\le5/12,
  \qquad
  \Prb[Y_j(0)=0]\ge7/12.
\]
Applying the FGL certificate at zero prices gives welfare at least
\[
  \sum_j\mu_j\Prb[Y_j(0)=0]
  ~\ge~
  (7/12)\sum_j\mu_j
  ~=~
  (7/12)\E[\OPT].
\]
Now suppose \(b>0\).  If \(\OPT\ge B=2mb\), then
  $A\ge \OPT/m\ge 2b>b$.
Hence
  $\Prb[\OPT\ge B]\le\Prb[A>b]\le5/12$.
Finally,
\[
  mL=\frac{\rho B}{m}=2\rho b\le8\rho\,\E[\OPT],
\]
and \(B/L=m^2/\rho\) is immediate.
\end{proof}


\begin{lemma}[Computing a smoothed sale curve]\label{lem:critical-implementation}
Fix a sampled valuation profile, an item \(j\), and actual prices of all items other than
\(j\).  Let \(P_{\max}=e^\tau p_N\).  For every \(\gamma\in(0,1)\), one can compute
numbers \(\widetilde y(r)\) satisfying
\[
  \max_{r\in[N]}|\widetilde y(r)-y(r)|\le\gamma,
  \qquad
  y(r)=\Pr_{U\sim\Unif[0,\tau]}[\text{\(j\) sells at actual price }p_re^U],
\]
using
 $ O\!\left(
    n\log\!\left(2+\frac{\log(P_{\max}/L)}{\gamma\tau}\right)
  \right)$
tie-broken demand queries and \(O(N)\) arithmetic operations.
\end{lemma}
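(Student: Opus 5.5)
By \Cref{lem:own-price}, once the sampled profile and the prices $P_{-j}$ are fixed, item $j$ sells exactly when its actual price is below a critical price $\Theta\in[0,\infty]$. The plan is to locate $\Theta$ by binary search in log-price space and then evaluate every $y(r)$ in closed form. Given $\Theta$, we have
\[
  y(r)=\min\Bigl\{1,\max\Bigl\{0,\tfrac{1}{\tau}\log(\Theta/p_r)\Bigr\}\Bigr\},
\]
which is a clamped linear function of $\log\Theta-\log p_r$. Knowing $\log\Theta$ to additive accuracy $\gamma\tau$ therefore gives every $y(r)$ to accuracy $\gamma$. Once the estimate of $\Theta$ is in hand, this formula costs $O(N)$ arithmetic operations in total.

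\textbf{Oracle test.} For a candidate price $x$, we decide whether $j$ sells at $P_j=x$ by simulating the sequential mechanism on the sampled profile. Each of the $n$ buyers in turn makes one tie-broken demand query on the currently remaining set, and we remove the bundle she takes. This uses $n$ demand queries per test and returns the sold indicator, which by monotonicity is $\1\{x<\Theta\}$ (up to the measure-zero threshold point).

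\textbf{Search range.} Only actual prices in $[p_1,P_{\max}]\subseteq[L,P_{\max}]$ matter, since $p_1\le p_re^U\le P_{\max}$. First test $x=p_1$: if $j$ does not sell, then $\Theta\le p_1$ and all $y(r)=0$. Next test $x=P_{\max}$: if $j$ sells, then $\Theta\ge P_{\max}$ and all $y(r)=1$. Otherwise $\log\Theta\in[\log p_1,\log P_{\max}]$, an interval of length at most $\log(P_{\max}/L)$. Bisecting it until the width is at most $\gamma\tau$ takes
\[
  O\Bigl(\log\bigl(2+\tfrac{\log(P_{\max}/L)}{\gamma\tau}\bigr)\Bigr)
\]
tests. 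Multiplying by $n$ queries per test gives the stated query bound.

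\textbf{Accuracy.} Take the estimate $\widetilde\Theta$ to be the midpoint of the final interval and set $\widetilde y(r)$ by the same clamped formula. The map $\log\Theta\mapsto y(r)$ is $(1/\tau)$-Lipschitz, so $|\widetilde y(r)-y(r)|\le\gamma$. The step needing the most care is justifying the threshold description, which rests on monotonicity (\Cref{lem:own-price}) and the fixed tie-breaking. Boundary ties at exactly $\Theta$ have probability zero under the continuous perturbation, so they do not affect $y(r)$. Each bisection test is exactly one simulated execution of the mechanism.
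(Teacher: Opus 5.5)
Your proposal is correct and follows the paper's proof essentially step for step. Both use the two endpoint tests, then bisection on $\log x$ with $n$ tie-broken demand queries per simulated execution, and then the clamped formula $\bigl[(\log\widetilde\Theta-\log p_r)/\tau\bigr]_{[0,1]}$. Its $1/\tau$-Lipschitzness in $\log\Theta$ turns a bracket of log-width $\gamma\tau$ into accuracy $\gamma$.

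The differences are cosmetic: you test at $p_1$ rather than $L$, and you take the midpoint of the final bracket rather than an arbitrary point. Testing at $p_1$ is harmless because $p_1=L$ in the paper's grids. Your inclusion $[p_1,P_{\max}]\subseteq[L,P_{\max}]$ relies on that equality, not on the weaker $p_1\le L$ assumed in Section 3.1.
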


\begin{proof}
A query at actual price \(x\) is implemented by simulating the posted-price mechanism on
the fixed sampled profile with item \(j\) priced at \(x\) and all other prices fixed.
This uses one demand query per buyer.  By \Cref{lem:own-price}, the resulting sold
indicator \(S_j(x)\) is nonincreasing in \(x\).

If \(S_j(L)=0\), then no possible actual price \(p_re^U\ge L\) sells item \(j\), so the
curve is identically zero.  If \(S_j(P_{\max})=1\), then every possible actual price
\(p_re^U\le P_{\max}\) sells item \(j\), so the curve is identically one.  Otherwise
\(S_j(L)=1\) and \(S_j(P_{\max})=0\).  Let
\[
  \Theta:=\sup\{x:S_j(x)=1\}\in[L,P_{\max}].
\]
Binary search on \(\log x\) maintains a bracket \([\ell,h]\) with \(S_j(\ell)=1\),
\(S_j(h)=0\), and \(\Theta\in[\ell,h]\).  After
\[
  O\!\left(
    \log\!\left(2+\frac{\log(P_{\max}/L)}{\gamma\tau}\right)
  \right)
\]
iterations, the bracket has logarithmic length at most \(\gamma\tau\).  Pick any
\(\widetilde\Theta\in[\ell,h]\) and output
\[
  \widetilde y(r)=
  \left[
    \frac{\log\widetilde\Theta-\log p_r}{\tau}
  \right]_{[0,1]},
  \qquad
  [x]_{[0,1]}:=\min\{1,\max\{0,x\}\}.
\]
The exact curve has the same formula with \(\Theta\) in place of
\(\widetilde\Theta\), except at a single threshold point of the continuous perturbation
distribution.  Since the clipping map is \(1\)-Lipschitz and
\(|\log\widetilde\Theta-\log\Theta|\le\gamma\tau\), the error is at most \(\gamma\) for
every grid level.
\end{proof}

\section{Single-Sample and Random-Order Consequences}
\label{sec:SSRO}\label{sec:one-sample}

We now prove \Cref{cor:intro-one-sample,cor:intro-random-order}.  The same
cut-calibrated learner used in the main theorem also gives direct one-sample and
random-order consequences.  In both settings the mechanism mixes two branches: a sparse
item-pricing branch, which captures contributions at a chosen scale, and a standard
grand-bundle branch, which captures the upper tail.

\paragraph{Scale estimation and the buyer sets.}
In both settings, $\mathrm{STAT}$ is the set of buyers used only to estimate the scale, and $\mathrm{FIXED}:=[n]\setminus\mathrm{STAT}$ is the remaining buyer pool. In the one-sample setting, write $s=(s_i)_i$ for the sample profile and $v=(v_i)_i$ for the real profile. Put each buyer independently into $\mathrm{STAT}$ with probability $1/2$ and set $\Gamma:=\max_{i\in\mathrm{STAT}}s_i(M)$. In the random-order setting with fixed profile $u$, use the first $\lfloor n/2\rfloor$ arrivals as $\mathrm{STAT}$ and set $\Gamma:=\max_{i\in\mathrm{STAT}}u_i(M)$; here $\mathrm{FIXED}$ is the post-STAT suffix. In either case, an empty maximum is zero, STAT buyers receive nothing and pay nothing, and the sparse branch returns no allocation if $\Gamma=0$.

The sparse branch uses a constant internal accuracy parameter.  Fix absolute constants in
the following order: \(C_B\) large, \(\rho,\tau,\beta>0\) sufficiently small, and then
\(C,C_K\) sufficiently large.  Given a scale \(\Gamma>0\), the sparse branch uses the
window
\[
  L=\frac{\rho\Gamma}{m},
  \qquad
  B=2e^\tau C_B\Gamma,
\]
and a geometric grid from \(L\) to at least \(B\) with ratio \(1+\beta\).  Since
\(B/L=O(m)\), the number of grid levels is
  $N=O(\log(m+2))$.
Set
  $H=C\log(2N)$ and $
  K=\lceil C_KH\rceil$, which means 
  $K=O(\log\!\log (m+2))$.

\paragraph{Zero-padded roles (one sample).}
Each buyer in $\mathrm{FIXED}$ independently receives a uniformly random label $\ell_i\in\{0,1,\ldots,K\}$. Label $0$ is the allocation role, labels $1,\ldots,H$ are training roles, and the remaining labels are ignored. Training role $t$ uses the sample valuations $s_i$ of its buyers; allocation role $0$ uses the real valuations $v_i$ of its buyers. All executions preserve the original fixed buyer order.

For the analysis, define, for each $i\in\mathrm{FIXED}$,
\[
  U_i:=
  \begin{cases}
    v_i, & \ell_i=0,\\
    s_i, & \ell_i\ne0.
  \end{cases}
\]
The \emph{zero-padded role profile} $V^a$ has valuation $U_i$ at each original buyer position with label $a$ and the zero valuation at every other position. In particular, $V^t$ is the sampled training profile for role $t$, while $V^0$ is the real allocation profile. Conditional on the STAT/FIXED split, the coupling argument below shows that conditioning additionally on $U$ and $\Gamma$ does not bias the labels.

\paragraph{Chronological blocks (random order).}
Add the minimum number of zero-valued dummy buyers needed to make the post-STAT suffix size divisible by $K+1$, inserting the dummy positions uniformly at random among the suffix positions, independently of the real buyers' order. Partition the padded suffix into $K+1$ equal-size chronological blocks. Blocks $1,\ldots,H$ are training blocks, block $K+1$ is the allocation block, and all other blocks are ignored. If the common block size is $s_{\rm blk}$, then for each item and each training block, independently draw $R\sim\Unif\{0,\ldots,s_{\rm blk}-1\}$ and use the first $R$ positions of that block for the item's feedback. These prefix lengths are independent of the block order and all valuations.

\paragraph{Training, storage, and allocation.}
Initialize one cut-regret learner from \Cref{thm:cut-regret} per item. At the start of round $t$, using only the scale and previously exposed training roles or blocks, learner $j$ outputs $\pi_{j,t}$. Before the current role or block is exposed, independently draw $I_{j,t}\sim\pi_{j,t}$ and $Z_{j,t}\sim\Unif[0,\tau]$ for every item, and store the actual price vector
\[
  P_{j,t}:=p_{I_{j,t}}e^{Z_{j,t}}.
\]
For feedback to learner $j$, hold $P_{-j,t}$ fixed and compute the perturbation-smoothed sale curve while varying only $j$'s base level and perturbation. Use the entire zero-padded role in the one-sample implementation and the random block prefix in the random-order implementation. Each simulation starts with all items available. Feed the corresponding hop-distance loss defined in \Cref{sec:sale-curves} to learner $j$. In the one-sample implementation, labels and sampled valuations from unexposed roles are not revealed to the learners.

After training, choose $t^\star\sim\Unif([H])$ and post the stored vector $P_{t^\star}$ to the allocation-role or allocation-block buyers, starting with all items available. Here the stored actual prices are reused, rather than drawing new prices from the stored learner distributions. Buyers outside the allocation role or block receive nothing and pay nothing.

We use two black-box lemmas, both proved in \Cref{app:ssro}.  The first says that, once a
scale \(\Gamma\) is fixed and all benchmark buyers have grand-bundle value at most
\(C_B\Gamma\), the sparse branch captures a \(1/(K+1)\) fraction of the benchmark up to
the negligible grid-floor loss.

The profile $u'$ in the first lemma is used only as a benchmark: the sparse mechanism runs on the full served profile $u$, without deleting buyers according to this benchmark truncation.

\begin{lemma}[Sparse scale lemma]\label{lem:sparse-scale}
Let \(u\) be the valuation profile seen by the sparse branch, and let \(u'\) be obtained
from \(u\) by replacing some buyers by the zero valuation.  Suppose every nonzero buyer in
\(u'\) satisfies \(u'_i(M)\le C_B\Gamma\).  In the zero-padded role implementation, where
the expectation is over the random labels and the mechanism's internal randomness,
\[
  \E[\ALG_{\rm sparse}]
  \ge
  \frac{c}{K+1}\bigl(\OPT(u')-C'\rho\Gamma\bigr).
\]
The same guarantee holds for the chronological-block implementation in the random-order
setting, with expectation over the random block order and the mechanism's internal
randomness, conditioned on the scale \(\Gamma\) and on the buyers remaining after the STAT
block.
\end{lemma}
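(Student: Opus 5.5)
The plan is to reduce the sparse setting to the machinery of \Cref{thm:cut-median} and \Cref{thm:learn-cut-calibration}. The key device is a symmetric coupling: every role or block, training or allocation, is an exchangeable draw from the same random environment. Conditional on the STAT/FIXED split, on $\Gamma$, and on the multiset $U$ of valuations, each FIXED buyer is independently and uniformly assigned to one of the $K+1$ labels. In the random-order case the blocks of the uniformly permuted padded suffix play the same role, with the independent uniform prefix lengths $R$ handling the chronological structure. Hence the profiles $V^0,V^1,\ldots,V^H$ are identically distributed. Roles with distinct labels are not independent in the random-order case, since blocks partition a fixed set, but they are close enough for the martingale argument.

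I would set up the learning exactly as in \Cref{sec:learning}. In round $t$ the learner's feedback is the smoothed sale curve of item $j$ on $V^t$ (or a prefix of block $t$), with the stored other prices $P_{-j,t}$. Let $Q_{j,t}(r)$ be the conditional expectation of that curve given $\calF_t$. Since $P_t$ is drawn before role $t$ is exposed, the identity of \Cref{obs:training-deployment} should hold with the allocation role in place of a fresh profile: conditional on $t^\star=t$ and $I_{j,t}=r$, the sale probability of $j$ on $V^0$ equals $Q_{j,t}(r)$, up to a dependence correction. In the one-sample case, labels are i.i.d.\ given $U$, so $V^0$ and $V^t$ are exchangeable given the exposed history. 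In the random-order case, conditioning on exposed blocks biases the remaining ones only through sampling without replacement. Because $K=C_KH$, the allocation block and the unexposed training blocks remain exchangeable among themselves, so the identity still holds exactly. \Cref{thm:cut-regret} together with \Cref{lem:cut-calibration} then gives $\E[\Delta_j]\le\alpha+O(\sqrt{\log(2N)/T})$ with $T=H=C\log(2N)$. This is a small constant once $\tau,\beta$ are small and $C$ is large.

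Next I would apply the certificate of \Cref{thm:cut-median} to the allocation role, with benchmark the truncated profile $u'$ restricted to label $0$. That theorem's proof uses only three facts. First, the prices are independent of the evaluated profile. Second, $\Prb[\text{sale at top level}]\le 1/2$. Third, the supporting-clause accounting holds, and it works for a fixed benchmark allocation of $u'$ because each buyer lands in role $0$ with probability $1/(K+1)$; this gives $s_j=\mu_j=w_j(u')/(K+1)$ and $y_j\le 1$. For the top level, note that $p_N\ge B=2e^\tau C_B\Gamma$. A sale at a top-level price requires some buyer with $u_i(M)\ge B$, and such buyers are not in $u'$. Their probability of lying in a given role is about $1/(K+1)$, so this contributes only lower-order terms, or can be absorbed by lowering the constant $c$. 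The grid-floor loss is $e^\tau Lm=O(\rho\Gamma)$, which gives the $C'\rho\Gamma$ term. Combining, $\E[\ALG_{\rm sparse}]\ge(\tfrac12-O(\tau)-O(\E\Delta))\OPT(u')/(K+1)-C'\rho\Gamma/(K+1)$, which is the claim.

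The main obstacle is the top-level condition together with the coupling identity. In the one-sample setting, buyers whose sample and real valuations differ break exact exchangeability between $V^0$ (real) and $V^t$ (sample). I would handle this as follows. Conditional on $U$ the labels are unbiased, and the real buyers enter $V^0$ only through the symmetric variable $U_i$, so exchangeability holds given $U$. The difficulty then shifts to bounding the top-level sale probability. I would try to bound it by the probability that role $0$ contains a buyer with grand-bundle value above $C_B\Gamma$ and a high enough critical price. If the count of such buyers is not controlled by $\Gamma$, I would weaken the top-level coefficient from $+1$ to something smaller, at a constant-factor loss.
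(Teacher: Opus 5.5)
\textbf{Gap 1: the top price level.} The step that fails is your handling of the top price level. Behind it is the fact that you never use the hypothesis $u'_i(M)\le C_B\Gamma$ in the way it is needed.

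You argue that top-level sales are lower order because a buyer with $u_i(M)\ge B$ lands in the allocation role with probability about $1/(K+1)$. But the lemma allows an arbitrary served profile $u$, and it is applied conditionally on $\Gamma$ and $U$ (or on the post-STAT buyers). Such a $u$ may contain many more than $K+1$ zeroed-out buyers, each valuing every item above $e^\tau p_N$. Then role $0$ contains one of them with high probability, $q_j(N)$ is close to $1$, and the hypothesis $\Prb[\OPT\ge B]\le\frac12$ of \Cref{thm:cut-median} fails.

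Your fallback does not rescue this. Without a sale bound, the only coefficient you can certify at the top is $\coef(z_N)$, which tends to $-1$ when $w_j\gg p_N$. A shortfall $\theta_N<1$ in \Cref{obseveration:monotone-cut-mixture} also leaves a term proportional to the total signed mass $\sum_r e_r$, which no proper cut controls.

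The missing observation is that no top-level sale bound is needed. The cap gives $w_j\le C_B\Gamma=B/(2e^\tau)$ for every benchmark contribution. Your $s_j$ and $y_j$ carry the same factor $1/(K+1)$, so the relevant normalization is $z_r=p_r/w_j$. Hence $z_N\ge 2e^\tau$, and $\coef(z_N)=1$ follows directly from the first part of \Cref{lem:signed-one-level}.

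The paper uses the cap more economically:
\begin{itemize}
\item It discards items with $w_j<2e^\tau L$, at total cost $O(\rho\Gamma)$.
\item For each remaining item it uses the single cut $b(j)$, the largest index with $p_be^\tau\le w_j/2$. This cut is proper precisely because $2e^\tau L\le w_j\le B/(2e^\tau)$.
\item Sales above $b(j)$ earn revenue at least $w_j/(2e^\tau)$. If $j$ is priced at or below $b(j)$ and survives to its owner, the owner's supporting clause yields utility at least $w_j/2$.
\item The averaged quantity is $\frac12$ minus the empirical cut statistic at $b(j)$, hence at least $\frac12-\widehat\Delta_j$.
\end{itemize}
This avoids any mixture over cuts, any top-level condition, and any fresh-calibration (martingale) step. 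The reason is that \Cref{lem:role-exposure}(a) swaps labels $0$ and $t$ to give $\E[\Welfare(P_t,V^0)]=\E[\Welfare(P_t,V^t)]$, so the certificate is evaluated on the very role that trained the learner.

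\textbf{Gap 2: the coupling claims.} In the one-sample case, $V^0$ and $V^t$ are exactly exchangeable given $\mathcal H_t$. Even so, \Cref{thm:cut-median} cannot be used as a black box. The prices $P_t$ depend on roles $1,\ldots,t-1$, which may already reveal that the owner $i(j)$ is not in role $0$.

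The utility term needs three facts given $\mathcal H_t$:
\begin{itemize}
\item $\ell_{i(j)}$ hits the target label with probability $0$ if $i(j)$ is exposed, and $1/(K-t+2)$ otherwise;
\item this is independent of availability before $i(j)$;
\item exposure has probability $(t-1)/(K+1)$.
\end{itemize}
These produce the $O(H/K)/(K+1)$ errors of \Cref{lem:role-exposure}(b),(c). That is the actual role of $K=C_KH$, not making any identity exact.

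In the random-order case, your assertion that the identity ``holds exactly'' is false. The learner is trained on sale curves over a uniformly random \emph{prefix} of block $t$, while $P_t$ is posted to the whole allocation block, and that block's sale probability is not $Q_{j,t}(r)$. You must charge both revenue and utility of item $j$ to the prefix preceding $i(j)$ inside the allocation block. Revenue is then counted only when $i(j)$ is in that block, which is harmless at the $1/(K+1)$ scale. That prefix has length uniform on $\{0,\ldots,s_{\rm blk}-1\}$, so it matches the training prefix in law. The remaining mismatches are $O(1/K)$ and $O(H/K)$, from $i(j)$ falling in the training prefix or in an exposed block; this is \Cref{lem:block-prefix-exposure}.

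With these repairs your FGL-with-cuts route could be completed, and it would give a better constant. The lemma, however, only asks for some $c>0$, and the single-cut argument is much lighter.
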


The second lemma says that the assumption in \Cref{lem:sparse-scale} loses only the part
already handled by the grand-bundle branch.

\begin{lemma}[Grand-bundle scale reduction]\label{lem:gb-reduction}
For \(C_B\) sufficiently large, the following hold.

\begin{itemize}
    \item 
In the one-sample setting, let
  $G:=\E_{v\sim D}\left[\max_i v_i(M)\right]$.
There is a universally truthful grand-bundle posted-price branch, using the same one
sample, with expected welfare \(\Omega(G)\).  Moreover, for the STAT scale \(\Gamma\) and
the coupled fixed-buyer profile used below, if \(U^{\le}\) is obtained by zeroing out all
buyers \(i\) with \(U_i(M)>C_B\Gamma\), then
\[
  \E[\OPT(U^{\le})]
  \ge
  \Omega(\E[\OPT])-O(G).
\]

\item In the random-order setting, fix a valuation profile \(u\).
There is a universally truthful grand-bundle secretary branch with expected welfare
\(\Omega(\max_i u_i(M))\).  Moreover, for the STAT scale \(\Gamma\), if \(u^{\le}\) is obtained from
the post-STAT buyers by zeroing out all buyers \(i\) with \(u_i(M)>C_B\Gamma\), then
\[
  \E[\OPT(u^{\le})]
  \ge
  \Omega(\OPT(u))-O(\max_i u_i(M)),
\]
where the expectation is over the random order.
\end{itemize}
\end{lemma}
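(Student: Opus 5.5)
We prove the two bullets separately. Each has a grand-bundle welfare guarantee and a truncation bound, and both follow the same template.

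\textbf{Grand-bundle branches.} In the random-order setting we run the classical secretary rule on the scalars $u_i(M)$. The first $\lfloor n/2\rfloor$ arrivals are only observed, and their maximum grand-bundle value is exactly $\Gamma$. We then post the price $\Gamma$ for the grand bundle $M$ to later buyers, and the first later buyer with $u_i(M)\ge \Gamma$ takes $M$ at that price. This is truthful, since each buyer faces a price fixed by earlier reports. It obtains value at least $\max_iu_i(M)$ with constant probability: with probability at least roughly $1/4$, the top buyer arrives in the second half and the second-highest in the first half, and then the top buyer is the first to clear $\Gamma$. In the one-sample setting we post, in the fixed order, the price $\max_i s_i(M)$ (the one-sample prophet threshold) for $M$. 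The standard single-sample prophet inequality for the scalar variables $v_i(M)$ gives welfare $\Omega(\E[\max_iv_i(M)])=\Omega(G)$. If needed, we use the STAT-restricted variant, setting the price to $\Gamma$ and serving only FIXED buyers; this again gives $\Omega(G)$ by the random half-split.

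\textbf{Truncation bound.} Let $o$ be the optimal allocation of the relevant profile, and split its welfare into contributions from buyers with $U_i(M)\le C_B\Gamma$ and those with $U_i(M)>C_B\Gamma$. By monotonicity, dropping the large buyers leaves a feasible allocation for $U^{\le}$. Hence $\OPT(U^{\le})\ge \OPT(U)-\sum_{i:U_i(M)>C_B\Gamma}U_i(M)$.

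The key step is to bound $\E\big[\sum_{i}U_i(M)\1\{U_i(M)>C_B\Gamma\}\big]$ by $O(G)$ in the one-sample setting, and by $O(\max_iu_i(M))$ in the random-order setting. Write $Y:=\max_{i\in\mathrm{FIXED}}U_i(M)$. The sum over large buyers is at most $Y$ times the number of FIXED buyers exceeding $C_B\Gamma$.

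In the random-order setting, condition on the ranking of the values $u_i(M)$. The $k$-th largest value exceeds $\Gamma$ only if all of the top $k$ values avoid STAT, which has probability about $2^{-k}$. Summing over $k$ gives $\E[\#\{i:u_i(M)>\Gamma\}]=O(1)$, so the expected excess is $O(\max_iu_i(M))$.

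In the one-sample setting, $U_i$ is either $s_i$ or $v_i$, and both are draws from $D_i$ independent of STAT's samples. So $\Gamma$ is the maximum of i.i.d.-structured copies over a random half. A symmetrization argument then bounds $\E\big[\sum_i U_i(M)\1\{U_i(M)>\Gamma\}\big]$ by $O(G)$: swap each FIXED coordinate with a ghost copy and use the fact that each buyer's STAT membership is a fair coin.

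Finally, $\E[\OPT(U)]=\Omega(\E[\OPT])$. This holds because $U$ restricted to FIXED has the same law as the real profile restricted to FIXED, and each buyer is in FIXED with probability $1/2$. XOS subadditivity of the optimum under restricting buyers, using supporting clauses, then gives $\E[\OPT(\text{FIXED part})]\ge \tfrac12\E[\OPT]$. The random-order case is identical, with expectation over the order.

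\textbf{Main obstacle.} The hardest step is the counting bound in the one-sample setting. There, $\Gamma$ comes from the samples $s$ while $U$ mixes $s$ and $v$, so the independence structure must be handled carefully to avoid a dependence on $n$. The factor $C_B$ is what rescues this: the expected number of FIXED buyers exceeding $C_B\Gamma$ is $O(1)$ whenever exceeding $\Gamma$ is already rare, and the tail beyond $C_B\Gamma$ is charged to $G$ via $\E[Y]\le 2G$.
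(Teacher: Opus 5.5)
Your plan is correct. Except for one step, it is the paper's proof:
\begin{itemize}
\item the grand-bundle branches are the Rubinstein--Wang--Weinberg single-sample threshold and the classical secretary rule;
\item the truncation loss is at most $\sum_{i\in\mathrm{FIXED}}U_i(M)\1\{U_i(M)>C_B\Gamma\}$;
\item keeping the FIXED buyers' bundles of an optimum retains half of $\E[\OPT]$;
\item your random-order count (the rank-$k$ value can exceed $\Gamma$ only if the top $k$ buyers all avoid STAT) is exactly the paper's random-subset tail lemma.
\end{itemize}

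The real difference is the one-sample tail bound. The paper proves a separate lemma by integrating tails. With $A(x)=\sum_i\Pr[X_i>x]$, it bounds the loss by $2\int_0^\infty A(x)\Pr[\Gamma<2x/C_0]\,dx$. It then uses $\Pr[\Gamma<2x/C_0]\le e^{-qA(x)}$, which is where $C_0$ (hence $C_B$) must be large, together with $\Pr[\max_iX_i>x]\ge 1-e^{-A(x)}$. It closes with the scalar inequality $Ae^{-qA}\le C(q)(1-e^{-A})$.

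Your symmetrization, once written out, is shorter. Put $W_i:=s_i(M)$ for $i\in\mathrm{STAT}$ and $W_i:=U_i(M)$ for $i\in\mathrm{FIXED}$. Both $s_i$ and $U_i$ are $D_i$-distributed and independent of buyer $i$'s coin and of the other buyers. Hence $W$ has independent coordinates distributed as $v_i(M)$, is independent of the split, and satisfies $\Gamma=\max_{i\in\mathrm{STAT}}W_i$. Conditional on $W$ you are in the fixed-numbers setting, so the conditional expected loss is at most $\sum_k 2^{-k}W_{(k)}\le\max_iW_i$, whose expectation is $G$. This holds for every $C_B\ge1$. So your route derives the paper's independent-scale lemma from its fixed-numbers lemma, handles both settings with one counting argument, and shows that $C_B$ need not be large for this bound.

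There are two things to tighten.

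\textbf{The one-sample bound must be conditional on $W$.} The heuristic in your ``main obstacle'' paragraph bounds the loss by $Y$ times the number of FIXED buyers above $C_B\Gamma$, then uses $\E[Y]=O(G)$ and an $O(1)$ expected count. That multiplies expectations of correlated quantities and is not valid as stated. It works only because $Y\le\max_iW_i$ is $W$-measurable and the count has conditional expectation at most $1$ given $W$.

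\textbf{Your secretary rule breaks under ties.} The rule ``first later buyer with $u_i(M)\ge\Gamma$'' fails on one buyer of value $1$ and all others of value $\eps$. Whenever the top buyer is not in STAT, the first post-STAT buyer clears $\Gamma=\eps$, so the top buyer is served with probability only $O(1/n)$. A strict inequality instead fails when the top value is tied. Break ties by independent random priorities, as the paper does.
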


\subsection{One sample}
We prove \Cref{cor:intro-one-sample}. The mechanism flips a fair coin between the grand-bundle branch from \Cref{lem:gb-reduction} and the zero-padded sparse branch specified above. We analyze the sparse branch using the coupled profile $U$ defined above.

We will condition on the realized profile \(U\), but not on the labels.  The following
simple Bayes calculation shows that this conditioning does not bias the labels.  For every
measurable event \(A\) and every label \(a\in\{0,\ldots,K\}\), conditional on
\(i\in\mathrm{FIXED}\) and on \(\Gamma\),
\[
  \Pr[\ell_i=a,\ U_i\in A]
  =
  \frac1{K+1}D_i(A).
\]
Thus \(U_i\sim D_i\) and \(U_i\) is independent of \(\ell_i\).  Since the pairs
\((s_i,v_i,\ell_i)\) are independent across fixed buyers, and \(\Gamma\) depends only on
STAT samples, the labels of buyers in \(\mathrm{FIXED}\) remain independent and uniform
even after conditioning on \(\Gamma\) and the realized profile \(U\).

Let \(U^{\le}\) be obtained from \(U\) by replacing every buyer with
\(U_i(M)>C_B\Gamma\) by the zero valuation.  Conditional on \(\Gamma\) and \(U\), the
sparse branch is exactly the zero-padded role implementation of
\Cref{lem:sparse-scale}, run on the fixed profile \(U\) and benchmarked against
\(U^{\le}\).  Therefore
\[
  \E[\ALG_{\rm sparse}\mid \Gamma,U]
  \ge
  \frac{c}{K+1}\bigl(\OPT(U^{\le})-C'\rho\Gamma\bigr).
\]
Taking expectations, using \(\Gamma\le \OPT(s)\), and applying
\Cref{lem:gb-reduction}, we get
\[
  \E[\ALG_{\rm sparse}]
  \ge
  \Omega\!\left(\frac{\E[\OPT]}{K+1}\right)
  -
  O\!\left(\frac{G}{K+1}\right)
  -
  O\!\left(\frac{\rho\,\E[\OPT]}{K+1}\right).
\]
Choose \(\rho\) sufficiently small.  The negative \(G\)-term is covered by mixing in the
grand-bundle branch, whose expected welfare is \(\Omega(G)\).  Therefore the mixed
mechanism satisfies
\[
  \E[\ALG]
  \ge
  \Omega\!\left(\frac{\E[\OPT]}{K+1}\right)
  =
  \Omega\!\left(\frac{\E[\OPT]}{\log\!\log (m+2)}\right).
\]

The mechanism is universally truthful.  In the sparse branch, STAT, training, and ignored
buyers receive nothing and pay nothing; allocation-role buyers face prices fixed before
they are served.  The grand-bundle branch is also a posted-price mechanism.  This proves
\Cref{cor:intro-one-sample}.

\subsection{Random order}

We prove \Cref{cor:intro-random-order}.  Fix an XOS valuation profile \(u\), and expose a
uniformly random arrival order.  The mechanism again mixes a grand-bundle branch with a
sparse item-pricing branch.

The grand-bundle branch is the classical secretary threshold rule applied to the grand-bundle values $u_i(M)$, using independent random priorities to break ties between equal values. By \Cref{lem:gb-reduction}, it obtains expected welfare $\Omega(x^*)$, where $x^*=\max_i u_i(M)$. The sparse branch uses the chronological-block implementation specified above.

Let \(u^{\le}\) be the post-STAT profile after zeroing out every buyer \(i\) with
\(u_i(M)>C_B\Gamma\).  Conditional on \(\mathrm{STAT}\), \(\Gamma\), and the remaining
buyers, the sparse branch is the chronological-block implementation of
\Cref{lem:sparse-scale}, run on the full post-STAT profile and benchmarked against
\(u^{\le}\).  Therefore
\[
  \E[\ALG_{\rm sparse}\mid \mathrm{STAT},\Gamma]
  \ge
  \frac{c}{K+1}\bigl(\OPT(u^{\le})-C'\rho\Gamma\bigr).
\]
Since \(\Gamma\le x^*\le \OPT(u)\), taking expectation over the random order and applying
\Cref{lem:gb-reduction} gives
\[
  \E[\ALG_{\rm sparse}]
  \ge
  \Omega\!\left(\frac{\OPT(u)}{K+1}\right)
  -
  O\!\left(\frac{x^*}{K+1}\right)
  -
  O\!\left(\frac{\rho\,\OPT(u)}{K+1}\right).
\]
Choose \(\rho\) sufficiently small.  The negative \(x^*\)-term is covered by the
grand-bundle secretary branch.  Hence the mixed mechanism satisfies
\[
  \E[\ALG]
  \ge
  \Omega\!\left(\frac{\OPT(u)}{K+1}\right)
  =
  \Omega\!\left(\frac{\OPT(u)}{\log\!\log (m+2)}\right).
\]

All prices offered to allocation-block buyers are fixed before those buyers are served,
and all STAT, training, ignored, and dummy buyers receive nothing and pay nothing.  Thus
the mechanism is universally truthful, proving \Cref{cor:intro-random-order}.

\medskip
\medskip

\paragraph{Acknowledgments.}
The authors used GPT-5.5 Pro for assistance with writing and to  explore proof strategies. The authors have checked the arguments and take full responsibility for all content of the paper.

\appendix


\section{Deferred Proofs for \Cref{sec:SSRO}}
\label{app:ssro}

\subsection{Grand-bundle scale reduction}
\label{app:gb-reduction}

We first record the elementary tail estimates used by the scale reduction.

\begin{lemma}[Independent scale tail]\label{lem:ind-scale-tail}
Let \(X_1,\ldots,X_n\) be independent nonnegative random variables, and let
\(S_1,\ldots,S_n\) be independent copies.  Independently put each index in
\(\mathrm{STAT}\) with a fixed constant probability \(q\in(0,1)\), and set
  $\Gamma:=\max_{i\in\mathrm{STAT}}S_i$,
with \(\Gamma=0\) if \(\mathrm{STAT}=\emptyset\).  For every sufficiently large constant
\(C_0=C_0(q)\),
\[
  \E\left[
    \sum_{i\notin\mathrm{STAT}}X_i\1\{X_i>C_0\Gamma\}
  \right]
  \le
  O_q\!\left(\E[\max_iX_i]\right).
\]
\end{lemma}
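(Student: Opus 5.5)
The plan is to reduce to a one-dimensional tail computation. For each $i$, write $\Gamma_{-i}:=\max_{j\in\mathrm{STAT}\setminus\{i\}}S_j$. On the event $i\notin\mathrm{STAT}$ we have $\Gamma=\Gamma_{-i}$. Moreover $\Gamma_{-i}$ depends only on the membership coins and copies $S_j$ for $j\ne i$, so it is independent of $X_i$ and of $i$'s coin. Hence the left-hand side equals
\[
  (1-q)\sum_i\E\bigl[X_i\,\Prb[\Gamma_{-i}<X_i/C_0\mid X_i]\bigr].
\]
Let $p_j(x):=\Prb[X_j\ge x]$ and $\Lambda(x):=\sum_jp_j(x)$, which is nonincreasing in $x$. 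By independence and $1-y\le e^{-y}$,
\[
  \Prb[\Gamma_{-i}<y]=\prod_{j\ne i}\bigl(1-q\,p_j(y)\bigr)\le \exp\Bigl(-q\sum_{j\ne i}p_j(y)\Bigr)\le e^{q}\,e^{-q\Lambda(y)}.
\]
The last step uses $p_i\le1$ to absorb the missing self-term.

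Next I fix the scale $t^*:=\inf\{x:\Lambda(x)\le1\}$ and compare everything to $\E[\max_iX_i]$. Two facts are needed. First, $\Prb[\max_jX_j\ge x]\ge 1-e^{-\Lambda(x)}\ge (1-e^{-1})\min\{1,\Lambda(x)\}$. This gives $\E[\max X]\ge(1-e^{-1})\,t^*$, because $\Lambda(x)>1$ for $x<t^*$. It also gives, for $x\ge t^*$, that $\Lambda(x)\le1$ and so $\Lambda(x)=O(\Prb[\max X\ge x])$; I will handle right-continuity at $t^*$ by working with $\Lambda(t^{*+})$, or slightly perturbing $t^*$. Second, the \emph{upper region} $X_i\ge t^*$ is bounded crudely by dropping the $\Gamma$-factor:
\[
  \sum_i\E[X_i\1\{X_i\ge t^*\}]= t^*\Lambda(t^*)+\int_{t^*}^\infty\Lambda(x)\,dx=O\bigl(t^*+\E[\max X]\bigr)=O(\E[\max X]).
\]

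The \emph{lower region} $X_i<t^*$ is where the threshold $C_0\Gamma$ matters, and I expect it to be the main step. I partition it into dyadic shells $I_k:=[2^{-k-1}t^*,2^{-k}t^*)$ for $k\ge0$. For $X_i\in I_k$ we have $X_i\le 2^{-k}t^*$, and the threshold point satisfies $X_i/C_0\le 2^{-k}t^*/C_0$. Choosing $C_0\ge2$ makes this at most $2^{-k-1}t^*$, so $\Lambda(X_i/C_0)\ge\Lambda(2^{-k-1}t^*)=:\lambda_k$. This monotonicity alignment between the weight and the exponent is the crux. Summing over $i$, the contribution of shell $k$ is at most
\[
  2^{-k}t^*\cdot\sum_ip_i(2^{-k-1}t^*)\cdot e^{q}e^{-q\lambda_k}
  =2^{-k}t^*\,e^{q}\,\lambda_ke^{-q\lambda_k}\le 2^{-k}t^*\cdot\frac{e^{q-1}}{q}.
\]
Summing the geometric series over $k\ge0$ gives $O_q(t^*)=O_q(\E[\max X])$. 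Adding the two regions proves the lemma for every $C_0\ge2$, which in particular covers all sufficiently large $C_0$. The only delicate point I anticipate is the strict-versus-weak inequality in the tails, i.e.\ $X_i>C_0\Gamma$ versus $\Gamma<X_i/C_0$ and $p_j$ defined with $\ge$. This is handled by using $\Prb[S_j\ge y]$ consistently, and it costs only constants.
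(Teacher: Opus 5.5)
Your proof is correct, given the boundary fix at $t^*$ that you mention, but it takes a different route from the paper. The paper never introduces a critical scale. It applies the tail-integral bound $\E[X\1\{X>a\}]\le 2\int_{a/2}^\infty\Prb[X>x]\,dx$ with $a=C_0\Gamma$. It simply discards the constraint $i\notin\mathrm{STAT}$, which is legitimate because $\Gamma$ is built from the copies $S$ and is independent of $X$ in any case. This bounds the left side by $2\int_0^\infty A(x)\,\Prb[\Gamma<2x/C_0]\,dx$, where $A(x)=\sum_i\Prb[X_i>x]$. The paper then uses three facts: $\Prb[\Gamma<2x/C_0]\le e^{-qA(x)}$, $\Prb[\max_iX_i>x]\ge 1-e^{-A(x)}$, and the scalar inequality $Ae^{-qA}\le C(q)(1-e^{-A})$, which compares the integrands pointwise.

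Your split at $t^*$ is that scalar inequality broken into its two regimes. Above $t^*$ you have $\Lambda\le 1$, so $\Lambda\lesssim 1-e^{-\Lambda}$ and the $\Gamma$-factor is not needed at all. Below $t^*$ you use $\lambda e^{-q\lambda}\le 1/(eq)$ on dyadic shells, which play the role of the paper's shift from $a$ to $a/2$. The paper's version is shorter and needs no boundary bookkeeping. Yours makes the source of the loss explicit: the tail mass above $t^*$, plus a geometric series of size $O_q(t^*)$. It pays only the harmless factor $e^q$ for removing $S_i$ from $\Gamma$, a step the paper shows is unnecessary.

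The boundary issue is more than a constant-factor matter. With $p_j(x)=\Prb[X_j\ge x]$ and upper region $\{X_i\ge t^*\}$, the term $t^*\Lambda(t^*)$ can equal $n\,t^*$. For example, all $X_j\equiv 1$ gives $t^*=1$ and $\Lambda(1)=n$, while $\E[\max_jX_j]=1$. So you must use the version you propose. Take the upper region to be $\{X_i>t^*\}$, where $\sum_j\Prb[X_j>t^*]=\Lambda(t^{*+})\le 1$. Close the shells on the right, $(2^{-k-1}t^*,2^{-k}t^*]$; your stated shells $[2^{-k-1}t^*,2^{-k}t^*)$ miss the point $t^*$. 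Then the atom at $t^*$ falls into shell $0$ and is absorbed by $\lambda_0e^{-q\lambda_0}\le 1/(eq)$. Alternatively, when $t^*>0$, replacing $t^*$ by any $t'\in(t^*,2t^*]$ works equally well.
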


\begin{proof}
Let
\[
  A(x):=\sum_i\Prb[X_i>x],
  \qquad
  X^*:=\max_iX_i.
\]
For every deterministic \(a\ge0\) and every nonnegative random variable \(X\),
\[
  \E[X\1\{X>a\}]
  \le
  2\int_{a/2}^{\infty}\Prb[X>x]\,dx .
\]
Therefore, for \(C_0\ge4\),
\[
\begin{aligned}
  \E\left[
    \sum_{i\notin\mathrm{STAT}}X_i\1\{X_i>C_0\Gamma\}
  \right]
  &\le
  2\int_0^\infty A(x)\Prb[\Gamma<2x/C_0]\,dx .
\end{aligned}
\]
Since \(2/C_0\le1\), the tail sum of the copies \(S_i\) at level \(2x/C_0\) is at least
\(A(x)\).  Hence
$  \Prb[\Gamma<2x/C_0]\le \exp(-qA(x))$.
Also,
\[
  \Prb[X^*>x]
  =
  1-\prod_i(1-\Prb[X_i>x])
  \ge
  1-e^{-A(x)}.
\]
The scalar inequality
  $A e^{-qA}\le C(q)(1-e^{-A})$ for 
 $A\ge0$
implies
\[
  A(x)\Prb[\Gamma<2x/C_0]
  \le
  C(q)\Prb[X^*>x].
\]
Integrating over \(x\) proves the lemma.
\end{proof}

\begin{lemma}[Random-subset scale tail]\label{lem:subset-scale-tail}
Fix nonnegative numbers \(x_1,\ldots,x_n\), and put each index in
\(\mathrm{STAT}\) independently with a fixed constant probability \(q\in(0,1)\).  Let
\[
  \Gamma:=\max_{i\in\mathrm{STAT}}x_i,
  \qquad
  x^*:=\max_i x_i .
\]
For every constant \(C_0>1\),
\[
  \E_{\mathrm{STAT}}\left[
    \sum_{i\notin\mathrm{STAT}}x_i\1\{x_i>C_0\Gamma\}
  \right]
  \le
  O_q(x^*).
\]
The same conclusion holds, up to changing the constant, when \(\mathrm{STAT}\) is a
uniformly random fixed-size subset of constant density.
\end{lemma}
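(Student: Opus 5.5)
We bound the expectation term by term, using only the event that an index survives the threshold. Fix an ordering of the indices by nonincreasing value, $x_{(1)}\ge x_{(2)}\ge\cdots\ge x_{(n)}$, breaking ties by a fixed rule. Indices with $x_i=0$ never satisfy $x_i>C_0\Gamma\ge0$, so we may ignore them.

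The key observation uses $C_0>1$. Suppose the index of rank $k$ contributes to the sum, with value $x_{(k)}>0$. Then it is not in $\mathrm{STAT}$, and every $j\in\mathrm{STAT}$ satisfies $x_j\le\Gamma<x_{(k)}/C_0<x_{(k)}$. Every index of rank $l<k$ has $x_{(l)}\ge x_{(k)}>x_{(k)}/C_0$, so none of them can lie in $\mathrm{STAT}$. Hence the contribution event is contained in the event that the $k$ indices of ranks $1,\ldots,k$ all avoid $\mathrm{STAT}$.

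For the independent-membership model, this event has probability $(1-q)^k$. By linearity of expectation,
\[
  \E_{\mathrm{STAT}}\Bigl[\sum_{i\notin\mathrm{STAT}}x_i\1\{x_i>C_0\Gamma\}\Bigr]
  \le\sum_{k=1}^n x_{(k)}(1-q)^k
  \le x^*\sum_{k\ge1}(1-q)^k
  =\frac{1-q}{q}\,x^*,
\]
which is $O_q(x^*)$.

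For a uniformly random subset of fixed size $s\ge qn$ with constant density $q$, the probability that $k$ specified indices all avoid $\mathrm{STAT}$ is hypergeometric:
\[
  \prod_{l=0}^{k-1}\frac{n-s-l}{n-l}\le\Bigl(1-\frac sn\Bigr)^k\le(1-q)^k .
\]
This is zero once $k>n-s$. The same geometric sum then gives the bound. If the density is only approximate, for example $s=\lfloor n/2\rfloor$, use $s/n\ge q'$ for a slightly smaller constant $q'$ when $n$ is large. Small $n$ is trivial, since the sum is at most $n\,x^*=O(x^*)$.

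I do not expect a genuine obstacle. The only care needed is handling ties and zero values in the rank argument, which is exactly where the strict inequality $C_0>1$ is used.
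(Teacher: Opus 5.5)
Your proof is correct and follows the paper's argument. Both order by value, observe that a contributing index forces every higher-ranked index out of $\mathrm{STAT}$, and sum the geometric bound $(1-q)^k$. Your hypergeometric estimate $\prod_{l<k}\frac{n-s-l}{n-l}\le(1-s/n)^k$ simply makes the paper's $\exp(-\Omega(k))$ claim for fixed-size subsets explicit (and only $C_0\ge1$ is actually needed, via $x_{(l)}\ge x_{(k)}>C_0\Gamma\ge\Gamma$).
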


\begin{proof}
Order the indices so that \(x_1\ge x_2\ge\cdots\ge x_n\).  If index \(i\) contributes,
then \(i\notin\mathrm{STAT}\) and every index \(h\le i\) is also outside
\(\mathrm{STAT}\), because \(x_h\ge x_i>C_0\Gamma \geq \Gamma\).  Thus the contribution
probability is at most \((1-q)^i\), and
\[
  \E\left[
    \sum_{i\notin\mathrm{STAT}}x_i\1\{x_i>C_0\Gamma\}
  \right]
  \le
  \sum_{i\ge1}x_i(1-q)^i
  \le
  x^*\sum_{i\ge1}(1-q)^i
  =
  O_q(x^*).
\]
For a fixed-size subset of constant density, the probability that the first \(i\) ordered
indices are all outside \(\mathrm{STAT}\) is at most \(\exp(-\Omega(i))\), giving the same
bound.
\end{proof}

\begin{proof}[Proof of \Cref{lem:gb-reduction}]
For the one-sample grand-bundle branch, apply the single-sample single-choice prophet inequality of Rubinstein, Wang, and Weinberg~\cite{RubinsteinWW20} to the independent one-dimensional random variables \(v_i(M)\).  This gives a posted price for the grand bundle, computed from the sampled grand-bundle values, with expected welfare
\[
  \Omega\!\left(\E_{v\sim D}[\max_i v_i(M)]\right)=\Omega(G).
\]

We now prove the scale-loss bound in the one-sample setting.  We use the following
distributionally equivalent coupling of the fixed-buyer profile.  First draw an
independent full profile \(\bar U=(\bar U_i)_i\sim D\), independently of the
\(\mathrm{STAT}/\mathrm{FIXED}\) split, and then set \(U_i=\bar U_i\) for
\(i\in\mathrm{FIXED}\).  This has the same joint distribution as the coupled profile used
in the one-sample mechanism, because for every fixed buyer the role label is independent
of \(s_i,v_i\), and both \(s_i\) and \(v_i\) are independent draws from \(D_i\).

Let \((O_i)_i\) be an optimal allocation for \(\bar U\), with supporting XOS clauses.
Keeping only the bundles \(O_i\) of buyers \(i\in\mathrm{FIXED}\) is feasible for the
restricted profile \(U\).  Since each buyer is in \(\mathrm{FIXED}\) with constant
probability independently of \(\bar U\),
\[
  \E[\OPT(U)]
  ~\ge~
  \E\!\left[\sum_{i\in\mathrm{FIXED}} \bar U_i(O_i)\right]
  ~=~
  \Pr[i\in\mathrm{FIXED}]\,\E[\OPT]
  ~=~
  \Omega(\E[\OPT]).
\]
Let \(U^{\le}\) be obtained by zeroing out buyers with \(U_i(M)>C_B\Gamma\).  The welfare
lost by this deletion is at most
\[
  \sum_{i\in\mathrm{FIXED}}U_i(M)\1\{U_i(M)>C_B\Gamma\},
\]
because the total supporting contribution assigned to buyer \(i\) in any XOS allocation
is at most \(U_i(M)\).  Applying \Cref{lem:ind-scale-tail}, with \(C_B\) sufficiently
large, bounds the expectation of this loss by \(O(G)\).  Hence
\[
  \E[\OPT(U^{\le})]\ge \Omega(\E[\OPT])-O(G).
\]

For random order, the grand-bundle branch is the classical secretary threshold rule
applied to the grand-bundle values \(u_i(M)\).  With constant probability the maximum
grand-bundle buyer appears after the sample phase and the best earlier grand-bundle value
lies in the sample phase; on this event the mechanism selects the maximum grand-bundle
buyer.  Thus the branch obtains \(\Omega(x^*)\).

It remains to prove the random-order scale-loss bound.  Removing the initial STAT block
preserves a constant fraction of the fixed optimum in expectation: keep from a fixed
optimal allocation only the buyers outside the STAT block.  Now delete from the remaining
buyers all \(i\) with \(u_i(M)>C_B\Gamma\).  The lost welfare is at most
\[
  \sum_{i\notin\mathrm{STAT}}u_i(M)\1\{u_i(M)>C_B\Gamma\}.
\]
By \Cref{lem:subset-scale-tail}, its expectation is \(O(x^*)\).  Therefore
\[
  \E[\OPT(u^{\le})]\ge \Omega(\OPT(u))-O(x^*).
\]
This proves the lemma.
\end{proof}

\subsection{Sparse scale lemma}
\label{app:sparse-scale}

We prove \Cref{lem:sparse-scale}.  Fix a profile \(u\) seen by the sparse branch and a
benchmark profile \(u'\) obtained from \(u\) by zeroing out some buyers.  Fix an optimal
allocation for \(u'\) and supporting XOS clauses.  If item \(j\) is allocated to buyer
\(i\), write \(i(j)=i\) and let \(w_j\) be its supporting contribution; if \(j\) is
unallocated, set \(w_j=0\).  Thus
  $\sum_j w_j=\OPT(u')$.
Because every nonzero buyer in \(u'\) has grand-bundle value at most \(C_B\Gamma\), every
allocated contribution satisfies \(w_j\le C_B\Gamma=B/(2e^\tau)\).

Define the midrange benchmark
\[
  \mathrm{Mid}_\Gamma(u')
  :=
  \sum_{j:\,2e^\tau L\le w_j\le B/(2e^\tau)}w_j .
\]
The omitted small contributions have total value at most
  $2e^\tau Lm=O(\rho\Gamma)$.
There are no large contributions by the scale bound.  Hence
\begin{equation}\label{eq:mid-large-enough}
  \mathrm{Mid}_\Gamma(u')\ge \OPT(u')-O(\rho\Gamma).
\end{equation}
It remains to show that the sparse branch captures an \(\Omega(1/(K+1))\) fraction of
\(\mathrm{Mid}_\Gamma(u')\).

We use the stored-round convention specified at the start of \Cref{sec:SSRO}. In training round $t$, the item learners output their distributions and the mechanism draws and stores $P_t$ before the current role or block is exposed. For feedback to item $j$, the coordinates $P_{-j,t}$ remain fixed and only $j$'s own base level and perturbation are varied. By \Cref{lem:own-price}, this gives a valid perturbation-smoothed sale curve.

\subsubsection*{Zero-padded roles}

We first prove the zero-padded role guarantee.

\begin{lemma}[Role exposure]\label{lem:role-exposure}
Fix a training role \(t\le H\).  Let the labels \(\ell_i\) be independent and uniform on
\(\{0,1,\ldots,K\}\), and let \(\mathcal H_t\) be the history before role \(t\) is
exposed: the learner's private randomness, the stored price vector \(P_t\), and the labels
and valuations in roles \(1,\ldots,t-1\).  Then:
\begin{enumerate}[label=(\alph*),leftmargin=2em]
  \item If \(\Phi\) is any bounded, or nonnegative integrable, measurable function of a
  price vector and a role profile, and \(P_t\) is \(\mathcal H_t\)-measurable, then
  \[
    \E[\Phi(P_t,V^0)]=\E[\Phi(P_t,V^t)].
  \]

  \item Fix a buyer \(i\).  If \(X_t\in[0,1]\) is \(\mathcal H_t\)-measurable, then
  \[
    \E[\1\{\ell_i=t\}X_t]
    \ge
    \frac1{K+1}\left(\E[X_t]-\frac{t-1}{K+1}\right).
  \]

  \item If \(Z_t\in[0,1]\) is measurable with respect to \(\mathcal H_t\), the labels and
  valuations in role \(t\) of buyers other than \(i\), and any additional randomness
  independent of \(\ell_i\) conditional on \(\mathcal H_t\), then
  \[
    \E[\1\{\ell_i=t\}Z_t]
    \le
    \frac1{K-t+2}\E[Z_t]
    \le
    \frac{1+O(H/K)}{K+1}\E[Z_t],
  \]
  where the last inequality uses \(t\le H\) and \(K\ge2H\).
\end{enumerate}
\end{lemma}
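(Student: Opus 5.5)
All three parts rest on one exchangeability fact. Conditional on $\mathcal H_t$, the buyers not yet exposed (those with $\ell_i\notin\{1,\ldots,t-1\}$) have labels that are i.i.d.\ uniform on $\{0,t,t+1,\ldots,K\}$. Their role valuations $U_i$ are independent of the labels and distributed as $D_i$. I will first establish this from the Bayes calculation already recorded: $U_i\sim D_i$ is independent of $\ell_i$, and the pairs are independent across buyers. Exposing roles $1,\ldots,t-1$ reveals only the labels and valuations of buyers in those roles. The stored vector $P_t$ is drawn from learner randomness together with that revealed history. For each remaining buyer, the posterior on $\ell_i$ is therefore uniform on the $K-t+2$ unexposed labels, and $U_i$ is still independent and $\sim D_i$. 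Conditioning on the event that buyer $i$ was not in roles $<t$ is what produces the renormalization.

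\textbf{Part (a).} Given $\mathcal H_t$, the map that swaps labels $0$ and $t$ preserves the conditional law of the unexposed labels and valuations. It sends the zero-padded profile $V^0$ to $V^t$. The positions of the exposed buyers carry zero valuations in both $V^0$ and $V^t$, since those buyers have labels in $\{1,\ldots,t-1\}$. Hence $(P_t,V^0)$ and $(P_t,V^t)$ have the same conditional law given $\mathcal H_t$. Taking expectations of $\Phi$ (bounded or nonnegative, so Fubini applies) gives the identity.

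\textbf{Part (b).} I would write
\[
\E[\1\{\ell_i=t\}X_t\mid\mathcal H_t]=X_t\cdot\frac{\1\{\ell_i\ge t\ \text{or}\ \ell_i=0\}}{K-t+2},
\]
where the indicator is $\mathcal H_t$-measurable, because whether $i$ has been exposed is known. Dropping the denominator improvement to $1/(K+1)$, it remains to bound $\E[X_t\1\{i\text{ unexposed}\}]$ from below by $\E[X_t]-\Pr[i\text{ exposed}]$. This uses $X_t\le1$ and $\Pr[\ell_i\in\{1,\ldots,t-1\}]=(t-1)/(K+1)$.

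\textbf{Part (c).} Conditionally on $\mathcal H_t$, the extra randomness, and the role-$t$ data of the other buyers, $\ell_i$ is still uniform over the unexposed labels when $i$ is unexposed. This holds because the labels are independent across buyers and the extra randomness is independent of $\ell_i$. One subtlety needs care: knowing the role-$t$ labels of other buyers does not affect $\ell_i$. It follows from independence across buyers, but it has to be checked that $i$'s own valuation is not part of the conditioning. This is the step I expect to be the main obstacle, since $Z_t$ may depend on the role-$t$ profile, which excludes $i$ by hypothesis. Given that, $\Pr[\ell_i=t\mid\cdot]\le 1/(K-t+2)$, and multiplying by $Z_t\ge0$ and taking expectations gives the first inequality. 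The second is arithmetic: for $t\le H\le K/2$,
\[
\frac{K+1}{K-t+2}\le \frac{1}{1-H/(K+1)}=1+O(H/K).
\]
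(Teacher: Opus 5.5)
Your proposal is correct and follows the paper's proof essentially step for step. For (a) you swap labels $0$ and $t$ conditional on $\mathcal H_t$. For (b) you use the conditional probability $1/(K-t+2)$ on the event that $i$ is unexposed, then lower-bound it by $1/(K+1)$ and subtract the exposure probability $(t-1)/(K+1)$. For (c) you use conditional independence of $\{\ell_i=t\}$ from $Z_t$ on that same event, which is exactly what the paper's measurability hypothesis supplies, and your arithmetic for the $1+O(H/K)$ factor is fine.
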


\begin{proof}
Condition on \(\mathcal H_t\).  For every buyer not already revealed in roles
\(1,\ldots,t-1\), the conditional distribution of her label is uniform over the remaining
labels.  In particular, labels \(0\) and \(t\) have equal conditional probabilities.
Swapping the names of labels \(0\) and \(t\) preserves the conditional law and leaves
\(P_t\) fixed, proving part (a) for bounded \(\Phi\); the nonnegative case follows by
truncation.

For part (b), let \(A_{i,t}\) be the event that buyer \(i\) has not appeared in roles
\(1,\ldots,t-1\).  Conditional on \(\mathcal H_t\), the probability of \(\ell_i=t\) is
zero on \(A_{i,t}^c\) and is \(1/(K-t+2)\) on \(A_{i,t}\).  Since
\(1/(K-t+2)\ge1/(K+1)\),
\[
  \E[\1\{\ell_i=t\}X_t]
  \ge
  \frac1{K+1}\E[X_t\1_{A_{i,t}}]
  \ge
  \frac1{K+1}\left(\E[X_t]-\Prb[A_{i,t}^c]\right).
\]
Here \(\Prb[A_{i,t}^c]=(t-1)/(K+1)\).

For part (c), conditional on \(\mathcal H_t\) and \(A_{i,t}\), the event
\(\{\ell_i=t\}\) is independent of \(Z_t\) by the measurability assumption.  Therefore
\[
  \E[\1\{\ell_i=t\}Z_t]
  =
  \frac1{K-t+2}\E[Z_t\1_{A_{i,t}}]
  \le
  \frac1{K-t+2}\E[Z_t].
\]
The comparison with \(1/(K+1)\) follows from \(t\le H\) and \(K\ge2H\).
\end{proof}

\begin{lemma}[Midrange capture for zero-padded roles]
\label{lem:midrange-roles}
In the zero-padded role implementation,
\[
  \E[\ALG_{\rm sparse}]
  \ge
  \frac{c}{K+1}\,\mathrm{Mid}_\Gamma(u')
\]
for an absolute constant \(c>0\).
\end{lemma}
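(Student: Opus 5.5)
The plan is to rerun the FGL utility-plus-revenue accounting of \Cref{thm:cut-median} item by item. We use the midrange contributions $w_j$ of the benchmark $u'$ in place of $\mu_j$, and the role-exposure lemma \Cref{lem:role-exposure} in place of independence of the fresh profile. Since $t^\star$ is uniform on $[H]$, we have $\E[\ALG_{\rm sparse}]=\frac1H\sum_{t\le H}\E[\Welfare(P_t,V^0)]$. Welfare is a nonnegative measurable function of the stored price vector and the role profile, and $P_t$ is $\mathcal H_t$-measurable. So \Cref{lem:role-exposure}(a) lets us replace $V^0$ by $V^t$ inside each summand. Allocation-role welfare at $P_t$ therefore equals, in expectation, welfare of the same posted-price run on the $t$-th training role, which is exactly the environment whose sale curves were fed to the learners. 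For each midrange item $j$ we then lower bound this welfare by a revenue term plus a utility term. The revenue term is $P_{j,t}\cdot\1\{j\text{ sells in }V^t\}$. The utility term is the item-$j$ share $(w_j-P_{j,t})_+$ of the owner $i(j)$'s surplus, counted on the event that $\ell_{i(j)}=t$ and $j$ is still available when $i(j)$ arrives.

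The utility term is where the $1/(K+1)$ enters. Let $X_t$ be the indicator that $j$ is available when $i(j)$ arrives in role $t$. Its dependence on $\ell_{i(j)}$ is the issue, so I would bound it from below by the event that $j$ remains unsold in the role profile with buyer $i(j)$ removed. That event is measurable with respect to $\mathcal H_t$ together with the other buyers' role-$t$ labels and valuations. Then \Cref{lem:role-exposure}(b)–(c) give the following: the owner lands in role $t$ and finds $j$ available with probability at least $\frac{1}{K+1}(1-\bar q_{j,t})$, up to an additive $O(H/K)$ error. Here $\bar q_{j,t}$ is the probability that $j$ sells in role $t$. Part (c) also shows that buyer $i(j)$'s own purchases shift sale probabilities by only a $(1+O(H/K))/(K+1)$ factor, and $H/K=1/C_K$ is small. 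After conditioning on the base level $X_{j,t}=x$, the per-item certificate becomes
\[
  x\,\bar q + \frac{1-\bar q}{K+1}\bigl(w_j-e^\tau x\bigr)_+ - O\!\left(\frac{H}{K}\right)\frac{w_j}{K+1}.
\]
This has the shape of \eqref{eq:item-certificate} with $s_j=w_j/(K+1)$ and $y_j\le 1$. Applying \Cref{lem:signed-one-level} and \Cref{obseveration:monotone-cut-mixture} then gives $\frac{w_j}{K+1}\bigl(\frac12-O(\tau)-O(1/C_K)-\Delta_j\bigr)$, up to bottom-level losses of order $e^\tau L$. Here $\Delta_j$ is the cut-calibration error of the empirical role-based mixture over $t\le H$.

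It then remains to show that $\E[\Delta_j]$ is at most a small constant. I would obtain this from \Cref{thm:cut-regret} and \Cref{lem:cut-calibration}. The observed sale curves are exactly the role-$t$ curves, and they are $\alpha$-Lipschitz by smoothing. The concentration step uses Azuma with respect to the filtration $\mathcal H_t$, so $H=C\log(2N)$ rounds give error $O(1/\sqrt C)+\alpha$, which is small for the chosen constants. Summing over midrange items and choosing $\tau,\beta,1/C,1/C_K$ small yields $\frac{c}{K+1}\mathrm{Mid}_\Gamma(u')$, with the bottom-level loss absorbed by $w_j\ge 2e^\tau L$.

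I expect two steps to be the main obstacles. The first is the top boundary. In \Cref{thm:cut-median}, the bound $q_j(N)\le1/2$ came from $\Prb[\OPT\ge B]\le1/2$. Here $u$ is not truncated, so I would instead argue directly. A sale at price at least $B$ in role $t$ needs a role-$t$ buyer of value at least $B$. Since $B=2e^\tau C_B\Gamma$ is well above the scale, I would either charge such events to the grand-bundle reduction or show, using \Cref{lem:role-exposure}(c), that in a sparse role the sale probability at the top is at most $1/2$. The second is the bottom of the range. When $w_j/(K+1)<L$, the normalized price $z$ is already above $e^{-\tau}$ at level $1$, and I would check that the certificate then only improves. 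Since the target constant $c$ is loose, I would settle for whichever of these boundary arguments is simplest.
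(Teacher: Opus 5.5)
Your outer structure essentially matches the paper. You use role exposure (a) to pass to training role $t$, bound availability by a statistic that does not depend on buyer $i(j)$, apply parts (b)--(c) of \Cref{lem:role-exposure}, and finish with cut regret. The gap is where you push the per-item certificate through \Cref{lem:signed-one-level} and \Cref{obseveration:monotone-cut-mixture}: this fails at the bottom boundary, and your proposed top-boundary fixes are not valid.

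With your normalization $s_j=w_j/(K+1)$ and $y_j=1/(K+1)$ (needed so that $(s_j-e^\tau xy_j)_+=\frac{1}{K+1}(w_j-e^\tau x)_+$), one has $z_r=p_r/w_j$. The observation requires $\theta_1=-1$, and forcing this at level $1$ costs the additive term $\lambda_j(1)e^\tau L/(K+1)$. The midrange condition $w_j\ge 2e^\tau L$ only bounds this by $\lambda_j(1)\frac{w_j}{2(K+1)}$, which is the same size as your main term. Concretely, take $w_j=2e^\tau L$, put all of $\pi_{j,t}$ on level $1$, and let $q_j(1)\le\frac12$, so $\Delta_j=0$. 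Your chain then gives a nonpositive bound, although the certificate itself is of order $\frac{w_j}{K+1}$. Summed over items, your argument yields at best $\frac{1}{K+1}\bigl((\frac12-\eps')\mathrm{Mid}_\Gamma(u')-e^\tau Lm\bigr)$ for a small constant $\eps'$. That additive $O(\rho\Gamma/(K+1))$ loss is harmless for \Cref{lem:sparse-scale}, but it is not the stated lemma. Your remark about $w_j/(K+1)<L$ belongs to the other normalization $y_j=1$, under which the utility bound gets weaker, not stronger.

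At the top there is actually nothing to fix. Since $p_N\ge B=2e^\tau C_B\Gamma$ and $w_j\le C_B\Gamma$, we get $z_N\ge 2e^\tau$, so $\coef(z_N)=1$ directly. Both of your proposed fixes are unsuitable anyway. The grand-bundle branch lies outside this lemma. A top-level sale probability of at most $\frac12$ need not hold for an untruncated fixed profile $u$, since many high-value buyers may land in a role.

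The paper avoids every boundary issue by not aiming for the factor $\frac12$. For each midrange item it uses a single cut $b(j)$, the largest index with $p_be^\tau\le w_j/2$; the midrange condition gives $1\le b(j)<N$, so this is a proper cut. Levels $r\le b(j)$ certify utility at least $w_j/2$ whenever $j$ is still available to its owner. Levels $r>b(j)$ certify revenue at least $w_j/(2e^\tau)$ on a sale. No floor loss appears.

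The revenue term is kept at full weight, using the trivial bound $\E[S^{\rm high}_{j,t}]\ge\frac1{K+1}\E[S^{\rm high}_{j,t}]$, and (b)--(c) are applied only to the utility term. This gives $\frac{1}{K+1}\E[S^{\rm high}_{j,t}+\Lambda_{j,t}-S^{\rm low}_{j,t}-O(H/K)]$. Averaged over $t$, this is at least $\frac12-\widehat\Delta_j$ by the definition of the cut statistic at $b(j)$. Since $\widehat\Delta_j$ is the empirical error of the observed role-$t$ curves, \Cref{lem:cut-calibration} and \Cref{thm:cut-regret} bound it on every realization, so no martingale step is needed.

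Replacing your level-wise monotone-coefficient accounting with this single-cut certificate repairs your proof. The cost is a constant factor, which the lemma allows.
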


\begin{proof}
Let \(V^t\) denote the role-\(t\) profile induced by the served profile \(u\).  Since the
final sparse branch chooses \(t\sim\Unif([H])\) and posts \(P_t\) to role \(0\),
\[
  \E[\ALG_{\rm sparse}]
  ~=~
  \frac1H\sum_{t=1}^H\E[\Welfare(P_t,V^0)]
  ~=~
  \frac1H\sum_{t=1}^H\E[\Welfare(P_t,V^t)],
\]
where the second equality is \Cref{lem:role-exposure}(a).

Fix a midrange item \(j\), write \(i=i(j)\), \(w=w_j\), and let \(b=b(j)\) be the largest
grid index such that \(p_be^\tau\le w/2\).  The midrange definition implies
\(1\le b<N\).  If \(r\le b\), every perturbed actual price of item \(j\) is at most
\(w/2\); if \(r>b\), the base price is at least \(w/(2e^\tau)\).

For round \(t\), hold \(P_{-j,t}\) fixed.  Let \(Y_{j,t}(r)\) be the probability, over
item \(j\)'s own perturbation, that item \(j\) sells by the end of role \(t\) when its
base level is \(r\).  Let \(A^i_{j,t}(r)\) be the analogous probability that \(j\) sells
before buyer \(i\) is reached.  Define
\[
  \Lambda_{j,t}:=\sum_{r\le b}\pi_{j,t}(r), \qquad 
  S^{\rm low}_{j,t}:=\sum_{r\le b}\pi_{j,t}(r)Y_{j,t}(r),
  \qquad
  S^{\rm high}_{j,t}:=\sum_{r>b}\pi_{j,t}(r)Y_{j,t}(r),
\]
and
\[
  A^{\rm low}_{j,t}:=\sum_{r\le b}\pi_{j,t}(r)A^i_{j,t}(r).
\]
Then \(A^{\rm low}_{j,t}\le S^{\rm low}_{j,t}\).

We claim that for an absolute constant \(\kappa>0\),
\begin{equation}\label{eq:role-item-cert}
  \E[\Welfare(P_t,V^t)]
  \ge
  \kappa
  \sum_{j\in J_{\rm mid}}w_j\,
  \E\!\left[
    S^{\rm high}_{j,t}
    +
    \1\{\ell_{i(j)}=t\}(\Lambda_{j,t}-A^{\rm low}_{j,t})
  \right],
\end{equation}
where \(J_{\rm mid}\) is the set of midrange benchmark items.  Indeed, high-level sales
of \(j\) give revenue at least \(w_j/(2e^\tau)\).  For utility, if buyer \(i\) is active
and a low-priced benchmark item \(j\in O_i\) survives until \(i\), then \(j\) is available
to \(i\) at price at most \(w_j/2\).  Since \(u'_i=u_i\) for every nonzero benchmark
buyer, the supporting XOS clause lets buyer \(i\) aggregate all such surviving low-priced
benchmark items and obtain utility at least half their total supporting contribution.
Revenue is counted once per sold item, and each utility term is attached to the unique
benchmark owner.

By \Cref{lem:role-exposure}(b),(c), using
\(A^{\rm low}_{j,t}\le S^{\rm low}_{j,t}\le1\), for each \(j,t\),
\[
\begin{aligned}
  \E\!\left[
    S^{\rm high}_{j,t}
    +\1\{\ell_{i(j)}=t\}(\Lambda_{j,t}-A^{\rm low}_{j,t})
  \right] \quad\ge \quad
  \frac1{K+1}
  \E\!\left[
    S^{\rm high}_{j,t}
    +\Lambda_{j,t}
    -S^{\rm low}_{j,t}
    -O(H/K)
  \right].
\end{aligned}
\]
Average over \(t=1,\ldots,H\).  For the cut \(b(j)\), the empirical cut statistic gives
\[
  \overline S^{\rm high}_j+\overline\Lambda_j-\overline S^{\rm low}_j
  \ge
  \frac12-\widehat\Delta_j ,
\]
where bars denote averages over the \(H\) training rounds.  By
\Cref{thm:cut-regret,lem:cut-calibration},
\[
  \E[\widehat\Delta_j]
  \le
  \alpha+O\!\left(\sqrt{\frac{\log(2N)}{H}}\right).
\]
With constant perturbation parameters, \(\alpha\) is a sufficiently small constant.  Taking
\(H=C\log(2N)\) with \(C\) large, and then \(K=C_KH\) with \(C_K\) large, makes the bracket
above bounded below by a positive absolute constant.  Substituting into
\eqref{eq:role-item-cert} and summing over midrange items proves the lemma.
\end{proof}

\subsubsection*{Chronological blocks}

We next prove the random-order analogue.

\begin{lemma}[Block-prefix exposure]\label{lem:block-prefix-exposure}
Condition on the \(\mathrm{STAT}\) block and on the set of real buyers remaining after
\(\mathrm{STAT}\).  Label the dummy buyers arbitrarily for the analysis, and let every
one of the \(K+1\) blocks have size \(s\).

Fix a real post-\(\mathrm{STAT}\) buyer \(i\), a training round \(t\le H\), and let
\(\mathcal H_t\) be the history immediately before block \(t\) is exposed.  In particular,
the stored price vector \(P_t\) is \(\mathcal H_t\)-measurable.  Let \(F_i\) be the event
that \(i\) lies in the final allocation block.

Let \(Z^{\rm alloc}_t\in[0,1]\) be a statistic computed from \(P_t\) and the ordered
prefix preceding \(i\) in the allocation block.  Let \(Z^{\rm tr,del}_t\in[0,1]\) be the
corresponding statistic computed from \(P_t\) and the first
\[
  R\sim\Unif\{0,\ldots,s-1\}
\]
positions of training block \(t\), after deleting \(i\) if she occurs in that prefix.
Here \(R\) is drawn independently of the block permutation and of the mechanism's past
randomness.  Formally, the two statistics use the same measurable function of the price
vector and ordered prefix, together with auxiliary randomness of the same law that is
independent of the block permutation.
The common statistic may also depend on any $\mathcal H_t$-measurable quantities, such as the current learner distributions.

Then
\[
  \E[\1_{F_i}Z^{\rm alloc}_t]
  =
  \frac1{K+1}\E[Z^{\rm tr,del}_t]
  \pm
  O\!\left(\frac1{K+1}\cdot\frac HK\right).
\]
The same estimate holds for the actual training-prefix statistic
\(Z^{\rm tr}_t\), without deleting \(i\).

Moreover, if \(X_t\in[0,1]\) is \(\mathcal H_t\)-measurable, then
\[
  \E[\1_{F_i}X_t]
  =
  \frac1{K+1}\E[X_t]
  \pm
  O\!\left(\frac1{K+1}\cdot\frac HK\right).
\]
\end{lemma}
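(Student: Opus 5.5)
The plan is to reduce everything to exchangeability of the padded post-STAT suffix under a uniformly random permutation, conditional on the STAT block and the set of remaining buyers. The whole argument runs conditionally on \(\mathcal H_t\). Exposing blocks \(1,\ldots,t-1\) reveals the contents and internal orders of those blocks, and nothing else. So, conditional on \(\mathcal H_t\), the remaining \((K+2-t)s\) positions in blocks \(t,\ldots,K+1\) (together with the unplaced buyers, dummy or real) are a uniformly random arrangement of the unrevealed buyers. Let \(A_{i,t}\) be the event that \(i\) is not revealed in blocks \(1,\ldots,t-1\). On \(A_{i,t}^c\) both \(\1_{F_i}\) and the relevant probabilities vanish, and \(\Prb[A_{i,t}^c]\le (t-1)/(K+1)\le H/K\). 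Since \(P_t\) and the common statistic are \(\mathcal H_t\)-measurable, we may treat them as fixed.

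\emph{Step 1 (the \(X_t\) identity).} On \(A_{i,t}\), exchangeability gives \(\Prb[F_i\mid\mathcal H_t]=1/(K+2-t)\). This equals \(\frac1{K+1}(1\pm O(H/K))\), using \(t\le H\le K/2\). Multiplying by \(X_t\in[0,1]\) and taking expectations, the \(A^c\) correction contributes \(\frac1{K+1}\cdot O(H/K)\) after rescaling by \(1/(K+1)\). This is exactly the argument of \Cref{lem:role-exposure}(b),(c), made two-sided.

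\emph{Step 2 (allocation prefix vs.\ training prefix with \(i\) deleted).} Condition on \(\mathcal H_t\) and \(A_{i,t}\). I would describe the law of the pair (block membership of \(i\), ordered prefix before \(i\)) in two experiments.

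\begin{itemize}
\item[(1)] In the allocation-block experiment, condition on \(F_i\). The position of \(i\) in block \(K+1\) is uniform on \(\{0,\ldots,s-1\}\), so the number of predecessors is uniform. The predecessors form a uniformly random ordered sample, without replacement, of that size from the other unrevealed buyers.
\item[(2)] In the training-block experiment, \(R\) is uniform on \(\{0,\ldots,s-1\}\) and independent of the permutation. The first \(R\) positions of block \(t\) are a uniformly random ordered sample of size \(R\) from all unrevealed buyers, possibly including \(i\).
\end{itemize}

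The difference between (1) and (2) is only whether \(i\) may appear among the sampled buyers. Deleting \(i\) produces an ordered sample from the others, but its size becomes \(R-1\) when \(i\) is present, so the two laws are not identical. I would bound the total variation distance between them by \(\Prb[i\in\text{prefix}]\le 1/(K+2-t)=O(1/K)\). This is an absolute error \(O(1/K)\) on \(\E[Z^{\rm tr,del}_t\mid\mathcal H_t]\), and multiplied by \(\Prb[F_i\mid\mathcal H_t]\) it is \(O(1/K^2)\), which is within \(O(\frac1{K+1}\frac HK)\).

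Hence \(\E[\1_{F_i}Z^{\rm alloc}_t\mid\mathcal H_t]=\Prb[F_i\mid\mathcal H_t]\,\E[Z^{\rm tr,del}_t\mid\mathcal H_t]\pm O(1/K^2)\). Combining with Step 1 and taking expectations gives the first claim. The dummy buyers are labelled and permuted along with everyone else, so exchangeability is unaffected; this is where the uniform dummy insertion is used.

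\emph{Step 3 (undeleted training statistic).} \(Z^{\rm tr}_t\) and \(Z^{\rm tr,del}_t\) differ only on the event that \(i\) lies in the prefix of block \(t\). That event has conditional probability at most \(1/(K+2-t)=O(1/K)\), so \(|\E Z^{\rm tr}_t-\E Z^{\rm tr,del}_t|=O(1/K)\). After multiplying by \(\frac1{K+1}\), this is absorbed since \(H\ge1\).

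The main obstacle is Step 2: making the coupling between the predecessor set of \(i\) in the allocation block and the random training prefix precise. This needs care because the allocation block is only reached after block \(t\) in the unexposed future, and the sample sizes are off by one when \(i\) is deleted. I would handle this by first conditioning on the multiset of unrevealed buyers and then writing both laws explicitly as mixtures, over a uniform size in \(\{0,\ldots,s-1\}\), of uniform ordered samples.
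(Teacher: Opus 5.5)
Your proposal is correct and follows essentially the same route as the paper: condition on \(\mathcal H_t\) and on \(i\) being unexposed so that \(\Pr[F_i\mid\mathcal H_t]=1/(K+2-t)\), identify the allocation-block prefix before \(i\) in law with a uniform size in \(\{0,\ldots,s-1\}\) followed by a uniform ordered sample from the other unexposed buyers, and bound its gap to the deleted training prefix by the probability that \(i\) lands in the training prefix. Your total-variation/mixture bound is the paper's coupling (replace \(i\) by a uniformly random outside buyer) in different words; it must be done conditionally on the size, as your last sentence indicates, because conditioning on \(i\) being absent biases \(R\). Your Steps 1 and 3 match the paper's treatment of \(X_t\) and of the undeleted statistic.
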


\begin{proof}
After labeling the dummies, the post-\(\mathrm{STAT}\) sequence is a uniformly random
permutation of all remaining real and dummy buyers.  Let
\[
  E_t:=\{i\text{ does not occur in blocks }1,\ldots,t-1\},
  \qquad
  q_t:=K+2-t.
\]
The event \(E_t\) is \(\mathcal H_t\)-measurable.  Conditional on
\(\mathcal H_t\) and \(E_t\), the still-unexposed sequence is uniformly random, and the
block containing \(i\) is uniform among the \(q_t\) remaining blocks.  Consequently,
\begin{equation}\label{eq:block-probability}
  \Prb[F_i\mid\mathcal H_t,E_t]=\frac1{q_t}.
\end{equation}
Also,
  $\Prb[E_t^c]=\frac{t-1}{K+1}\le\frac{H}{K+1}$,
and, since \(t\le H\) and \(K\ge2H\),
\begin{equation}\label{eq:block-probability-approx}
  \left|\frac1{q_t}-\frac1{K+1}\right|
  =
  \frac{t-1}{q_t(K+1)}
  =
  O\!\left(\frac{H}{K(K+1)}\right).
\end{equation}

We first compare the two prefix distributions.  Conditional on
\(\mathcal H_t,E_t,F_i\), the position of \(i\) in the allocation block is uniform in
\(\{1,\ldots,s\}\).  Thus the number of positions preceding \(i\) is uniform in
\(\{0,\ldots,s-1\}\).  Conditional on this number being \(R\), the ordered prefix before
\(i\) is a uniformly random ordered \(R\)-tuple from the still-unexposed buyers other
than \(i\).

Define an \emph{ideal prefix} by independently drawing
\(R\sim\Unif\{0,\ldots,s-1\}\) and then drawing a uniformly random ordered \(R\)-tuple
from the still-unexposed buyers other than \(i\).  By the preceding paragraph, this is
exactly the conditional distribution of the prefix preceding \(i\) in the allocation
block.

Now consider the first \(R\) positions of training block \(t\).  Conditional on
\(\mathcal H_t,E_t\), buyer \(i\) is uniform among the \(q_ts\) still-unexposed
positions.  Therefore
\[
  \Prb[i\text{ belongs to the selected training prefix}
       \mid\mathcal H_t,E_t]
  =
  \frac{\E[R]}{q_ts}
  =
  \frac{s-1}{2q_ts}
  \le
  \frac1{2q_t}.
\]
For the coupling, first condition on $R=r$. The selected training prefix is a uniformly random ordered $r$-tuple from all $q_ts$ unexposed buyers. Couple it to a uniformly random ordered $r$-tuple excluding $i$ by keeping it unchanged when it does not contain $i$, and otherwise replacing $i$ by a uniformly random buyer outside the tuple. Thus the deleted training prefix and the ideal prefix can differ only when the original training prefix contains $i$. Averaging over the original uniform choice of $R$, their mismatch probability is at most $\E[R]/(q_ts)\le1/(2q_t)$.

Let \(A_t\in[0,1]\) denote the conditional expectation of the statistic evaluated on the
ideal prefix, given \(\mathcal H_t,E_t\).  The allocation-prefix identity above and
\eqref{eq:block-probability} give
\[
  \E[\1_{F_i}Z^{\rm alloc}_t]
  =
  \frac1{q_t}\E[\1_{E_t}A_t].
\]
The prefix coupling gives
\[
  \left|
    \E[\1_{E_t}A_t]-\E[Z^{\rm tr,del}_t]
  \right|
  \le
  \frac1{2q_t}+\Prb[E_t^c].
\]
Indeed, on \(E_t\) the conditional expectations differ by at most \(1/(2q_t)\), while
on \(E_t^c\) the statistic is merely bounded by one.

Writing \(c_0:=1/(K+1)\), we conclude that
\[
  \left|
    \E[\1_{F_i}Z^{\rm alloc}_t]
    -
    c_0\E[Z^{\rm tr,del}_t]
  \right|
  ~\le~
  \left|\frac1{q_t}-c_0\right|
  +
  c_0\left(\frac1{2q_t}+\Prb[E_t^c]\right)
  ~=~
  O\!\left(\frac1{K+1}\cdot\frac HK\right),
\]
where we used \eqref{eq:block-probability-approx}, \(q_t=\Omega(K)\), and
\(\Prb[E_t^c]\le H/(K+1)\).

The actual and deleted training prefixes differ only if the selected prefix contains
\(i\), whose probability conditional on \(E_t\) is at most \(1/(2q_t)\).  Since the
statistic is in \([0,1]\), replacing \(Z^{\rm tr,del}_t\) by \(Z^{\rm tr}_t\) introduces
an additional
 $ O\!\left(\frac1{K+1}\cdot\frac1K\right)$
error, which is absorbed in the displayed bound.

Finally, suppose \(X_t\) is \(\mathcal H_t\)-measurable.  By
\eqref{eq:block-probability}, we have 
  $\E[\1_{F_i}X_t]
  =
  \frac1{q_t}\E[\1_{E_t}X_t]$.
Therefore,
\[
  \left|
    \E[\1_{F_i}X_t]-\frac1{K+1}\E[X_t]
  \right|
  ~\le~
  \left|\frac1{q_t}-\frac1{K+1}\right|
  +
  \frac1{K+1}\Prb[E_t^c]
  ~=~
  O\!\left(\frac1{K+1}\cdot\frac HK\right). \qedhere
\]
\end{proof}

\begin{lemma}[Midrange capture for chronological blocks]
\label{lem:midrange-blocks}
In the chronological-block implementation,
\[
  \E[\ALG_{\rm sparse}]
  \ge
  \frac{c}{K+1}\,\mathrm{Mid}_\Gamma(u')
\]
for an absolute constant \(c>0\), where the expectation is over the random order and the
sparse-branch randomness, conditioned on the STAT block and the scale \(\Gamma\).
\end{lemma}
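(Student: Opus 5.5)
The plan mirrors the proof of \Cref{lem:midrange-roles}, with \Cref{lem:block-prefix-exposure} in place of \Cref{lem:role-exposure}. Since $t^\star\sim\Unif([H])$ and $P_{t^\star}$ is posted to block $K+1$, we have $\E[\ALG_{\rm sparse}]=\frac1H\sum_{t=1}^H\E[\Welfare(P_t,\text{alloc block})]$. Fix $t$ and a midrange item $j$, with owner $i=i(j)$, contribution $w=w_j$, and cut $b=b(j)$ defined as before: $b$ is the largest index with $p_be^\tau\le w/2$, so $1\le b<N$. With $P_{-j,t}$ held fixed, define two sale probabilities as functions of $j$'s base level $r$ (over $j$'s own perturbation):
\begin{itemize}
\item $Y^{\rm alloc}_{j,t}(r)$, the probability that $j$ sells by the end of the allocation block;
\item $A^{\rm alloc}_{j,t}(r)$, the probability that $j$ sells before $i$ is reached in the allocation block.
\end{itemize}
The FGL-type certificate \eqref{eq:role-item-cert} carries over verbatim. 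High-level sales give revenue at least $w/(2e^\tau)$. If a low-level benchmark item survives until its owner, who is active only on $F_i$, the owner gains utility at least $w/2$ through the XOS clause, since $u'_i=u_i$. This yields
\[
\E[\Welfare(P_t,\text{alloc})]\ge\kappa\sum_{j\in J_{\rm mid}}w_j\,\E\!\left[S^{\rm high,alloc}_{j,t}+\1_{F_i}\bigl(\Lambda_{j,t}-A^{\rm low,alloc}_{j,t}\bigr)\right].
\]
Here $\Lambda_{j,t}$ is $\mathcal H_t$-measurable, and the other two quantities are $\pi_{j,t}$-weighted sums over $r>b$ and $r\le b$ respectively.

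Each term is then transferred to training block $t$. For $\1_{F_i}\Lambda_{j,t}$, the last part of \Cref{lem:block-prefix-exposure} gives $\frac1{K+1}\E[\Lambda_{j,t}]\pm O(\frac{H}{K(K+1)})$. For $\1_{F_i}A^{\rm low,alloc}_{j,t}$, note that $A^{\rm low,alloc}$ is a statistic of $P_t$, the $\mathcal H_t$-measurable $\pi_{j,t}$, and the prefix preceding $i$. The main part of the lemma, applied to the actual training prefix $Z^{\rm tr}_t$, therefore gives $\frac1{K+1}\E[S^{\rm low}_{j,t}]\pm O(\cdot)$. Here $S^{\rm low}_{j,t}$ is computed from the random training prefix of length $R$, which is exactly the feedback the learner uses.

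The term $S^{\rm high,alloc}$ is the main obstacle. It is a whole-block statistic rather than a prefix statistic, and it is not multiplied by $\1_{F_i}$. The plan is to lower bound it by $\1_{F_i}$ times the probability that $j$ sells at a high level before $i$ is reached. This loses nothing in the direction we need, because sales after $i$ only add revenue. The exposure lemma then gives $\frac1{K+1}\E[S^{\rm high}_{j,t}]$ computed on the training prefix. The training prefix length is $\Unif\{0,\ldots,s-1\}$, which matches the law of $i$'s position, so the learner's feedback $Y_{j,t}$ is precisely this prefix sale curve. Each transfer costs $O(\frac{H}{K(K+1)})$.

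Collecting terms, and after averaging over $t\in[H]$, the bracket is at least
\[
\frac1{K+1}\Bigl(\overline S^{\rm high}_j+\overline\Lambda_j-\overline S^{\rm low}_j-O(H/K)\Bigr).
\]
The prefix sale curves satisfy the smoothing bound \eqref{eq:smooth} by \Cref{lem:own-price}, and they are exactly the losses fed to learner $j$. Hence the cut statistic at $b(j)$ gives $\ge\frac12-\widehat\Delta_j$. \Cref{thm:cut-regret} together with \Cref{lem:cut-calibration} then bounds $\E[\widehat\Delta_j]\le\alpha+O(\sqrt{\log(2N)/H})$. To finish, choose $\alpha$ small, $H=C\log(2N)$, and $K=C_KH$ with $C,C_K$ large, so the bracket is $\Omega(1/(K+1))$. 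Summing over midrange items gives $\frac{c}{K+1}\mathrm{Mid}_\Gamma(u')$. Dummy buyers are zero valuations and affect nothing.
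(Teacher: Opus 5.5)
Your proposal is correct and follows essentially the same route as the paper. The paper uses the same certificate (high-level revenue plus owner utility from low-priced surviving items, both restricted to the prefix before $i$ on $F_i$), transfers it to training block $t$ via \Cref{lem:block-prefix-exposure} (the lower comparison for the high-sale statistic, the predictable comparison for $\Lambda_{j,t}$, and the upper comparison for the low-sale statistic), and closes with the empirical cut statistic at $b(j)$ together with \Cref{thm:cut-regret} and \Cref{lem:cut-calibration}; your detour through the whole-block high-sale term is only a more explicit version of the paper's certificate, which is written directly in the restricted form.
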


\begin{proof}
Fix the benchmark allocation of \(u'\) and its supporting clauses.  Dummies have zero
valuation and are never benchmark owners.

Consider a midrange item \(j\), write \(i=i(j)\) and \(w=w_j\), and let \(b=b(j)\) be the
largest grid index such that \(p_be^\tau\le w/2\).  In round \(t\), let \(Y_{j,t}(r)\) be
the smoothed sold indicator of item \(j\) on the random training prefix when item \(j\)'s
base level is \(r\), holding \(P_{-j,t}\) fixed.  Define
\[
  \Lambda_{j,t}:=\sum_{r\le b}\pi_{j,t}(r),
  \qquad
  S^{\rm low}_{j,t}:=\sum_{r\le b}\pi_{j,t}(r)Y_{j,t}(r),
  \qquad
  S^{\rm high}_{j,t}:=\sum_{r>b}\pi_{j,t}(r)Y_{j,t}(r).
\]
For the allocation block, define \(S^{A,{\rm low}}_{j,t}\) and
\(S^{A,{\rm high}}_{j,t}\) analogously, using the prefix before buyer \(i\) in the final
allocation block.

If \(i\) lies in the allocation block, high-price sales before \(i\) give revenue
\(\Omega(w)\).  If a low-priced benchmark item has not sold before \(i\), then it is
available to \(i\) at price at most \(w/2\), and the supporting XOS clause gives utility
\(\Omega(w)\) from this item, aggregating over all benchmark items of the same buyer.  As
in the role proof,
\[
  \E[\text{round-\(t\) welfare contribution of item }j]
  \ge
  \kappa w\,
  \E\!\left[
    \1\{i\text{ in allocation block}\}
    (S^{A,{\rm high}}_{j,t}+\Lambda_{j,t}-S^{A,{\rm low}}_{j,t})
  \right]
\]
for an absolute constant \(\kappa>0\).

Apply \Cref{lem:block-prefix-exposure}: use the lower comparison for the high-sale prefix
statistic, the predictable comparison for \(\Lambda_{j,t}\), and the upper comparison for
the low-sale prefix statistic.  Since all statistics lie in \([0,1]\), the expected
contribution of item \(j\) in round \(t\) is at least
\[
  \frac{\kappa w}{K+1}
  \E\!\left[
    S^{\rm high}_{j,t}+\Lambda_{j,t}-S^{\rm low}_{j,t}-O(H/K)
  \right].
\]
Average over the uniformly chosen stored round.  For the cut \(b(j)\), the same algebra as
in the role proof gives
\[
  \overline S^{\rm high}_j+\overline\Lambda_j-\overline S^{\rm low}_j
  \ge
  \frac12-\widehat\Delta_j .
\]
The path cut-regret theorem and \Cref{lem:cut-calibration} make
\(\E[\widehat\Delta_j]\) a sufficiently small constant by the choice
\(H=C\log(2N)\) and small constant perturbation parameters.  Choosing \(C_K\) large makes
\(H/K\) sufficiently small.  Thus every midrange item contributes
\(\Omega(w_j/(K+1))\) in expectation.  Summing over midrange items proves the lemma.
\end{proof}

\begin{proof}[Proof of \Cref{lem:sparse-scale}]
For the zero-padded role implementation, \Cref{lem:midrange-roles} gives
\[
  \E[\ALG_{\rm sparse}]
  \ge
  \frac{c}{K+1}\,\mathrm{Mid}_\Gamma(u').
\]
For the chronological-block implementation, \Cref{lem:midrange-blocks} gives the same
bound.  In both cases, \eqref{eq:mid-large-enough} implies
\[
  \mathrm{Mid}_\Gamma(u')\ge \OPT(u')-O(\rho\Gamma).
\]
Absorbing constants proves \Cref{lem:sparse-scale}.
\end{proof}

{\small
\bibliographystyle{alpha}
\bibliography{bib}
}
\end{document}